\def\eqref#1{equation~\ref{#1}}
\def\1{\bm{1}}
\DeclareMathAlphabet{\mathsfit}{\encodingdefault}{\sfdefault}{m}{sl}
\SetMathAlphabet{\mathsfit}{bold}{\encodingdefault}{\sfdefault}{bx}{n}
\def\sD{{\mathbb{D}}}
\newcommand{\E}{\mathbb{E}}
\definecolor{mydarkblue}{rgb}{0,0.08,0.45} 
\DeclareMathOperator{\Var}{Var}
\newcommand{\tp}{\textsc{tp}}
\newcommand{\p}{\textsc{p}}
\newcommand{\tpr}{\textsc{tpr}}
\newcommand{\fpr}{\textsc{fpr}}
\newcommand{\data}{\mathcal{D}}
\newcommand{\shots}{S}
\newcommand{\classes}{C}
\newcommand{\numbershadowmodels}{M}
\newcommand{\LR}{\mathrm{LR}}
\newcommand{\muin}{\mu_{\mathrm{in}}}
\newcommand{\sigin}{\sigma_{\mathrm{in}}}
\newcommand{\muout}{\mu_{\mathrm{out}}}
\newcommand{\sigout}{\sigma_{\mathrm{out}}}
\newcommand{\tx}{t_{\bm{x}}}
\newcommand{\dtarget}{\mathcal{D}_{\mathrm{target}}}
\newcommand{\lp}{\left(}
\newcommand{\rp}{\right)}
\newcommand{\tz}{t_{\bm{z}}}
\newcommand{\hmu}{\hat{\mu}}
\newcommand{\hsig}{\hat{\sigma}}
\newcommand{\hmuin}{\hat{\mu}_{\mathrm{in}}}
\newcommand{\hmuout}{\hat{\mu}_{\mathrm{out}}}
\newcommand{\hsigin}{\hat{\sigma}_{\mathrm{in}}}
\newcommand{\hsigout}{\hat{\sigma}_{\mathrm{out}}}
\newcommand{\lira}{\mathrm{LiRA}}
\newcommand{\rmia}{\mathrm{RMIA}}
\newcommand{\atpr}{\overline{\tpr}}
\newcommand{\afpr}{\overline{\fpr}}
\newcommand{\He}{\mathrm{He}}
\newcommand{\dr}{\mathrm{d}}
\newcommand{\domain}{\mathscr{D}}
\newcommand{\offlira}{\mathrm{offLiRA}}
  \newlist{inlinelist}{enumerate*}{1}
  \setlist*[inlinelist,1]{%
          label=(\roman*),
      }
\theoremstyle{plain}
\newtheorem{theorem}{Theorem}
\newtheorem{lemma}[theorem]{Lemma}
\newtheorem{corollary}[theorem]{Corollary}
\theoremstyle{definition}
\theoremstyle{remark}
\newtheorem{remark}[theorem]{Remark}
\title{Impact of Dataset Properties on Membership Inference Vulnerability of Deep Transfer Learning}
\author{
Marlon Tobaben$^{1}$\thanks{These authors contributed equally.} \quad Hibiki Ito$^{2*}$\thanks{Work performed in part while at the University of Helsinki.} \quad Joonas Jälkö$^{1*}$ \quad Yuan He$^{1}$ \quad Antti Honkela$^{1}$\\
$^1$Department of Computer Science, University of Helsinki, Finland \\$^2$School of Informatics, Kyoto University, Japan\\
\texttt{\{marlon.tobaben,joonas.jalko,yuan.he,antti.honkela\}@helsinki.fi}\\
\texttt{ito.hibiki.77n@st.kyoto-u.ac.jp} \\
}
\begin{document}
\addtocontents{toc}{\protect\setcounter{tocdepth}{0}}

\maketitle

\begin{abstract}
Membership inference attacks (MIAs) are used to test practical privacy of machine learning models.
MIAs complement formal guarantees from differential privacy (DP) under a more realistic adversary model.
We analyze MIA vulnerability of fine-tuned neural networks both empirically and theoretically, the latter using a simplified model of fine-tuning.
We show that the vulnerability of non-DP models when measured as the attacker advantage at a fixed false positive rate reduces according to a simple power law as the number of examples per class increases.
A similar power-law applies even for the most vulnerable points, but the dataset size needed for adequate protection of the most vulnerable points is very large.
\end{abstract}

\section{Introduction}\label{sec:introduction}
\begin{wrapfigure}{r}{0.5\textwidth}
\vspace{-4mm}
\begin{minipage}{\linewidth}
    \centering
    \includegraphics[width=1.0\linewidth]{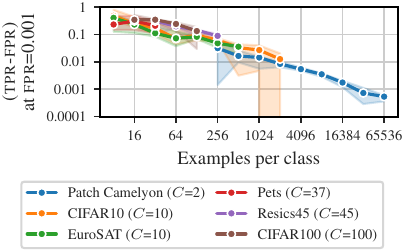}
    \caption{We observe a power-law relation between MIA vulnerability and examples per class (denoted as $S$ or shots) when attacking a fine-tuned ViT-B Head using LiRA. Each colored line denotes a different fine-tuning dataset where $C$ specifies the number of classes. The solid line is median and the error bars the min/max bounds for the Clopper-Pearson CIs over six seeds.}\label{fig:fig1}
\vspace{-6mm}
\end{minipage}
\end{wrapfigure}
Membership inference attacks~\citep[MIAs;][]{Shokri2017MIA,Carlini2022LiRA} and differential privacy~\citep[DP;][]{dwork2006calibrating} provide complementary means of deriving lower and upper bounds for the privacy loss of a machine learning algorithm.
Yet, the two operate under slightly different threat models.
DP implicitly assumes a very powerful adversary with access to all training data except the target point and provides guarantees against every target point.

MIAs assume an often more realistic adversary model with access to just the data distribution and the unknown training data becoming latent variables that introduce stochasticity into the attack. However, the practical evaluation is statistical and cannot provide universal guarantees.

In this paper, we seek to explore MIA vulnerability to extrapolate this gap.
Inspired by an empirical finding that average MIA vulnerability of neural network fine-tuning strongly reduces as the number of samples in the target class increases (see \Cref{fig:fig1}), we develop theory of optimal MIA against a simple model of neural network fine-tuning and reproduce the decrease in vulnerability.
Furthermore, the theoretical model predicts that the vulnerability of all individual samples should reduce as the number of samples increases, which we are able to verify empirically.

To achieve our goal, we theoretically analyze and systematically apply two state-of-the-art black-box MIAs, LiRA \citep{Carlini2022LiRA} and RMIA \citep{Zarifzadeh2024RMIA}, to help understand practical privacy risks when fine-tuning deep-learning-based classifiers without DP protections. For the theoretical model that we analyze, LiRA is the optimal attack by the Neyman--Pearson lemma \citep{NeymanPearson1933}. Under the black-box threat model, in which the adversary does not have access to the model parameters, LiRA and RMIA have been shown to empirically outperform other attacks, especially when the number of shadow models is sufficiently large.

We focus on transfer learning using fine-tuning because this is increasingly used for all practical applications of deep learning and especially important when labeled examples are limited, which is often the case in privacy-sensitive applications.
Our case study focuses on understanding and quantifying factors that influence the vulnerability of non-DP deep transfer learning models to MIA.
In particular, we theoretically study the relationship between the number of examples per class, which we denote as shots ($\shots$), and MIA vulnerability (true positive rate $\tpr$ at fixed false positive rate $\fpr$) for a simplified model of fine-tuning, and derive a power-law relationship (\cref{fig:fig1}) in the form
\begin{align}
    \log(\tpr-\fpr) = -\beta_{\shots}\log(\shots) - \beta_0.
    \label{eq:general_form_power_law}
\end{align}

We complement the theoretical analysis with extensive experiments over many datasets with varying sizes, in the transfer learning setting for image classification tasks, and observe the same power-law.
Based on extrapolation from our results, the number of examples per each class that are needed for adequate protection of the most vulnerable samples appears very high.

\textbf{Related work}
There has been evidence that classification models with more classes are more vulnerable to MIA \citep{Shokri2017MIA}, models trained on fewer samples can be more vulnerable \citep{Chen2020ganleaks,nemeth2025FLMIA}, and classes with less examples tend to be more vulnerable \citep{Chang2021-Subgroup,Kulynych2022_Disparity,Tonni2020-DataMIA}.
A larger generalization error, which is related to dataset size, has also been shown to be sufficient for MIA success \citep{Song2021-GError}, though not necessary \citep{Yeom2018-Overfit}.
Similarly, minority subgroups tend to be more affected by DP \citep{Suriyakumar2021-DPsubgroups,Bagdasaryan2019-DPimpact}.
\citet{Feldman20_NNMemorize} showed that neural networks trained from scratch are required to memorize a significant fraction of their training data to obtain high utility, while the memorization is greatly reduced for fine-tuning.
Additionally, \citet{tobaben2023Efficacy} reported how the MIA vulnerability of few-shot
image classification is affected by the number of shots.
\citet{yu2023individual} studied the relationship between the MIA vulnerability and individual privacy parameters for different classes.
Recently, worst-case MIA vulnerability has gained more attention \citep{Guepin2024-Specific,Meeus2024-Achilles,Azize2025-TargetMIA}.
Nonetheless, the prior works do not consider the rate of change in the vulnerability evaluated at a low $\fpr$, as dataset properties change.
Our work significantly expands on these works by a) explicitly identifying a quantitative relationship between dataset properties and MIA vulnerability, i.e., the power-law in \cref{eq:general_form_power_law}, and b) focusing on the worst-case vulnerability, both evaluated at a low FPR. This in turn allows us to extrapolate MIA vulnerability to DP guarantee.

\textbf{List of contributions}
We analyze the MIA vulnerability of deep transfer learning using two state-of-the-art score-based MIAs, LiRA~\citep{Carlini2022LiRA} and RMIA~\citep{Zarifzadeh2024RMIA}, which are a strong realistic threat model. 
We first analytically derive the power-law relationship in \cref{eq:general_form_power_law} for both MIAs by introducing a simplified model of the optimal membership inference (\cref{sec:explaining_trends}). We support our theoretical findings by an extensive empirical study on the MIA vulnerability of
deep learning models by focusing on a transfer-learning setting for image classification task, where a large pre-trained neural network is fine-tuned on a sensitive dataset.

\begin{enumerate}[leftmargin=*]
            \item \textit{Power-law in a simplified model of the optimal MIA:} We formulate a simplified model of MIA to quantitatively relate dataset properties and MIA vulnerability. In this model LiRA is the optimal attack. For this model, we prove a power-law relationship for both average and worst-case between the LiRA as well as RMIA vulnerability and the number of examples per class (See \cref{sec:simplified-model}).

            \item \textit{MIA experiments on the average case vulnerability:} We conduct a comprehensive study of MIA vulnerability ($\tpr$ at a fixed low $\fpr$) in the transfer learning setting for image classification tasks with target models trained using many different datasets with varying sizes and confirm the theoretical power-law between the number of examples per class and the vulnerability to MIA (see \cref{fig:fig1,sec:experimental_results}). We fit a regression model which follows the functional form of the theoretically derived power-law. We show both a very good fit on the training data as well as a good prediction quality on unseen data from a different feature extractor and when fine-tuning other parameterizations (see \cref{sec:model-to-predict-vulnerability}).
            \item \textit{MIA experiments on the worst-case vulnerability:} We extend the experiments to worst-case individual sample vulnerabilities and observe a similar decrease in vulnerability for quantiles of vulnerable data points and a slower decrease for the maximum individual vulnerability (\cref{sec:individual_mia}).
            By extrapolation we find that an adequate protection of the most vulnerable samples would require an extremely large dataset (\cref{sec:bound_comparision}).
\end{enumerate}

\section{Background}\label{sec:background}
\textbf{Notation} for the properties of the training dataset $\mathcal{\data}$:
\begin{inlinelist}
    \item $\classes$ for the number of classes
    \item $\shots$ for shots (examples per class)
    \item $\mathcal{|\data|}$ for training dataset size ($\mathcal{|\data|}=CS$).
\end{inlinelist}
We denote the number of MIA shadow models with $\numbershadowmodels$.

\textbf{Membership inference attacks (MIAs)} aim to infer whether a particular sample was part of the training set of the targeted model~\citep{Shokri2017MIA}. Thus, they can be used to determine lower bounds on the privacy leakage of models to complement the theoretical upper bounds obtained through DP.

\textbf{Likelihood Ratio attack}~\citep[\textbf{LiRA};][]{Carlini2022LiRA}
While many different MIAs have been proposed~\citep{hu2022membership}, in this work we consider the Likelihood Ratio Attack (LiRA). LiRA is a strong attack that assumes an attacker that has black-box access to the attacked model, knows the training data distribution, the training set size, the model architecture, hyperparameters and training algorithm.
Based on this information, the attacker can train so-called shadow models~\citep{Shokri2017MIA} which imitate the model under attack but for which the attacker knows the training dataset.

LiRA exploits the observation that the loss function value used to train a model is often lower for the examples that were part of the training set compared to those that were not.
For a target tuple $(\bm{x},y)$, where $y$ is a label, LiRA trains the shadow models:
\begin{inlinelist}
\item with $(\bm{x}, y)$ as a part of the training set ($(\bm{x}, y) \in \data$) and
\item without $\bm{x}$ in the training set ($(\bm{x}, y) \notin \data$)
\end{inlinelist}.
After training the shadow models, $(\bm{x}, y)$ is passed through the shadow models, and
based on the losses (or predictions) two Gaussian distributions are formed: 
one for the losses of $(\bm{x}, y) \in \data$ shadow models, and one for the 
$(\bm{x}, y) \notin \data$.
Finally, the attacker computes the loss for the point $\bm{x}$ using the model under
attack and determines using a likelihood ratio test on the distributions built
from the shadow models whether it is more likely that $(\bm{x}, y) \in \data$ or $(\bm{x}, y) \notin
\data$. 
When the true distributions of the shadow models are Gaussians, LiRA is the optimal attack
provided by the Neyman--Pearson lemma \citep{NeymanPearson1933}. 
We use an optimization by \citet{Carlini2022LiRA} for performing LiRA for
multiple models and points without training a computationally infeasible number
of shadow models. It relies on sampling the shadow datasets in a way that each sample is in expectation half of the time included in the training dataset of a shadow model and half of the time not. At attack time each model will be attacked once using all other models as shadow models.

\textbf{Robust Membership Inference Attack}~\citep[\textbf{RMIA};][]{Zarifzadeh2024RMIA} RMIA is a new MIA algorithm, which aims to improve performance when the number of shadow models is limited. We show both theoretically and empirically that the power-law also holds for RMIA.

\textbf{Measuring MIA vulnerability} 
Using the chosen MIA score of our attack,
we can build a binary classifier to predict whether a sample belongs to the 
training data or not. The accuracy profile of such classifier 
can be used to measure the success of the MIA. More specifically, throughout
the rest of the paper, we will use the true positive rate ($\tpr$) at a 
specific false positive rate ($\fpr$) as a measure for the vulnerability.
Identifying even a small number of examples with high confidence is considered 
harmful~\citep{Carlini2022LiRA} and thus we focus on the regions of small $\fpr$. 

\textbf{Measuring the uncertainty for $\tpr$}
The $\tpr$ values from the LiRA-based classifier can be seen as maximum
likelihood-estimators for the probability of producing true positives among the
positive samples. Since we have a finite number of samples for our estimation,
it is important to estimate the uncertainty in these estimators. Therefore, when
we report the $\tpr$ values for a single repeat of the learning algorithm, we
estimate the stochasticity of the $\tpr$ estimate by using Clopper-Pearson
intervals \citep{clopperUseConfidenceFiducial1934}. Given $\tp$ true positives
among $\p$ positives, the $1-\alpha$ confidence Clopper-Pearson interval for the
$\tpr$ is given as 
\begin{equation}
\begin{split}
     & B(\alpha/2; \tp, \p-\tp+1) < \tpr < B(1-\alpha/2; \tp+1, \p-\tp),
\end{split}
\end{equation}
where $B(q; a, b)$ is the $q$th-quantile of $\text{Beta}(a,b )$ distribution.

\section{Theoretical analysis}\label{sec:explaining_trends}
In this section, we seek to theoretically understand the impact of the dataset properties on the MIA vulnerability.
It is known that different data points exhibit different levels of MIA vulnerability depending on the underlying distribution \citep[e.g.\ ][]{Aerni2024-br, Leemann2024-mi}.
Therefore, we start with analysing \emph{per-example} MIA vulnerabilities.
In order to quantitatively relate dataset properties to these vulnerabilities, a simplified model is formulated.
Within this model, we prove a power-law between the per-example vulnerability and the number $S$ of examples per class.
Finally, the per-example power-law is analytically extended to \emph{average-case} MIA vulnerability, for which we provide empirical evidence in \cref{sec:predicting_dataset}.
We primarily focus on the analysis of (online) LiRA, since it is the optimal attack in our simplified model. We show that similar theoretical results also hold for RMIA in \cref{sec:rmia-theory} and offline LiRA \cref{sec:offline-lira}.

\subsection{Preliminaries}\label{sec:notations}
First, let us restate the MIA score from (online) LiRA as defined by \cite{Carlini2022LiRA}. Denoting the logit of a target model $\mathcal{M}$ applied on a target data point $(\bm{x},y)$ as $\ell(\mathcal{M}(\bm{x}), y)$, LiRA computes the MIA score as the likelihood ratio
\begin{align}
    \LR(\bm{x}) = \frac{p(\ell(\mathcal{M}(\bm{x}), y) \mid \mathbb{Q}_{\text{in}}(\bm{x}, y))}{p(\ell(\mathcal{M}(\bm{x}), y) \mid \mathbb{Q}_{\text{out}}(\bm{x}, y))},
\end{align}
where the $\mathbb{Q}_{\text{in}/\text{out}}$ denote the hypotheses that $(\bm{x}, y)$ was or was not in the training set of $\mathcal{M}$. \cite{Carlini2022LiRA} approximate the IN/OUT hypotheses as normal distributions. Denoting $\tx = \ell(\mathcal{M}(\bm{x}), y)$, the score becomes
\begin{align}
    \LR(\bm{x})= \frac{\mathcal{N}(\tx; \hmuin(\bm{x}), \hsigin(\bm{x})^2 )}{\mathcal{N}(\tx; \hmuout(\bm{x}), \hsigout(\bm{x})^2)},
    \label{eq:lira_score}
\end{align}
where the $\hat{\mu}_{\text{in/out}}(\bm{x})$ and $\hat{\sigma}_{\text{in/out}}(\bm{x})$ are the means and standard deviations for the IN/OUT shadow model losses for $(\bm{x}, y)$. Larger values of $\LR(\bm{x})$ suggest that $(\bm{x}, y)$ is more likely in the training set and vice versa. Now, to build a classifier from this score, the LiRA tests if $\LR(\bm{x}) > \tau$ for some threshold $\tau$. Note that the attacker only has a finite set of shadow models to estimate the IN/OUT parameters. Therefore, the MIA scores become random variables over the true population level IN/OUT distributions.

\subsection{Computing the $\tpr$ for LiRA}
Using the LiRA formulation of \Cref{eq:lira_score}, the $\tpr$ for the target point $(\bm{x}, y)$ for LiRA is defined as
\begin{equation}
    \tpr_\lira(\bm{x}) = \Pr_{\dtarget\sim\sD^{|\data|}, \phi^M} \lp \LR(\bm{x}) \geq \tau \mid (\bm{x}, y) \in \dtarget\rp,
\end{equation}
where $\tau$ is a threshold that defines a rejection region of the likelihood ratio test, and $\phi^M$ denotes the randomness in shadow set sampling and shadow model training (see \cref{sec:formulate-lira-rmia} for derivation).

We define the average-case $\tpr$ for LiRA by taking the expectation over the data distribution:
\begin{equation}
    \atpr_\lira = \E_{(\bm{x}, y) \sim \sD}[\tpr_\lira(\bm{x})]
\end{equation}

\subsection{Per-example MIA vulnerability}\label{sec:per-example-MIA}
Although LiRA models $\tx$ by a normal distribution, we consider a more general case where the true distribution of $\tx$ is of the location-scale family.
That is,
\begin{equation}
    \tx = \begin{cases}
        \muin(\bm{x}) + \sigin(\bm{x}) Z &\text{ if } (\bm{x}, y) \in \dtarget \\
        \muout(\bm{x}) + \sigout(\bm{x}) Z &\text{ if } (\bm{x}, y) \notin \dtarget,
    \end{cases}
\end{equation}
where $Z$ has the standard location and unit scale, and $\muin(\bm{x}), \muout(\bm{x})$ and $\sigin(\bm{x}),\sigout(\bm{x})$ are the locations and scales of IN/OUT distributions of $\tx$.
We assume that the target and shadow datasets have a sufficient number of examples.
This allows us to also assume that $\hsig(\bm{x})=\hsigin(\bm{x})=\hsigout(\bm{x})$ and $\sigma(\bm{x})=\sigin(\bm{x})=\sigout(\bm{x})$, where $\hsig(\bm{x})$ is the standard deviation of $\tx$ estimated from shadow models and $\sigma(\bm{x})$ is the true scale parameter of $\tx$. (See \cref{sec:shared-scale} for the validity of these assumptions).
The following result reduces the LiRA vulnerability to the location and scale parameters of $\tx$.
\begin{restatable}[Per-example LiRA vulnerability]{lemma}{lemmaPerExampleLiRA}\label{lemma:per-example-LiRA}
    Suppose that the true distribution of $\tx$ is of location-scale family with locations $\muin(\bm{x}), \muout(\bm{x})$ and scale $\sigma(\bm{x})$, and that LiRA models $\tx$ by $\mathcal{N}(\hmuin(\bm{x}), \hsig(\bm{x})^2)$ and $\mathcal{N}(\hmuout(\bm{x}),\hsig(\bm{x})^2)$.
    Assume that an attacker has access to the underlying distribution $\sD$.
    Then for a large enough number of examples per class and infinitely many shadow models, the LiRA vulnerability of a fixed target example is
    \begin{equation}
        \tpr_\lira(\bm{x}) = \begin{cases}
        1 - F_Z\lp F_Z^{-1}(1 - \fpr_\lira(\bm{x})) - \frac{\muin(\bm{x}) - \muout(\bm{x})}{\sigma(\bm{x})}\rp &\text{ if }\hmuin(\bm{x}) > \hmuout(\bm{x}) \\
        F_Z\lp F_Z^{-1}(\fpr_\lira(\bm{x})) - \frac{\muin(\bm{x}) - \muout(\bm{x})}{\sigma(\bm{x})}\rp &\text{ if }\hmuin(\bm{x}) < \hmuout(\bm{x}),
        \end{cases} \label{eq:lira-px-tpr}
    \end{equation}
    where $F_Z$ is the cdf of $Z$ with the standard location and unit scale, assuming that the inverse of $F_Z$ exists.
\end{restatable}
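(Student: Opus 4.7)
The plan is to reduce the LiRA rejection region $\{\LR(x) \ge \beta\}$ to a one-sided threshold event on the scalar $\tx$, then express $\tpr$ and $\fpr$ in closed form via the cdf $F_t$ of the standard location-scale variable, and finally eliminate the threshold by inverting the $\fpr$ equation.

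First, I would observe that under the working assumption $\hsigin(x) = \hsigout(x) = \hsig(x)$, the log of the Gaussian likelihood ratio in~\cref{eq:lira_score} is \emph{affine} in $\tx$:
\begin{equation*}
    \log \LR(x) \;=\; \frac{\hmuin(x) - \hmuout(x)}{\hsig(x)^2}\, \tx \;+\; c\!\lp \hmuin(x),\hmuout(x),\hsig(x)\rp,
\end{equation*}
so $\LR(x) \ge \beta$ is equivalent to $\tx \ge \tau$ when $\hmuin(x) > \hmuout(x)$, and to $\tx \le \tau$ when $\hmuin(x) < \hmuout(x)$, for some $\tau = \tau\!\lp\beta,\hmuin(x),\hmuout(x),\hsig(x)\rp$. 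Next, plugging the assumed location-scale representation of $\tx$ into the IN/OUT conditional probabilities yields, in the first case,
\begin{equation*}
    \fpr_\lira(x) = 1 - F_t\!\lp \tfrac{\tau - \muout(x)}{\sigma(x)}\rp, \qquad \tpr_\lira(x) = 1 - F_t\!\lp \tfrac{\tau - \muin(x)}{\sigma(x)}\rp.
\end{equation*}
Solving the first equation for $\tau$ gives $\tau = \muout(x) + \sigma(x)\, F_t^{-1}\!\lp 1 - \fpr_\lira(x)\rp$, and substituting into the second eliminates $\tau$ and $\muout(x)$ simultaneously, producing exactly the claimed expression. The case $\hmuin(x) < \hmuout(x)$ is symmetric: the inequality on $\tx$ flips, so $1 - F_t$ becomes $F_t$ and $F_t^{-1}(1-\fpr_\lira(x))$ becomes $F_t^{-1}(\fpr_\lira(x))$.

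The main technical point, rather than a serious obstacle, is the equal-variance assumption $\hsigin(x) = \hsigout(x)$: it is precisely what linearises $\log \LR(x)$ in $\tx$ and collapses the LiRA rejection region to a half-line. Without it, the log-LR would be quadratic in $\tx$, the rejection region would generically be a union of two half-lines, and the clean two-case formula of the lemma would fail; the ``large enough number of examples per class'' clause together with the infinite-shadow-model limit (see~\cref{sec:lira-assumptions}) is what justifies collapsing $\hsigin$ and $\hsigout$ to a common $\hsig$. The infinite-shadow-model limit additionally lets us treat $\hmuin(x)$ and $\hmuout(x)$ as deterministic, so the case split in the statement is well-defined.
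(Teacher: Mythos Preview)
Your proposal is correct and follows essentially the same route as the paper's proof: expand $\log\LR(x)$ under the equal-variance assumption to obtain an affine function of $\tx$, reduce the rejection region to a one-sided threshold on $\tx$ whose direction depends on the sign of $\hmuin(x)-\hmuout(x)$, compute $\fpr_\lira(x)$ and $\tpr_\lira(x)$ via the location-scale representation, and eliminate the threshold by inverting the $\fpr$ equation. Your added remark that the equal-variance hypothesis is exactly what collapses the rejection region to a half-line (rather than a quadratic region) is a useful piece of commentary not spelled out in the paper.
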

\begin{proof}
    See \cref{sec:proof-per-example-LiRA}.
\end{proof}
Here we assume that an attacker trains shadow models with the true underlying distribution.
However, in real-world settings the precise underlying distribution may not be available for an attacker.
We relax this assumption in \cref{sec:relax-assumption} so that the attacker only needs an approximated underlying distribution for the optimal LiRA as in \cref{lemma:per-example-LiRA}.

\subsection{A simplified model of the optimal membership inference}\label{sec:simplified-model}
Now we construct a simplified model of membership inference that streamlines the data generation and shadow model training.

We sample vectors on a high-dimensional unit sphere and classify them based on inner product with estimated class mean. This model is easier to analyze theoretically than real-world deep learning examples.
We generate the data and form the classifiers (which are our target models) as follows:
\begin{enumerate}[leftmargin=*]
    \item For each class, we first sample a true class mean $\bm{m}_{c}$ on a high dimensional unit sphere that is orthogonal to all other true class means ($\forall i,j \in \{1,\dots,C\}, i \neq j: \bm{m}_i  \perp  \bm{m}_{j}$).
    \item We sample $2 \shots$ vectors $\bm{x}_c$ for each class. We assume that they are Gaussian distributed around the the true class mean $\bm{x}_c \sim \mathcal{N}(\bm{m}_{c}, \Sigma)$ where the $\Sigma$ is the in-class covariance.
    \item For each ``target model'' we randomly choose a subset of size $\classes \shots$ from all generated vectors and compute per-class means $\hat{\bm{m}}_c$.
    \item The computed mean is used to classify sample $\bm{x}$ by computing
    the inner product $\langle \bm{x}, \hat{\bm{m}}_c \rangle$ as a metric of similarity.
\end{enumerate}

The attacker has to infer which vectors have been used for training the classifier. Instead of utilising the logits (like in many image classification tasks), the attacker can use the inner products of a point with the cluster means. Since the inner product score follows a normal distribution, LiRA with infinitely many shadow models is the optimal attack by the Neyman--Pearson lemma \citep{NeymanPearson1933}, which states that the likelihood ratio test is the most powerful test for a given FPR.

This simplified model resembles a linear (Head) classifier often used in transfer learning when adapting to a new dataset. We also focus on the linear (Head) classifier in our empirical evaluation in \cref{sec:predicting_dataset}. In the 
linear classifier, we find a matrix $W$ and biases $b$, to optimize
the cross-entropy between the labels and logits $Wv + b$, where $v$
denotes the feature space representation of the data. In the
simplified model, the rows of $W$ are replaced by the cluster means
and we do not include the bias term in the classification.

Now, applying Lemma \ref{lemma:per-example-LiRA} to the simplified model yields the following result.
\begin{restatable}[Per-example LiRA power-law]{theorem}{theoremLiRAPowerLaw}\label{theorem:LiRA-power-law}
    Fix a target example $(\bm{x},y)$. For the simplified model with arbitrary $C$ and infinitely many shadow models, the per-example LiRA vulnerability is given as
    \begin{multline}
        \log(\tpr_\lira(\bm{x}) - \fpr_\lira(\bm{x})) \\= -\frac{1}{2}\log S - \frac{1}{2}\Phi^{-1}(\fpr_\lira(\bm{x}))^2 + \log\frac{|\langle \bm{x}, \bm{x} - \bm{m}_{\bm{x}} \rangle|}{\sqrt{\bm{x}^T\Sigma \bm{x}}\sqrt{2\pi}} + \log(1 + \xi(S)),
    \end{multline}
    where $\bm{m}_{\bm{x}}$ is the true mean of class $y$ and $\xi(S)=O(1/\sqrt{S})$.
    For large $S$ we have
    \begin{equation}
        \log(\tpr_\lira(\bm{x}) - \fpr_\lira(\bm{x})) \approx -\frac{1}{2}\log S - \frac{1}{2}\Phi^{-1}(\fpr_\lira(\bm{x}))^2 + \log\frac{|\langle \bm{x}, \bm{x} - \bm{m}_{\bm{x}} \rangle|}{\sqrt{\bm{x}^T\Sigma \bm{x}}\sqrt{2\pi}}.
    \end{equation}
\end{restatable}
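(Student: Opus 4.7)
The plan is to apply \cref{lemma:per-example-LiRA} with the explicit distribution of the score $\tx = \langle x, r_{y_x}\rangle$ computed for the simplified model, and then Taylor-expand around $\fpr$ to obtain the power-law.

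First, I would identify the location-scale family of $\tx$. Since $r_{y_x}$ is a linear combination of Gaussian vectors and $\tx$ is its inner product with a fixed $x$, the conditional distribution of $\tx$ given $x$ is exactly Gaussian in both the IN and OUT case, so $t$ in the location-scale decomposition is $\mathcal{N}(0,1)$ and $F_t = \Phi$. A direct computation then gives, conditioning on $x$ fixed,
\begin{align}
    \muout(x) &= \langle x, m_x \rangle, &\sigout(x)^2 &= \frac{s^2 \|x\|^2}{S}, \\
    \muin(x) &= \tfrac{1}{S}\|x\|^2 + \tfrac{S-1}{S}\langle x, m_x\rangle, &\sigin(x)^2 &= \frac{(S-1) s^2 \|x\|^2}{S^2},
\end{align}
because for IN the fixed $x$ contributes deterministically to $r_{y_x}$ while the remaining $S-1$ summands are i.i.d.\ $\mathcal{N}(m_x, s^2 I)$ independent of $x$. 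Subtracting yields $\muin(x) - \muout(x) = \tfrac{1}{S}\langle x, x - m_x\rangle$.

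Next I would verify that the assumptions of \cref{lemma:per-example-LiRA} are in force. With infinitely many shadow models the estimated parameters $\hmuin, \hmuout, \hsig$ converge to their population values, and for large $S$ the IN and OUT variances differ only by a factor of $\sqrt{(S-1)/S} \to 1$, so the common scale assumption $\sigma(x) = \sigin(x) = \sigout(x) = s\|x\|/\sqrt{S}$ holds in the limit. Substituting into \cref{eq:lira-px-tpr} and using the symmetry identity $1 - \Phi(y) = \Phi(-y)$ merges the two sign cases into the single expression
\begin{equation}
    \tpr_\lira(x) = \Phi\!\lp \Phi^{-1}(\fpr_\lira(x)) + \frac{|\muin(x) - \muout(x)|}{\sigma(x)} \rp = \Phi\!\lp \Phi^{-1}(\fpr_\lira(x)) + \frac{|\langle x, x - m_x\rangle|}{\sqrt{S}\, s\, \|x\|} \rp,
\end{equation}
which is the first claim.

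For the asymptotic formula I would Taylor-expand $\Phi$ around $\Phi^{-1}(\fpr_\lira(x))$: writing $\epsilon = |\langle x, x - m_x\rangle|/(\sqrt{S}\, s\, \|x\|)$, which tends to $0$ as $S\to\infty$, we have $\tpr_\lira(x) - \fpr_\lira(x) = \Phi(\Phi^{-1}(\fpr_\lira(x)) + \epsilon) - \fpr_\lira(x) = \phi(\Phi^{-1}(\fpr_\lira(x))) \epsilon + O(\epsilon^2)$, where $\phi$ is the standard normal density. Taking logarithms and substituting $\log\phi(\Phi^{-1}(\fpr_\lira(x))) = -\tfrac12 \Phi^{-1}(\fpr_\lira(x))^2 - \tfrac12 \log(2\pi)$ and $\log\epsilon = -\tfrac12 \log S + \log(|\langle x, x-m_x\rangle|/(s\|x\|))$ assembles the stated asymptotic expression.

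The main obstacle I anticipate is not in the algebra but in carefully justifying the regime in which \cref{lemma:per-example-LiRA} applies: one needs to confirm that for the simplified model, the ``large number of examples per class'' assumption suffices to collapse $\sigin$ and $\sigout$ to a common $\sigma$ up to lower-order corrections that do not affect the leading $-\tfrac12 \log S$ term, and that the ``infinitely many shadow models'' assumption makes the estimated statistics match the true location and scale exactly so that the sign condition on $\hmuin$ vs.\ $\hmuout$ is governed by the sign of $\langle x, x - m_x\rangle$, allowing the two cases of the lemma to be combined cleanly via the absolute value.
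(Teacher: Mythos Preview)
Your proposal is correct and follows essentially the same route as the paper: compute the Gaussian IN/OUT locations and scales of $\tx=\langle x,r_{y_x}\rangle$ in the simplified model, plug the resulting $\muin-\muout=\tfrac{1}{S}\langle x,x-m_x\rangle$ and $\sigma = s\|x\|/\sqrt{S}$ into \cref{lemma:per-example-LiRA}, merge the two sign cases via $1-\Phi(y)=\Phi(-y)$, and then linearise $\Phi$ for the large-$S$ asymptotic. Your explicit handling of $\sigin^2=(S-1)s^2\|x\|^2/S^2$ versus $\sigout^2=s^2\|x\|^2/S$ and the justification that they coincide to leading order is slightly more careful than the paper's treatment, but the argument is otherwise identical.
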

\begin{proof}
    See \cref{sec:proof-lira-power-law}.
\end{proof}
An immediate upper bound is obtained from Theorem \ref{theorem:LiRA-power-law} by the Cauchy-Schwarz inequality:
\begin{multline}
    \log(\tpr_\lira(\bm{x}) - \fpr_\lira(\bm{x})) \leq -\frac{1}{2}\log S - \frac{1}{2}\Phi^{-1}(\fpr_\lira(\bm{x}))^2 + \log\frac{||\bm{x} - \bm{m}_{\bm{x}}||}{\sqrt{\bm{x}^T\Sigma \bm{x}}\sqrt{2\pi}}\\ + \log(1 + \xi(S)).
\end{multline}
This implies that if $||\bm{x} - \bm{m}_{\bm{x}}||$ is bounded, then the worst-case vulnerability is also bounded.
Hence we can significantly reduce the MIA vulnerability of all examples in this non-DP setting by simply increasing the number of examples per class.

\begin{remark}\label{remark:from-scratch}
    By \cref{lemma:per-example-LiRA} and the proof of \cref{theorem:LiRA-power-law}, a necessary condition for the power-law is that $(\muin(\bf{x}) - \muout(\bf{x}))/\sigma(\bf{x})$ converges to zero at rate $O(1/S^\alpha)$ with $\alpha>0$.
    In our simplified model, this holds with $\alpha=1/2$.
    However, $\muin(\bf{x}) - \muout(\bf{x}) \rightarrow 0$ might not always be the case for larger neural networks trained from scratch. Furthermore, even if $(\muin(\bf{x}) - \muout(\bf{x}))/\sigma(\bf{x})$ converges to zero, it is not clear what the convergence rate would be.
\end{remark}

Now the following corollary extends the power-law to the average-case MIA vulnerability. 
We will also empirically validate this result in \cref{sec:predicting_dataset}.
\begin{restatable}[Average-case LiRA power-law]{corollary}{corAverageLiRA}\label{cor:average-LiRA}
    For the simplified model with arbitrary $C$, sufficiently large $S$ and infinitely many shadow models, we have
    \begin{equation}
        \log(\atpr_\lira - \afpr_\lira) \approx -\frac{1}{2}\log S - \frac{1}{2}\Phi^{-1}(\afpr_\lira)^2 + \log\lp \E_{(\bm{x}, y) \sim \sD}\left[\frac{|\langle \bm{x}, \bm{x} - \bm{m}_{\bm{x}} \rangle|}{\sqrt{\bm{x}^T\Sigma \bm{x}}\sqrt{2\pi}}\right] \rp.\label{eq:power-law}
    \end{equation}
\end{restatable}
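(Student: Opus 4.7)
The plan is to obtain the average-case power law by taking the expectation over $(x, y_x) \sim \sD$ of the per-example formula supplied by \cref{theorem:LiRA-power-law}. First I would exponentiate that per-example statement to rewrite it in the product form
\begin{equation}
\tpr_\lira(x) - \fpr_\lira(x) \approx \frac{1}{\sqrt{S}}\, \exp\!\lp -\tfrac{1}{2}\Phi^{-1}(\fpr_\lira(x))^2 \rp \cdot \frac{|\langle x, x - m_x \rangle|}{\sqrt{2\pi}\,||x||\, s},
\end{equation}
so that the dependence on $x$ is isolated in the last quotient (and, through $\fpr_\lira(x)$, in the Gaussian factor).

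The second step is to invoke the standard LiRA threshold convention, in which every target example is assigned its own threshold $\beta(x)$ calibrated so that the per-example false positive rate equals a common value; equivalently, the attack is parametrised by a fixed per-example $\fpr_\lira(x)$. Under this convention $\fpr_\lira(x) = \afpr_\lira$ for every $x$, so both the $1/\sqrt{S}$ prefactor and the Gaussian factor $\exp(-\tfrac{1}{2}\Phi^{-1}(\fpr_\lira(x))^2)$ become constants that pull out of $\E_{(x,y_x)\sim\sD}[\cdot]$. Because the simplified model draws $x \sim \mathcal{N}(m_x, s^2 I)$ on a high-dimensional sphere, $\langle x, x-m_x \rangle = ||x||^2 - \langle x, m_x \rangle$ concentrates around $s^2 d$ and is positive with probability close to one, so dropping the absolute value preserves the approximation. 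Using $\atpr_\lira = \E[\tpr_\lira(x)]$ and linearity of expectation then gives
\begin{equation}
\atpr_\lira - \afpr_\lira \approx \frac{1}{\sqrt{S}}\, \exp\!\lp -\tfrac{1}{2}\Phi^{-1}(\afpr_\lira)^2 \rp \cdot \E_{(x, y_x) \sim \sD}\!\left[ \frac{\langle x, x - m_x \rangle}{\sqrt{2\pi}\,||x||\, s} \right],
\end{equation}
and taking $\log$ of both sides recovers the claimed identity.

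The main obstacle I expect is not any algebraic manipulation but a uniformity issue: passing from an approximation that is valid for each fixed $x$ to one that survives averaging requires the leading-order expansion of \cref{theorem:LiRA-power-law} in $S$ to hold uniformly in $x$ on a set of probability close to one. I would discharge this by combining the high-dimensional concentration of $||x||$ and $\langle x, m_x \rangle$ (already exploited in the proof of \cref{theorem:LiRA-power-law}) with a dominated convergence argument, bounding the rare tail where the expansion degrades so that its contribution is a subleading $o(1/\sqrt{S})$ correction absorbed into the $\approx$ symbol.
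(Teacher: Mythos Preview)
Your proposal is correct and follows essentially the same route as the paper: exponentiate the per-example approximation from \cref{theorem:LiRA-power-law}, fix $\fpr_\lira(x)=\afpr_\lira$ for all $x$ so the Gaussian factor and $1/\sqrt{S}$ pull out of the expectation, average over $(x,y_x)\sim\sD$, and take logarithms. Your treatment is in fact more careful than the paper's, which silently drops the absolute value and does not address the uniformity-in-$x$ issue you flag; the paper simply writes the integral, substitutes the per-example approximation, and notes the choice $\fpr_\lira(x)=\afpr_\lira$.
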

\begin{proof}
    See \cref{sec:proof-average-lira}.
\end{proof}

\section{Empirical evaluation of MIA vulnerability and dataset properties}\label{sec:predicting_dataset}
In this section, we investigate how different properties of datasets (shots $S$ and number of classes $C$) affect the MIA vulnerability. Based on our observations, we propose a method to predict the vulnerability to MIA using these properties.

\subsection{Experimental setup}\label{sec:dataset_experimental_setup}

We focus on a image classification setting where we fine-tune pre-trained models on sensitive downstream datasets and assess the MIA vulnerability using LiRA and RMIA with $M = 256$ shadow/reference models. We base our experiments on a subset of the few-shot benchmark VTAB~\citep{zhai2019large} that achieves a test classification accuracy $> 80\%$ (see \cref{tab:used-datasets}).

We report results for fine-tuning a last layer classifier (Head) trained on top of a Vision Transformer ViT-Base-16~\citep[ViT-B;][]{dosovitskiy2020image}, pre-trained on ImageNet-21k~\citep{ILSVRC15}. The results for using ResNet-50~\citep[R-50;][]{kolesnikov2019big} as a backbone can be found in \cref{sec:additional_sec5}. We optimize the hyperparameters (batch size, learning rate and number of epochs) using the Optuna library~\citep{optuna_2019} with the Tree-structured Parzen Estimator~\citep[TPE;][]{bergstra_tpe_2011} sampler with 20 iterations (more details in \cref{sec:hyperparameter_tuning}). We provide the the code for reproducing the experiments in an open repository\footnote{
\url{https://github.com/DPBayes/impact-dataset-properties-MI-vulnerability-deep-TL}}.

\subsection{Experimental results}\label{sec:experimental_results}

\begin{wrapfigure}{r}{0.5\textwidth}
\vspace{-4mm}
\begin{minipage}{\linewidth}
    \includegraphics{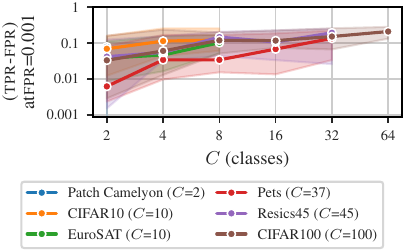}
    \caption{Small effect of number of classes $C$ (classes) on MIA vulnerability when attacking a  fine-tuned ViT-B Head. The solid line is median and the error bars the min/max bounds for the Clopper-Pearson CIs over 12 seeds ($S=32$).}\label{fig:mia_function_classes}
\end{minipage}
\end{wrapfigure}

Using the setting described above, we study how the number of classes and the number of shots affect the vulnerability ($\tpr$ at $\fpr$ as described in \cref{sec:background}) using LiRA. We make the following observations:

\begin{itemize}[leftmargin=*,noitemsep,topsep=0pt]
\setlength\itemsep{0pt} 
    \item A larger number of {\boldmath $\shots$} \textbf{(shots)} decrease the vulnerability in a power law relation as demonstrated in \cref{fig:fig1}. We provide tabular data and experiments using ResNet-50 in the Appendix (\cref{fig:mia_function_shots_additional,tab:vit-function-shots,tab:r-50-function-shots}).

    \item Contrary, a larger number of {\boldmath $\classes$} \textbf{(classes)} increases the vulnerability as demonstrated in \cref{fig:mia_function_classes} with tabular data and experiments using ResNet-50  in the Appendix (\cref{fig:mia_function_classes_additional,tab:vit-b-function-classes,tab:r-50-function-classes}). However, the trend w.r.t.\ $\classes$ is not as clear as with $\shots$.
\end{itemize}

\textbf{RMIA}
In \cref{fig:rmia-lira-comparison} we compare the vulnerability of the models to LiRA and RMIA as a function of the number of $\shots$ (shots) at $\fpr=0.1$. We observe the power-law for both attacks, but the RMIA is more unstable than LiRA (especially for lower $\fpr$). More results for RMIA are in \cref{fig:rmia_function_shots,fig:rmia_comparison_to_lira,fig:rmia-lira-comparison_additional} in the Appendix.

\begin{figure}[h]
    \centering
    \includegraphics[width=\linewidth]{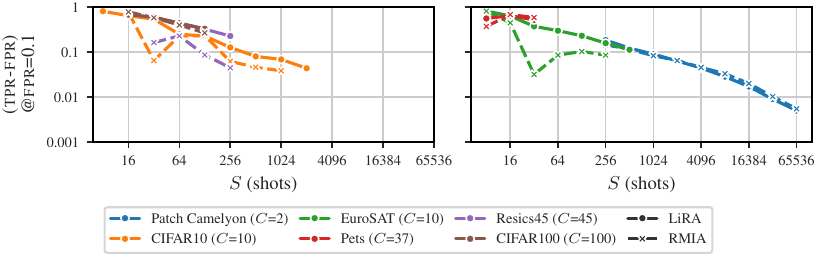}
    \caption{LiRA and RMIA vulnerability ($(\tpr-\fpr)$ at $\fpr=0.1$) as a function of shots ($\shots$) when attacking a ViT-B Head fine-tuned on different datasets. For better visibility, we split the datasets into two panels. We observe the power-law for both attacks, but the RMIA is more unstable than LiRA. The lines display the median over six seeds.}
    \label{fig:rmia-lira-comparison}
    \vspace{-4mm}
\end{figure}

\subsection{Model to predict dataset vulnerability}
\label{sec:model-to-predict-vulnerability}

The trends seen in \cref{fig:fig1} suggest the same power law relationship that we derived for the simplified model of membership inference in \cref{sec:explaining_trends}. We fit a linear regression model to predict $\log(\tpr - \fpr)$ for each $\fpr = 10^{-k}, k=1,\dots,5$ separately using the $\log \classes$ and $\log \shots$ as covariates with statsmodels~\citep{seabold2010statsmodels}. The general form of the model can be found in \cref{eq:mia_vul_dataset}, where $\beta_{\shots}, \beta_{\classes}$ and $\beta_{0}$ are the learnable regression parameters.

\begin{equation}\label{eq:mia_vul_dataset}
\log_{10}(\tpr-\fpr) = \beta_{\shots}\log_{10}(\shots)+\beta_{\classes}\log_{10}(\classes)+\beta_{0}
\end{equation}

 In \cref{sec:predicting_dataset_simpler}, we propose a variation of the regression model that predicts $\log_{10}(\tpr)$ instead of $\log_{10}(\tpr-\fpr)$ but this alternative model performs worse on our empirical data and predicts $\tpr < \fpr$ in the tail when $\shots$ is very large. 

We utilise MIA results of ViT-B (Head) (see \cref{tab:vit-function-shots}) as the training data. Based on the $R^2$ (coefficient of determination) score ($R^2=0.930$ for the model trained on $\fpr=0.001$ data), our model fits the data extremely well.  We provide further evidence for other $\fpr$ in \cref{fig:predict_mia_dataset_eval_additional} and \cref{tab:glm-coefs-diff} in the Appendix. \cref{fig:linear-regresion-params} shows the parameters of the prediction model fitted to the training data. For larger $\fpr$, the coefficient $\beta_S$ is around $-0.5$, as our theoretical analysis predicts.

\begin{figure}[t]
    \centering
    \includegraphics[width=\linewidth]{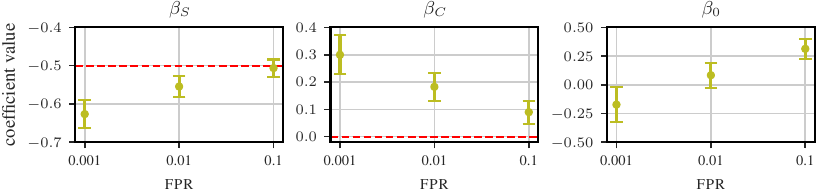}
    \caption{Coefficient values for different $\fpr$ when fitting a regression model based on \cref{eq:mia_vul_dataset} fitted on data from  ViT-B (Head) with LiRA (\cref{tab:vit-function-shots}). The error bars display the 95\% confidence intervals based  on Student's t-distribution. Theoretical values in the simplified model is shown by pink dotted lines ($\beta_S = 0.5$ and $\beta_C=0$).}
    \label{fig:linear-regresion-params}
    \vspace{-4mm}
\end{figure}

\paragraph{Prediction quality on other MIA target models} We analyze how the regression model trained on the ViT-B (Head) data generalizes to other target models. The main points are:
\begin{itemize}[leftmargin=*,topsep=-4pt]
    \item \textit{R-50 (Head):} \cref{fig:predict_mia_dataset_eval} shows that the regression model is robust to a change of the feature extractor, as it is able to predict the $\tpr$ for R-50 (Head) (test $R^2=0.790$). 
    \item \textit{R-50 (FiLM):} \cref{fig:eval_r50filmcarlini} shows that the prediction quality is good for R-50 (FiLM) models. These models are fine-tuned with parameter-efficient FiLM~\citep{perez2018film} layers (See \cref{sec:parameterization}). \citet{tobaben2023Efficacy} demonstrated that FiLM layers are a competitive alternative to training all parameters. We supplement the MIA results of \citet{tobaben2023Efficacy} with own FiLM training runs. Refer to \cref{tab:comparision_data} in the Appendix.
    \item \textit{From-Scratch-Training:} \citet{Carlini2022LiRA} provide limited results on from-scratch-training. To the best of our knowledge these are the only published LiRA results on image classification models. \cref{fig:eval_r50filmcarlini} displays that our prediction model underestimates the vulnerability of the from-scratch trained target models. We have identified two potential explanations for this: \begin{inlinelist}
        \item In from-scratch-training all weights of the model need to be trained from the sensitive data and thus potentially from-scratch-training could be more vulnerable than fine-tuning. 
        \item The strongest attack in \citet{Carlini2022LiRA} uses data augmentations to improve the performance. We are not using this optimization.
    \end{inlinelist}
    Additionally, as noted in \cref{remark:from-scratch}, our theoretical analysis is based on a simple model and the power-law might not occur at all for larger neural networks trained from scratch.
\end{itemize}

\begin{figure}[t]
		\centering
\begin{subfigure}[t]{0.465\textwidth}
        \includegraphics[width=\textwidth]{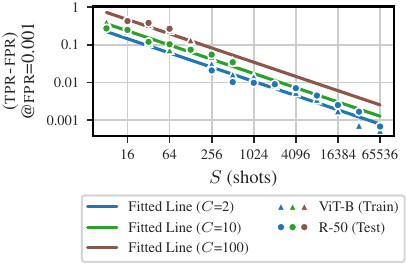}
        \caption{The dots show the median $\tpr$ for the train set (ViT-B; \cref{tab:vit-function-shots}) and the test set (R-50; \cref{tab:r-50-function-shots}) over six seeds (datasets: Patch Camelyon, EuroSAT and CIFAR100). The linear model is robust to changing the feature extractor from ViT-B to R-50.}\label{fig:predict_mia_dataset_eval}
    \end{subfigure}\hfill
    \begin{subfigure}[t]{0.49\textwidth}
        \includegraphics[width=\textwidth]{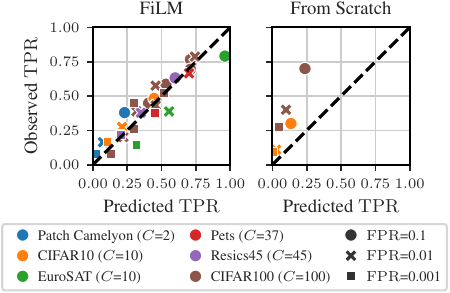}
        \caption{Regression model is robust to changing the fine-tuning method from Head to FiLM, but from scratch training seems to be more vulnerable than predicted.
        	\begin{inlinelist}
        		\item left: fine-tuned with FiLM (see \cref{tab:comparision_data})
        		\item right: trained from scratch. Data is from \citet{Carlini2022LiRA}
        	\end{inlinelist}. 
        }\label{fig:eval_r50filmcarlini}
    \end{subfigure}
\caption{Performance of the regression model based on \cref{eq:mia_vul_dataset} fitted on data from \cref{tab:vit-function-shots}.}\label{fig:regression_model_performance_large}
\vspace{-4mm}
\end{figure}

\begin{minipage}{0.49\linewidth}
    \includegraphics{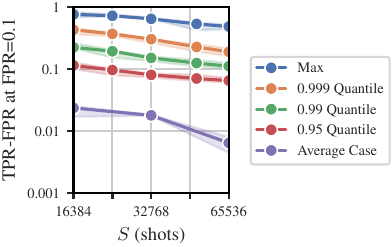}
    \captionof{figure}{Individual vulnerability for ViT-B (Head) when fine-tuning on Patch Camelyon. We observe a similar power-law relationship for individuals when looking at the quantiles but the max decreases slower. The Average Case shows the average LiRA vulnerability over the individuals also illustrated in the right panel of \cref{fig:rmia-lira-comparison}.}\label{fig:individual_mia}
\end{minipage}
\hfill
\begin{minipage}{0.49\linewidth}
        \captionof{table}{The minimum $S$ with $C$=$2$ predicted by the models in \cref{sec:model-to-predict-vulnerability,sec:individual_mia} to empirically match the DP bounds ($\delta$=$10^{-5}$) calculated through \citep{Kairouz2015Composition} in terms of $\tpr$ at $\fpr$.}
    \label{tab:bound_comparison}
\begin{adjustbox}{max width=\linewidth}
    \begin{tabular}{|c|rrr|r|}
    \toprule
    $\epsilon$ & \multicolumn{3}{c|}{ min $S$ in average case} & min $S$ in \\
    & & & &worst-case \\
     & $\fpr$=$0.1$ & $\fpr$=$0.01$ & $\fpr$=$0.001$ & $\fpr$=$0.1$ \\
    \midrule
    $0.25$ & $5\,400$ & $69\,000$ & $320\,000$ & $5.5\times 10^9$ \\
    $0.50$ & $1\,100$ & $16\,000$ & $88\,000$ & $2.6\times 10^8$ \\
    $0.75$ & $360$ & $5\,900$ & $38\,000$ & $3.5\times 10^7$ \\
    $1.00$ & $160$ & $2\,700$ & $19\,000$ & $7.0\times 10^6$ \\
    \bottomrule
    \end{tabular}
\end{adjustbox}
    \vspace{5\baselineskip}
\end{minipage}

\begin{figure}[ht]
    \centering
    \includegraphics[width=0.48\linewidth]{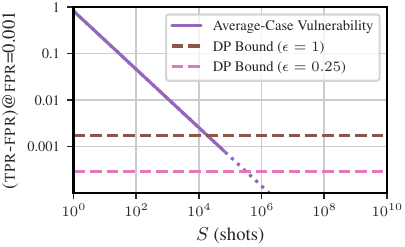}
    \includegraphics[width=0.48\linewidth]{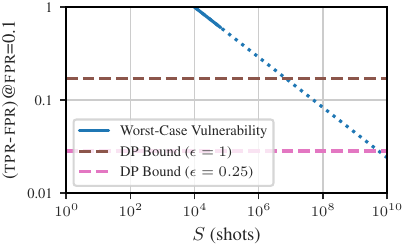}
    \caption{Illustration of the extrapolation for the average-case at $\fpr=10^{-3}$ (left) and worst-case at $\fpr=10^{-1}$ (right).}
    \label{fig:powerlaw_extrapolation_illustration}
\end{figure}

\subsection{Individual MIA vulnerability}\label{sec:individual_mia}
 In order to assess the per-sample MIA vulnerability, we run the experiment with $257$ models with each of them once acting as a target and as a shadow model otherwise, compute the $\tpr$ at $\fpr$ for every sample separately. In \cref{fig:individual_mia}, we display the individual vulnerability as a function of $S$ (shots) for Patch Camelyon and compare it with corresponding average case vulnerability from \cref{fig:rmia-lira-comparison}. The plot shows the maximal vulnerability of all samples and different quantiles over six seeds. The solid line is the median and the errorbars display the min and max over seeds. The quantiles are more robust to extreme outliers and show decreasing trends already at much lower $S$ than the maximum vulnerability.

When fitting the model in \cref{eq:mia_vul_dataset} we observe that the coefficients $\beta_{\shots}$ that model the relationship between vulnerability and examples per class for the quantiles are $-0.5603$, $-0.5688$ and $-0.4796$ which is close to the theoretical value of $-0.5$ derived in the theoretical analysis in \cref{sec:simplified-model}. However, the maximum vulnerability decreases with a lower slope of $-0.2695$ which is considerably smaller. With larger $S$ the slope of the max vulnerability increases, e.g., with $S\ge32768$ the coefficient is $-0.3478$ suggesting that higher $S$ is required for the most vulnerable points.

\subsection{Comparison between empirical models and universal DP bounds}\label{sec:bound_comparision}

While the practical evaluation through MIAs is statistical and does not provide universal formal guarantees like DP, the power-law can aid understanding about how the practical vulnerability to MIA behaves in a more realistic threat model when the examples per class increase.

Using the translation between the $(\tpr,\fpr)$ to $(\epsilon,\delta)$-DP guarantees proposed by \citet{Kairouz2015Composition}, we compute the minimum $S$ predicted by our empirical models such that the predicted $\tpr$ matches the theoretical bound at target DP level. We illustrate this in \cref{fig:powerlaw_extrapolation_illustration}. (See \cref{app:bound_comparison} for a more detailed description.)
\cref{tab:bound_comparison} shows the resulting lower bounds of $S$ for various DP levels and values of $\fpr$. While our empirical observations do not provide any formal guarantees, the comparison serves as an illustration to better understand the different requirements of the average and individual vulnerability.  
We can see that both for the average and the worst case, obtaining a low $\fpr$ would require a large amount of samples per class in order to match the $\tpr$ for meaningfully strong DP bounds. 

\section{Discussion}\label{sec:discussion}
Trying to bridge empirical MIA vulnerability and formal DP guarantees is not an easy task because of different threat models and different nature of bounds (statistical vs.\ universal).
While we were able to show both theoretically (\Cref{sec:explaining_trends}) and empirically (\Cref{sec:predicting_dataset}) that having more examples per class provides protection against MIA in fine-tuned neural networks, the numbers required for significant protection (\Cref{sec:bound_comparision}) limit the practical utility of this observation.

Using a level of formal DP bounds that provide meaningful protection as a yardstick, at least tens of thousands of examples per class are needed for every class even at $\fpr=0.001$.
The $\epsilon$ values used here are formal upper bounds and must not be compared to actual DP deep learning with comparable privacy budgets, as the latter would be far less vulnerable.
At this number of samples per class and a good pre-trained model, the impact of DP training is often negligible.
This stresses the importance of formal guarantees like DP for privacy protection.

Our formal analysis focuses on a setting that can be linked to fine-tuning. As shown in \cref{fig:eval_r50filmcarlini}, from-scratch training likely has higher vulnerability. Formally analyzing more models is an interesting area for future research.

Since our MIA evaluation assumes that only a target point is known to the adversary and the rest of the dataset is random, this stochasticity likely introduces some protection.
Hence the power-law may be invalidated under stronger MIA settings where the adversary has access to other points in the private dataset~\citep{Bai2025OtherMIAs}.

Our experiments show that there is a difference between classes in terms of vulnerability and an interesting direction for future work is to understand the properties of classes that influence this vulnerability, e.g., variability within or between classes or their separability. Another direction for future work is understanding the impact of pre-training data on the vulnerability of fine-tuning data.

\paragraph{Broader Impact}
Our work systematically studies factors influencing privacy risk of trained ML models. This has significant positive impact on users training ML models on personal data by allowing them to understand and limit the risks.

\paragraph{Limitations}
We mostly consider LiRA in our paper which is optimal for our simplified model of membership inference (\cref{sec:simplified-model}), but for the transfer-learning experiments (\cref{sec:predicting_dataset}) there might be stronger attacks in the future. Furthermore, our simplified model assumes well-behaved underlying distributions, meaning that the data is normally distributed around the class centers. We leave the analysis of other data distributions (e.g., heavy-tailed distributions) to future work. Formal bounds on MIA vulnerability would require something like DP. In addition, both our theoretical and empirical analysis focus on deep transfer learning using fine-tuning. Models trained from scratch are likely to be more vulnerable. Concurrently, a related paper~\citep{Hayes2025LLMMIA} has studied the relation between LiRA vulnerability and dataset size when attacking LLMs during the pre-training phase. Extending this analysis is promising but computationally very expensive.

We provide the code in an open repository\footnote{
\url{https://github.com/DPBayes/impact-dataset-properties-MI-vulnerability-deep-TL}}. All used pre-trained models and datasets are publicly available.

\section*{Acknowledgments}
This work was supported by the Research Council of Finland (Flagship programme: Finnish
Center for Artificial Intelligence, FCAI, Grant 356499 and Grant 359111), the Strategic Research Council
at the Research Council of Finland (Grant 358247), the European Union (Project
101070617) as well as JSPS KAKENHI Grant Number 25KJ1515. Views and opinions expressed are however those of the author(s) only and do not
necessarily reflect those of the European Union or the European Commission. Neither the
European Union nor the granting authority can be held responsible for them.
he authors wish to acknowledge CSC – IT Center for Science, Finland, for computational resources.
We thank Mikko A. Heikkilä and Ossi Räisä for helpful comments and suggestions and John F. Bronskill for helpful discussions regarding few-shot learning.

\bibliography{main}
\bibliographystyle{abbrvnat}



\clearpage 

\appendix
\addtocontents{toc}{\protect\setcounter{tocdepth}{2}}
\hrule height 4pt
  \vskip 0.25in
  
\begin{center}
    \textbf{\huge Supplementary Material}
\end{center}

 \vskip 0.1in
  \hrule height 1pt

\tableofcontents
\clearpage

\setcounter{figure}{0}                       
\renewcommand\thefigure{A.\arabic{figure}}
\setcounter{table}{0}
\renewcommand{\thetable}{A\arabic{table}}
\setcounter{equation}{0}
\renewcommand{\theequation}{A\arabic{equation}}
\setcounter{lemma}{0}
\renewcommand{\thelemma}{A\arabic{lemma}}
\setcounter{corollary}{0}
\renewcommand{\thecorollary}{A\arabic{corollary}}

\section{Details of \cref{sec:explaining_trends}}\label{sec:theory_details}

\subsection{Formulating LiRA}\label{sec:formulate-lira-rmia}
Let $\mathcal{M}$ be our target model and $\ell(\mathcal{M}(\bm{x}), y)$ be the loss of the model on a target example $(\bm{x}, y)$.
The goal of MIA is to determine whether $(\bm{x}, y) \in \dtarget$.
This can be formulated as a hypothesis test:
\begin{align}
    &H_0: (\bm{x}, y) \notin \dtarget \label{eq:h_0} \\
    &H_1: (\bm{x}, y) \in \dtarget. \label{eq:h_1}
\end{align}
Following \citep{Carlini2022LiRA}, we formulate the Likelihood Ratio Attack (LiRA).
LiRA exploits the difference of losses on the target model under $H_0$ and $H_1$.
To model the IN/OUT loss distributions with few shadow models, LiRA employs a parametric modelling.
Particularly, LiRA models $\tx$ by a normal distribution.
That is, the hypothesis test formulated above can be rewritten as
\begin{align}
    &H_0^\prime: \tx \sim \mathcal{N}(\hmuout, \hsigout^2) \\
    &H_1^\prime: \tx \sim \mathcal{N}(\hmuin, \hsigin^2).
\end{align}
The likelihood ratio is now
\begin{equation}
    \mathrm{LR}(\bm{x}) = \frac{\mathcal{N}(\tx; \hmuin, \hsigin^2)}{\mathcal{N}(\tx; \hmuout, \hsigout^2)}.
\end{equation}
LiRA rejects $H_0^\prime$ if and only if
\begin{equation}
    \LR(\bm{x}) \geq \tau,
\end{equation}
concluding that $H_1^\prime$ is true, i.e., identifying the membership of $(\bm{x}, y)$.
Thus, the true positive rate of this hypothesis test given as
\begin{equation}
    \tpr_\lira(\bm{x}) = \Pr_{\dtarget\sim\sD^{|\data|}, \phi^M} \lp \LR(\bm{x}) \geq \tau \mid (\bm{x}, y) \in \dtarget\rp,
\end{equation}
where $\phi^M$ denotes the randomness in the shadow set sampling and shadow model training.

\subsection {On the assumption of shared scale}\label{sec:shared-scale}
In \cref{sec:explaining_trends} we assumed that for LiRA $\sigin=\sigout$ and $\hsigin=\hsigout$.
Using the simplified model formulated in \cref{sec:simplified-model}, we show that for large enough number $S$ of examples per class these assumptions are reasonable.

Let $\dtarget = \{(\bm{x}_{j,1}, j),...,(\bm{x}_{j,S}, j)\}_{j=1}^C$. Then the IN/OUT LiRA scores are given as
\begin{align}
    s_{y}^\mathrm{(in)} &= \langle \bm{x}, \frac{1}{S} \lp \sum_{i=1}^{S-1} \bm{x}_{y, i} + \bm{x}\rp \rangle = \langle \bm{x}, \frac{1}{S} \sum_{i=1}^S \bm{x}_{y, i} \rangle + \langle \bm{x}, \frac{1}{S}(\bm{x} - \bm{x}_{y, S}) \rangle \\
    s_{y}^\mathrm{(out)} &= \langle \bm{x}, \frac{1}{S} \sum_{i=1}^S \bm{x}_{y, i} \rangle.
\end{align}
Since for the simplified model scores follow Gaussian distributions, $\sigin=\hsigin$ and $\sigout = \hsigout$. It follows that
\begin{align}
    \sigin^2 &= \hsigin^2 = \Var(s_{y}^\mathrm{(in)}) = \frac{1}{S}\Var(\langle \bm{x}, \bm{x}_{y, i} \rangle) - \frac{1}{S^2}\Var(\langle \bm{x}, \bm{x}_{y, i} \rangle) = \frac{1}{S}\lp1 - \frac{1}{S}\rp \bm{x}^T\Sigma \bm{x} \\
    \sigout^2 &= \hsigout^2 = \Var(s_{y}^\mathrm{(out)}) = \frac{1}{S}\Var(\langle \bm{x}, \bm{x}_{y, i} \rangle) = \frac{1}{S}\bm{x}^T\Sigma \bm{x}.
\end{align}
Thus, the differences $\sigin-\sigout$ and $\hsigin-\hsigout$ are negligible for large $S$.

\subsection{Proof of \cref{lemma:per-example-LiRA}}\label{sec:proof-per-example-LiRA}
\lemmaPerExampleLiRA*
\begin{proof}
    We abuse notations by denoting $\muin$ to refer to $\muin(\bm{x})$ and similarly for other statistics.
    We have
    \begin{align}
        \log\frac{\mathcal{N}(\tx; \hmuin, \hsig^2)}{\mathcal{N}(\tx; \hmuout, \hsig^2)} \geq \log\tau \\
        -\frac{1}{2}\lp \frac{\tx - \hmuin}{\hsig} \rp^2 + \frac{1}{2}\lp \frac{\tx - \hmuout}{\hsig} \rp^2 \geq \log\tau \\
        \frac{1}{2\hsig^2}(2\tx\hmuin - \hmuin^2 - 2\tx\hmuout + \hmuout^2) \geq \log\tau \\
        \frac{1}{2\hsig^2}(\hmuin - \hmuout)(2\tx - \hmuin - \hmuout) \geq \log\tau \\
        \begin{cases}
            \tx \geq \frac{\hsig^2\log\tau}{\hmuin - \hmuout} + \frac{\hmuin + \hmuout}{2} &\text{ if } \hmuin > \hmuout\\
            \tx \leq \frac{\hsig^2\log\tau}{\hmuin - \hmuout} + \frac{\hmuin + \hmuout}{2} &\text{ if } \hmuin < \hmuout.
        \end{cases}
    \end{align}
    Then if $\hmuin > \hmuout$, in the limit of infinitely many shadow models
    \begin{align}
        \fpr_\lira(\bm{x}) &= \Pr_Z\lp \muout + \sigma Z \geq \frac{\hsig^2\log\tau}{\hmuin - \hmuout} + \frac{\hmuin + \hmuout}{2} \rp \\
        &= \Pr_Z\lp Z \geq \frac{\hsig^2\log\tau}{\sigma(\hmuin - \hmuout)} + \frac{\hmuin + \hmuout}{2\sigma} - \frac{\muout}{\sigma} \rp \\
        &= 1 - F_Z\lp \frac{\hsig^2\log\tau}{\sigma(\hmuin - \hmuout)} + \frac{\hmuin + \hmuout}{2\sigma} - \frac{\muout}{\sigma} \rp,
    \end{align}
    and if $\hmuin < \hmuout$, similarly,
    \begin{align}
        \fpr_\lira(\bm{x}) &= \Pr_Z\lp \muout + \sigma Z \leq \frac{\hsig^2\log\tau}{\hmuin - \hmuout} + \frac{\hmuin + \hmuout}{2} \rp \\
        &= F_Z\lp \frac{\hsig^2\log\tau}{\sigma(\hmuin - \hmuout)} + \frac{\hmuin + \hmuout}{2\sigma} - \frac{\muout}{\sigma} \rp.
    \end{align}
    Thus
    \begin{equation} \label{eq:lira-fpr-tau}
        \frac{\hsig^2\log\tau}{\sigma(\hmuin - \hmuout)} + \frac{\hmuin + \hmuout}{2\sigma} - \frac{\muout}{\sigma} = \begin{cases}
            F_Z^{-1}(1 - \fpr_\lira(\bm{x})) &\text{ if } \hmuin > \hmuout \\
            F_Z^{-1}(\fpr_\lira(\bm{x})) &\text{ if } \hmuin < \hmuout.
        \end{cases}
    \end{equation}
    It follows that if $\hmuin > \hmuout$,
    \begin{align}
        \tpr_\lira(\bm{x}) &= \Pr_Z\lp \muin + \sigma Z \geq \frac{\hsig^2\log\tau}{\hmuin - \hmuout} + \frac{\hmuin + \hmuout}{2} \rp \\
        &= \Pr_Z\lp Z \geq \frac{\hsig^2\log\tau}{\sigma(\hmuin - \hmuout)} + \frac{\hmuin + \hmuout}{2\sigma} - \frac{\muin}{\sigma} \rp \\
        &= 1 - F_Z\lp F_Z^{-1}(1 - \fpr_\lira(\bm{x})) - \frac{\muin - \muout}{\sigma} \rp.
    \end{align}
    If $\hmuin < \hmuout$, then
    \begin{align}
        \tpr_\lira(\bm{x}) &= \Pr_Z\lp \muin + \sigma Z \leq \frac{\hsig^2\log\tau}{\hmuin - \hmuout} + \frac{\hmuin + \hmuout}{2} \rp \\
        &= \Pr_Z\lp Z \leq \frac{\hsig^2\log\tau}{\sigma(\hmuin - \hmuout)} + \frac{\hmuin + \hmuout}{2\sigma} - \frac{\muin}{\sigma} \rp \\
        &= F_Z\lp F_Z^{-1}(\fpr_\lira(\bm{x})) - \frac{\muin - \muout}{\sigma} \rp.
    \end{align}
\end{proof}

\subsection{Offline LiRA}\label{sec:offline-lira}

\citet{Carlini2022LiRA} proposes offline LiRA that only trains OUT shadow models for computational efficiency. Instead of likelihood ratio, the score for offline LiRA is given as
\begin{equation}
    \Lambda(\bm{x}) = \Pr_Z(\hmuout(\bm{x}) + \hsigout(\bm{x}) Z \leq t_{\bm{x}}),
\end{equation}
where $Z$ is a location-scale distribution with standard location and unit scale. Then $\tpr$ of offline LiRA becomes
\begin{equation}
    \tpr_{\textrm{off}\lira}(\bm{x}) = \Pr_{\dtarget\sim\sD^{|\data|}, \phi^M} \lp \Lambda(\bm{x}) \geq \tau \mid (\bm{x}, y) \in \dtarget\rp,
\end{equation}
where $\gamma$ is a tunable parameter.
$\fpr$ is also given similarly. Below we show a result for offline LiRA similar to \cref{lemma:per-example-LiRA}, but under a slightly more strict assumption that an attacker accurately estimates the IN/OUT score distributions. In the following, as for online LiRA, we assume $\sigin=\sigout$.

\begin{lemma}[Per-example offline LiRA vulnerability]\label{lemma:per-example-offline-LiRA}
    Suppose that $\tx$ follows the normal distribution with means $\muin(\bm{x}), \muout(\bm{x})$ and standard deviation $\sigma(\bm{x})$.
    Assume that an attacker has access to the underlying distribution $\sD$.
    Then for a large enough number of examples per class and infinitely many shadow models, the offline LiRA vulnerability of a fixed target example is
    \begin{equation}
        \tpr_\offlira(\bm{x}) = \Phi\lp \Phi^{-1} \lp \fpr_\offlira\rp + \frac{\muin(\bm{x}) - \muout(\bm{x})}{\sigma(\bm{x})}\rp
    \end{equation}
    where $\Phi$ is the standard normal cdf.
\end{lemma}
\begin{proof}
    For infinitely many shadow models, we have
    \begin{equation}
        \fpr_{\offlira}(\bm{x}) = \Pr_{\dtarget\sim\sD^{|\data|}} \lp \Lambda(\bm{x}) \geq \tau \mid (\bm{x}, y) \notin \dtarget\rp,
    \end{equation}
    and the score is now
    \begin{equation}
        \Lambda(\bm{x}) = \Pr_\eta(\muout(\bm{x}) + \sigma(\bm{x}) \eta \leq t_{\bm{x}}),
    \end{equation}
    where $\eta \sim \mathcal{N}(0, 1)$.
    When $(\bm{x}, y) \notin \dtarget$, $t_{\bm{x}} = \muout(\bm{x}) + \sigma(\bm{x}) Z$. Thus we have
    \begin{align}
        \fpr_{\offlira} &= \Pr_Z \lp \Lambda(\bm{x}) \geq \gamma \rp \\
        &= \Pr_Z \lp \Pr_\eta(\muout(\bm{x}) + \sigma(\bm{x}) \eta \leq \muout(\bm{x}) + \sigma(\bm{x}) Z) \geq \gamma \rp \\
        &= \Pr_Z \lp \Pr_\eta(\eta \leq  Z) \geq \gamma \rp \\
        &= \Pr_Z (\Phi(Z) \geq \gamma) \\
        &= 1 - \gamma.
    \end{align}
    On the other hand, when $(\bm{x}, y) \in \dtarget$, $t_{\bm{x}} = \muin(\bm{x}) + \sigma(\bm{x}) Z$. Thus we obtain
    \begin{align}
        \tpr_\offlira &= \Pr_Z \lp \Pr_\eta(\muout(\bm{x}) + \sigma(\bm{x}) \eta \leq \muin(\bm{x}) + \sigma(\bm{x}) Z) \geq \gamma \rp \\
        &= \Pr_Z \lp \Pr_\eta \lp\eta \leq \frac{\muin(\bm{x}) - \muout(\bm{x})}{\sigma(\bm{x})} +  Z\rp\geq 1 - \fpr_\offlira \rp \\
        &= \Pr_Z \lp \Pr_\eta \lp\eta \leq - \frac{\muin(\bm{x}) - \muout(\bm{x})}{\sigma(\bm{x})} -  Z\rp\leq \fpr_\offlira \rp \\
        &= \Pr_Z \lp - \frac{\muin(\bm{x}) - \muout(\bm{x})}{\sigma(\bm{x})} -  Z \leq \Phi^{-1}(\fpr_\offlira) \rp \\
        &= \Phi\lp \Phi^{-1} \lp \fpr_\offlira\rp + \frac{\muin(\bm{x}) - \muout(\bm{x})}{\sigma(\bm{x})}\rp.
    \end{align}
\end{proof}

Consequently, the power-law also holds for offline LiRA in the simplified model:

\begin{corollary}[Per-example offline LiRA power-law]\label{cor:offline-LiRA-power-law}
    Fix a target example $(\bm{x},y)$. For the simplified model with arbitrary $C$ and infinitely many shadow models, the per-example offline LiRA vulnerability is given as
    \begin{multline}
        \log(\tpr_\offlira(\bm{x}) - \fpr_\offlira(\bm{x})) \\= -\frac{1}{2}\log S - \frac{1}{2}\Phi^{-1}(\fpr_\offlira(\bm{x}))^2 + \log\frac{|\langle \bm{x}, \bm{x} - \bm{m}_{\bm{x}} \rangle|}{\sqrt{\bm{x}^T\Sigma \bm{x}}\sqrt{2\pi}} + \log(1 + \xi(S))
    \end{multline}
    where $m_{\bm{x}}$ is the true mean of class $y$ and $\xi(S)=O(1/\sqrt{S})$.
    For large $S$ we have
    \begin{equation}
        \log(\tpr_\offlira(\bm{x}) - \fpr_\offlira(\bm{x})) \approx -\frac{1}{2}\log S - \frac{1}{2}\Phi^{-1}(\fpr_\lira(\bm{x}))^2 + \log\frac{|\langle \bm{x}, \bm{x} - \bm{m}_{\bm{x}} \rangle|}{\sqrt{\bm{x}^T\Sigma \bm{x}}\sqrt{2\pi}}.
    \end{equation}
\end{corollary}

\subsection{Relaxing the assumption of \cref{lemma:per-example-LiRA}}\label{sec:relax-assumption}
In \cref{lemma:per-example-LiRA} we assume that an attacker has access to the true underlying distribution. 
However, in real-world settings the precise underlying distribution may not be available for an attacker. In the following, noting that the \cref{eq:lira-px-tpr} mainly relies on the true location parameters $\muin(\bm{x}),\muout(\bm{x})$ and scale parameter $\sigma(\bm{x})$, we relax this assumption of distribution availability so that the attacker only needs an approximated underlying distribution for the optimal LiRA that achieves the performance in \cref{lemma:per-example-LiRA}.

First, notice that if we completely drop this assumption so that an attacker trains shadow models with an arbitrary underlying distribution, then we may not be able to choose a desired $\fpr_\lira(\bm{x})$.
From \cref{eq:lira-fpr-tau} we have
\begin{align}
    \frac{\hsig^2\log\tau}{\sigma(\hmuin - \hmuout)} + \frac{\hmuin + \hmuout}{2\sigma} - \frac{\muout}{\sigma} = \begin{cases}
        F_Z^{-1}(1 - \fpr_\lira(\bm{x})) &\text{ if } \hmuin > \hmuout \\
        F_Z^{-1}(\fpr_\lira(\bm{x})) &\text{ if } \hmuin < \hmuout
    \end{cases} \\
    \log\tau = \begin{cases}
        \frac{\hmuin - \hmuout}{\hsig^2}\lp \sigma F_Z^{-1}(1 - \fpr_\lira(\bm{x})) - \frac{\hmuin + \hmuout}{2} + \muout \rp &\text{ if } \hmuin > \hmuout \\
        \frac{\hmuin - \hmuout}{\hsig^2}\lp \sigma F_Z^{-1}(\fpr_\lira(\bm{x})) - \frac{\hmuin + \hmuout}{2} + \muout \rp &\text{ if } \hmuin < \hmuout.
    \end{cases}
\end{align}
Since it does not make sense to choose a rejection region of the likelihood ratio test such that $\tau < 1$, we assume $\tau \geq 1$. Then we have
\begin{equation}
    \begin{cases}
        \frac{\hmuin - \hmuout}{\hsig^2}\lp \sigma F_Z^{-1}(1 - \fpr_\lira(\bm{x})) - \frac{\hmuin + \hmuout}{2} + \muout \rp \geq 0 &\quad \text{ if }\ \hmuin > \hmuout \\
        \frac{\hmuin - \hmuout}{\hsig^2}\lp \sigma F_Z^{-1}(\fpr_\lira(\bm{x})) - \frac{\hmuin + \hmuout}{2} + \muout \rp \geq 0 &\quad \text{ if }\ \hmuin < \hmuout.
    \end{cases}
\end{equation}
Therefore, a sufficient condition about attacker's knowledge on the underlying distribution for \cref{lemma:per-example-LiRA} to hold is
\begin{equation}
    \begin{cases}
        \sigma F_Z^{-1}(1 - \fpr_\lira(\bm{x})) - \frac{\hmuin + \hmuout}{2} + \muout \geq 0 &\quad \text{ if }\ \hmuin > \hmuout \\
        \sigma F_Z^{-1}(\fpr_\lira(\bm{x})) - \frac{\hmuin + \hmuout}{2} + \muout \leq 0 &\quad \text{ if }\ \hmuin < \hmuout.
    \end{cases}
\end{equation}
Rearranging the terms yields
\begin{equation}
    \begin{cases}
        \frac{\hmuin + \hmuout}{2} - \muout \leq \sigma F_Z^{-1}(1 - \fpr_\lira(\bm{x})) &\quad \text{ if }\ \hmuin > \hmuout \\
        \frac{\hmuin + \hmuout}{2} - \muout \geq \sigma F_Z^{-1}(\fpr_\lira(\bm{x})) &\quad \text{ if }\ \hmuin < \hmuout.
    \end{cases}
\end{equation}
For example, if the estimated mean $\hmuout(<\hmuin)$ is too large compared to the true parameter $\muout$, the left hand side for the case $\hmuin > \hmuout$ becomes very large, thereby forcing us to choose sufficiently small $\fpr_\lira(\bm{x})$. Similarly, if $\hmuout(>\hmuin)$ is much smaller than $\muout$, then the range of possible values of $\fpr_\lira(\bm{x})$ will be limited. We summarise this discussion in the following:
\begin{lemma}[\cref{lemma:per-example-LiRA} with relaxed assumptions]
        Suppose that the true IN/OUT distributions of $\tx$ are of a location-scale family with locations $\muin(\bm{x}),\muout(\bm{x})$ and a shared scale $\sigma(\bm{x})$ such that the distributions have finite first and second moments. Assume that LiRA models $\tx$ by $\mathcal{N}(\hmuin(\bm{x}), \hsig(\bm{x})^2)$ and $\mathcal{N}(\hmuout(\bm{x}),\hsig(\bm{x})^2)$, and that in the limit of infinitely many shadow models, estimated parameters $\hmuin(\bm{x}), \hmuout(\bm{x})$ and $\hsig(\bm{x})$ satisfy the following:
    \begin{equation}
    \begin{cases}
        \frac{\hmuin + \hmuout}{2} - \muout \leq \sigma F_Z^{-1}(1 - \fpr_\lira(\bm{x})) &\quad \text{ if }\ \hmuin > \hmuout \\
        \frac{\hmuin + \hmuout}{2} - \muout \geq \sigma F_Z^{-1}(\fpr_\lira(\bm{x})) &\quad \text{ if }\ \hmuin < \hmuout.
    \end{cases}
    \end{equation}
    Then in the limit of infinitely many shadow models, the LiRA vulnerability of a fixed target example $(\bm{x}, y)$ is
    \begin{equation}
        \tpr_\lira(\bm{x}) = \begin{cases}
        1 - F_Z\lp F_Z^{-1}(1 - \fpr_\lira(\bm{x})) - \frac{\muin(\bm{x}) - \muout(\bm{x})}{\sigma(\bm{x})}\rp &\text{ if }\ \hmuin(\bm{x}) > \hmuout(\bm{x}) \\
        F_Z\lp F_Z^{-1}(\fpr_\lira(\bm{x})) - \frac{\muin(\bm{x}) - \muout(\bm{x})}{\sigma(\bm{x})}\rp &\text{ if }\ \hmuin(\bm{x}) < \hmuout(\bm{x}),
        \end{cases}
    \end{equation}
    where $F_Z$ is the cdf of $t$ with the standard location and unit scale, assuming that the inverse of $F_Z$ exists.
\end{lemma}

\subsection{Proof of \cref{theorem:LiRA-power-law}}\label{sec:proof-lira-power-law}
\theoremLiRAPowerLaw*   
\begin{proof}
    Let $\dtarget = \{(\bm{x}_{j,1}, j),...,(\bm{x}_{j,S}, j)\}_{j=1}^C$. Then the LiRA score of the target $(\bm{x}, y)$ is
    \begin{align}
        s_{y}^\mathrm{(in)} &= \langle \bm{x}, \frac{1}{S} \lp \sum_{i=1}^{S-1} \bm{x}_{y, i} + \bm{x}\rp \rangle = \langle \bm{x}, \frac{1}{S} \sum_{i=1}^S \bm{x}_{y, i} \rangle + \langle \bm{x}, \frac{1}{S}(\bm{x} - \bm{x}_{y, S}) \rangle \\
        s_{y}^\mathrm{(out)} &= \langle \bm{x}, \frac{1}{S} \sum_{i=1}^S \bm{x}_{y, i} \rangle,
    \end{align}
    respectively, when $(\bm{x}, y) \in \dtarget$ and when $(\bm{x}, y) \notin \dtarget$.
    Thus we obtain
    \begin{align}
        \muin - \muout &= \E[s_{y}^\mathrm{(in)} - s_{y}^\mathrm{(out)}] = \frac{1}{S}\langle \bm{x}, \bm{x} - \bm{m}_{\bm{x}}\rangle \\
        \sigma^2 &= \Var(s_{y}^\mathrm{(out)}) = \frac{1}{S} \Var(\langle \bm{x}, \bm{x}_{y, i}\rangle) = \frac{1}{S}\bm{x}^T\Sigma \bm{x}.
    \end{align}
    Noting that the LiRA score follows a normal distribution, by Lemma \ref{lemma:per-example-LiRA} we have
    \begin{align}
        \tpr_\lira(\bm{x}) &= \Phi\lp \Phi^{-1}(\fpr_\lira(\bm{x})) + \frac{|\langle \bm{x}, \bm{x}- \bm{m}_{\bm{x}}\rangle|}{\sqrt{S}\sqrt{\bm{x}^T\Sigma \bm{x}}} \rp,
    \end{align}
    where $\Phi$ is the cdf of the standard normal distribution. 
    This completes the first half of the theorem.

    Now let $\phi(u)$ denote the pdf of the standard normal distribution, and let
    \begin{equation}
        r =  \frac{|\langle \bm{x}, \bm{x}- \bm{m}_{\bm{x}}\rangle|}{\sqrt{S}\sqrt{\bm{x}^T\Sigma \bm{x}}}.
    \end{equation}
    Using Taylor expansion of $\Phi( \Phi^{-1}(\fpr_\lira(\bm{x})) + r)$ around $r=0$, we have
    \begin{align}
        \tpr_\lira(\bm{x}) &= \sum_{k=0}^\infty \frac{\Phi^{(k)}(\Phi^{-1}(\fpr_\lira(\bm{x}))}{k!}r^k \\
        &= \fpr_\lira(\bm{x}) + \sum_{k=1}^\infty \frac{\Phi^{(k)}(\Phi^{-1}(\fpr_\lira(\bm{x}))}{k!}r^k \\
        &= \fpr_\lira(\bm{x}) + \sum_{k=1}^\infty \frac{\phi^{(k-1)}(\Phi^{-1}(\fpr_\lira(\bm{x}))}{k!}r^k \\
        &= \fpr_\lira(\bm{x}) + \sum_{k=1}^\infty \frac{(-1)^{k-1}\He_{k-1}(\Phi^{-1}(\fpr_\lira(\bm{x})))\phi(\Phi^{-1}(\fpr_\lira(\bm{x}))}{k!}r^k \\
        &= \fpr_\lira(\bm{x}) + r \phi(\Phi^{-1}(\fpr_\lira(\bm{x})) \sum_{k=0}^\infty \frac{(-1)^k\He_k(\Phi^{-1}(\fpr_\lira(\bm{x})))}{(k+1)!}r^k,
    \end{align}
    where $\He$ denotes Hermite polynomials.
    It follows that
    \begin{align}
        &\log(\tpr_\lira(\bm{x}) - \fpr_\lira(\bm{x})) \\
        =& \log r + \log \phi(\Phi^{-1}(\fpr_\lira(\bm{x})) + \log\lp \sum_{k=0}^\infty \frac{(-1)^k\He_k(\Phi^{-1}(\fpr_\lira(\bm{x})))}{(k+1)!}r^k \rp \\
        =& -\frac{1}{2}\log S - \frac{1}{2}\Phi^{-1}(\fpr_\lira(\bm{x}))^2 + \log\frac{|\langle \bm{x}, \bm{x} - \bm{m}_{\bm{x}} \rangle|}{\sqrt{\bm{x}^T\Sigma \bm{x}}\sqrt{2\pi}} \\
        &+ \log\lp \sum_{k=0}^\infty \frac{(-1)^k \He_k(\Phi^{-1}(\fpr_\lira(\bm{x})))}{(k+1)!} \lp \frac{|\langle \bm{x}, \bm{x} - \bm{m}_{\bm{x}} \rangle|}{\sqrt{S}\sqrt{\bm{x}^T\Sigma \bm{x}}} \rp^k \rp \\
        =& -\frac{1}{2}\log S - \frac{1}{2}\Phi^{-1}(\fpr_\lira(\bm{x}))^2 + \log\frac{|\langle \bm{x}, \bm{x} - \bm{m}_{\bm{x}} \rangle|}{\sqrt{\bm{x}^T\Sigma \bm{x}}\sqrt{2\pi}} \\
        &+ \log\lp 1 + \sum_{k=1}^\infty \frac{(-1)^k \He_k(\Phi^{-1}(\fpr_\lira(\bm{x})))}{(k+1)!} \lp \frac{|\langle \bm{x}, \bm{x} - \bm{m}_{\bm{x}} \rangle|}{\sqrt{S}\sqrt{\bm{x}^T\Sigma \bm{x}}} \rp^k \rp \\
        =& -\frac{1}{2}\log S - \frac{1}{2}\Phi^{-1}(\fpr_\lira(\bm{x}))^2 + \log\frac{|\langle \bm{x}, \bm{x} - \bm{m}_{\bm{x}} \rangle|}{\sqrt{\bm{x}^T\Sigma \bm{x}}\sqrt{2\pi}} + \log(1 + \xi(S)),
    \end{align}
    where we have $\xi(S) = O(1/\sqrt{S})$.
    For large enough $S$, ignoring the residual term, we approximate
    \begin{equation}\label{eq:approximation}
        \log(\tpr_\lira(\bm{x}) - \fpr_\lira(\bm{x})) \approx -\frac{1}{2}\log S - \frac{1}{2}\Phi^{-1}(\fpr_\lira(\bm{x}))^2 + \log\frac{|\langle \bm{x}, \bm{x} - \bm{m}_{\bm{x}} \rangle|}{\sqrt{\bm{x}^T \Sigma \bm{x}}\sqrt{2\pi}}.
      \end{equation}
\end{proof}

\subsection{Proof of \cref{cor:average-LiRA}}\label{sec:proof-average-lira}
\corAverageLiRA*
\begin{proof}
    By theorem \ref{theorem:LiRA-power-law} and the law of unconscious statistician, we have for large $S$
    \begin{align}
        \atpr_\lira - \afpr_\lira &= \int_\domain p(\bm{x}) (\tpr_\lira(\bm{x}) - \fpr_\lira(\bm{x}))\dr \bm{x} \\
        &\approx \int_\domain p(\bm{x})\frac{1}{\sqrt{2\pi}}e^{-\frac{1}{2}\Phi^{-1}(\afpr_\lira)^2}\frac{\langle \bm{x}, \bm{x} - \bm{m}_{\bm{x}} \rangle}{\sqrt{S}\sqrt{\bm{x}^T\Sigma \bm{x}}}\dr \bm{x} \\
        &= \frac{1}{\sqrt{2\pi}}e^{-\frac{1}{2}\Phi^{-1}(\afpr_\lira)^2}\frac{1}{\sqrt{S}} \int_\domain p(\bm{x})\frac{\langle \bm{x}, \bm{x} - \bm{m}_{\bm{x}} \rangle}{\sqrt{\bm{x}^T\Sigma \bm{x}}}\dr \bm{x} \\
        &= \frac{1}{\sqrt{S}}e^{-\frac{1}{2}\Phi^{-1}(\afpr_\lira)^2} \E_{(\bm{x}, y) \sim \sD}\left[\frac{\langle \bm{x}, \bm{x} - \bm{m}_{\bm{x}} \rangle}{\sqrt{2\pi}\sqrt{\bm{x}^T\Sigma \bm{x}}}\right],
    \end{align}
    where $p(\bm{x})$ is the density of $\sD$ at $(\bm{x}, y)$, and $\domain$ is the data domain.
    Note that here we fixed $\fpr_\lira(\bm{x}) = \afpr_\lira$ for all $\bm{x}$.
    Then we obtain
    \begin{equation}
        \log(\atpr_\lira - \afpr_\lira) \approx -\frac{1}{2}\log S - \frac{1}{2}\Phi^{-1}(\afpr_\lira)^2 + \log\lp \E_{(\bm{x}, y) \sim \sD}\left[\frac{\langle \bm{x}, \bm{x} - \bm{m}_{\bm{x}} \rangle}{\sqrt{2\pi}\sqrt{\bm{x}^T\Sigma \bm{x}}}\right] \rp.
    \end{equation}
\end{proof}

\section{Theoretical analysis of RMIA}\label{sec:rmia-theory}

Similar to LiRA, RMIA is based on shadow model training and computing the attack statistics based on a likelihood ratio. The main difference to LiRA is that RMIA does not compute the likelihood ratio based on aggregated IN/OUT statistics, but instead compares the target data point against random samples $(\bm{z}, y_{\bm{z}})$ from the target data distribution. After computing the likelihood ratios over multiple $(\bm{z}, y_{\bm{z}})$ values, the MIA score is estimated as a proportion of the ratios exceeding a preset bound. This approach makes RMIA a more effective attack when the number of shadow models is low.

\subsection{Formulating RMIA}

In the following, let us denote a target point by $(\bm{x}, y_{\bm{x}})$ with $y_{\bm{x}}$ being the label of $\bm{x}$.
Let us restate how RMIA \citep{Zarifzadeh2024RMIA} builds the MIA score. RMIA augments the likelihood-ratio with a sample from the target data distribution to calibrate how likely you would obtain the target model if $(\bm{x}, y_{\bm{x}})$ is replaced with another sample $(\bm{z}, y_{\bm{z}})$. Denoting the target model parameters with $\theta$, RMIA computes the \emph{pairwise} likelihood ratio
\begin{align}
    \LR(\bm{x}, \bm{z}) = \frac{p(\theta \mid \bm{x}, y_{\bm{x}})}{p(\theta \mid \bm{z}, y_{\bm{z}})},
\end{align}
and the corresponding MIA score is given as
\begin{align}
    \text{Score}_\text{RMIA}(\bm{x}) = \Pr_{(\bm{z}, y_{\bm{z}}) \sim \mathbb{D}}(\LR(\bm{x}, \bm{z}) > \gamma),
    \label{eq:rmia_score}
\end{align}
where $\mathbb{D}$ denotes the training data distribution.
\citet{Zarifzadeh2024RMIA} show two approaches to compute $\LR(\bm{x}, \bm{z})$: the direct approach and the effecient Bayesian approach.
In the following theoretical analysis, we focus on the direct approach that is an approximation of the efficient Bayesian approach, as \citet{Zarifzadeh2024RMIA} empirically demonstrates that these approaches exhibit similar performances.

Let $\hmu_{\bm{a},\bm{b}}$ and $\hsig_{\bm{a},\bm{b}}$ denote, respectively, the mean and standard deviation of $t_{\bm{b}} = \ell(\mathcal{M}(\bm{b}), y_{\bm{b}})$ estimated from shadow models, where $\bm{a}$ denotes which of $(\bm{x},y_{\bm{x}})$ and $(\bm{z},y_{\bm{z}})$ is in the training set.
By Equation 11 in \citep{Zarifzadeh2024RMIA}, the pairwise likelihood ratio is given as
\begin{equation}
    \LR(\bm{x}, \bm{z}) = \frac{p(\theta \mid \bm{x}, y_{\bm{x}})}{p(\theta \mid \bm{z}, y_{\bm{z}})} \approx \frac{\mathcal{N}(\tx; \hmu_{\bm{x},\bm{x}}, \hsig_{\bm{x},\bm{x}}^2) \mathcal{N}(\tz; \hmu_{\bm{x},\bm{z}}, \hsig_{\bm{x},\bm{z}}^2)}{\mathcal{N}(\tx; \hmu_{\bm{z},\bm{x}}, \hsig_{\bm{z},\bm{x}}^2) \mathcal{N}(\tz; \hmu_{\bm{z},\bm{z}}, \hsig_{\bm{z},\bm{z}}^2)},
\end{equation}
where $\hmu_{\bm{a},\bm{b}}$ and $\hsig_{\bm{a},\bm{b}}$ are, respectively, the mean and standard deviation of $t_b$ estimated from shadow models when the training set contains $\bm{a}$ but not $\bm{b}$.
Then RMIA exploits the probability of rejecting the pairwise likelihood ratio test over $(\bm{z}, y_{\bm{z}}) \sim \sD$:
\begin{equation}
    \mathrm{Score}_\rmia(\bm{x}) = \Pr_{(\bm{z}, y_{\bm{z}}) \sim \sD}\lp \mathrm{LR}(\bm{x}, \bm{z}) \geq \gamma \rp,
\end{equation}
where $\sD$ is the underlying data distribution.
Similar to LiRA, the classifier is built by checking if $\text{Score}_\text{RMIA}(\bm{x}) > \tau$. In the following, we will use the direct computation of likelihood-ratio as described in Equation 11 of \cite{Zarifzadeh2024RMIA} which approximates $\LR(\bm{x}, \bm{z})$ using normal distributions.
Thus, RMIA rejects $H_0$ if and only if
\begin{equation}
    \Pr_{(\bm{z}, y_{\bm{z}}) \sim \sD}\lp \mathrm{LR}(\bm{x}, \bm{z}) \geq \gamma \rp \geq \tau,
\end{equation}
identifying the membership of $\bm{x}$.
Hence the true positive rate of per-example RMIA is given as
\begin{multline}
    \tpr_\rmia(\bm{x}) \\= \Pr_{\dtarget \sim \sD^{|\data|}, \phi^M} \lp\Pr_{(\bm{z}, y_{\bm{z}}) \sim \sD}\lp \mathrm{LR}(\bm{x}, \bm{z}) \geq \gamma \rp \geq \tau \mid (\bm{x}, y_{\bm{x}}) \in \dtarget \wedge (\bm{x}, y_{\bm{x}}) \notin \dtarget\rp,
\end{multline}
where $\phi^M$ denotes the randomness in the shadow set sampling and shadow model training.

We define the average-case $\tpr$ for RMIA by taking the expectation over the data distribution:
\begin{equation}
    \atpr_\rmia = \E_{(\bm{x}, y_{\bm{x}}) \sim \sD}[\tpr_\rmia(\bm{x})].
\end{equation}

\subsection{Per-example RMIA}
Next we focus on the per-example RMIA performance.
As in the case of LiRA, we assume that $\tx$ and $\tz$ follow distributions of the location-scale family. 
We have
\begin{align}
    \tx &= \begin{cases}
        \mu_{\bm{x},\bm{x}} + \sigma_{\bm{x},\bm{x}}Z &\text{ if } (\bm{x}, y_{\bm{x}}) \in \dtarget \wedge (\bm{z}, y_{\bm{z}}) \notin \dtarget \\
        \mu_{\bm{z},\bm{x}} + \sigma_{\bm{z},\bm{x}}Z &\text{ if } (\bm{x}, y_{\bm{x}}) \notin \dtarget \wedge (\bm{z}, y_{\bm{z}}) \in \dtarget
    \end{cases} \\
    \tz &= \begin{cases}
        \mu_{\bm{x},\bm{z}} + \sigma_{\bm{x},\bm{z}}Z &\text{ if } (\bm{x}, y_{\bm{x}}) \in \dtarget \wedge (\bm{z}, y_{\bm{z}}) \notin \dtarget \\
        \mu_{\bm{z},\bm{z}} + \sigma_{\bm{z},\bm{z}}Z &\text{ if } (\bm{x}, y_{\bm{x}}) \notin \dtarget \wedge (\bm{z}, y_{\bm{z}}) \in \dtarget.
    \end{cases}
\end{align}
where $Z$ follows a distribution of location-scale family with the standard location and unit scale.
It is important to note that $\mu_{\bm{a},\bm{b}}$ and $\sigma_{\bm{a},\bm{b}}$ denote, respectively, a location and a scale, while previously defined $\hmu_{\bm{a},\bm{b}}$ and $\hsig_{\bm{a},\bm{b}}$ are, respectively, a mean and a standard deviation.
As for the analysis of LiRA, we assume that the target and shadow sets have a sufficient number of examples per class, and that $\sigma_{\bm{x}}=\sigma_{\bm{x},\bm{x}}=\sigma_{\bm{z},\bm{x}}$, $\sigma_{\bm{z}}=\sigma_{\bm{x},\bm{z}}=\sigma_{\bm{z},\bm{z}}$, $\hsig_{\bm{x}}=\hsig_{\bm{x},\bm{x}}=\hsig_{\bm{z},\bm{x}}$ and $\hsig_{\bm{z}}=\hsig_{\bm{x},\bm{z}}=\hsig_{z,z}$, where $\sigma_{\bm{x}}$ and $\sigma_{\bm{z}}$ are, respectively, the true scales of $\tx$ and $\tz$, and $\hsig_{\bm{x}}$ and $\hsig_{\bm{z}}$ are, respectively, standard deviations of $\tx$ and $\tz$ estimated from shadow models. Similarly to the case of LiRA, these assumptions are justified using the simplified model.
We have in the simplified model
\begin{align}
    \sigma_{\bm{x},\bm{x}}^2 &= \hsig_{\bm{x},\bm{x}}^2 = \Var(s_{y_{\bm{x}}}^{(\bm{x})}(\bm{x})) = \frac{1}{S}\lp1 - \frac{1}{S}\rp \bm{x}^T\Sigma \bm{x} \\
    \sigma_{\bm{z},\bm{x}}^2 &= \hsig_{\bm{z},\bm{x}}^2 = \Var(s_{y_{\bm{x}}}^{(\bm{z})}(\bm{x})) = \begin{cases}
        \frac{1}{S}\lp 1 - \frac{1}{S} \rp \bm{x}^T\Sigma \bm{x} \quad &\text{ if } y_{\bm{x}} = y_{\bm{z}} \\
        \frac{1}{S}\bm{x}^T\Sigma \bm{x} \quad &\text{ if } y_{\bm{x}} \not= y_{\bm{z}}
    \end{cases} \\
    \sigma_{\bm{x},\bm{z}}^2 &= \hsig_{\bm{x},\bm{z}}^2 = \Var(s_{y_{\bm{z}}}^{(\bm{x})}(\bm{z})) = \begin{cases}
        \frac{1}{S}\lp 1 - \frac{1}{S} \rp \bm{z}^T\Sigma \bm{z} \quad &\text{ if } y_{\bm{x}} = y_{\bm{z}} \\
        \frac{1}{S}\bm{z}^T\Sigma \bm{z} \quad &\text{ if } y_{\bm{x}} \not= y_{\bm{z}}
    \end{cases} \\
    \sigma_{\bm{z},\bm{z}}^2 &= \hsig_{\bm{z},\bm{z}}^2 = \Var(s_{y_{\bm{z}}}^{(\bm{z})}(\bm{z})) = \frac{1}{S}\lp 1 - \frac{1}{S} \rp \bm{z}^T\Sigma \bm{z}.
\end{align}
Therefore, the differences $\sigma_{\bm{x},\bm{x}}-\sigma_{\bm{z},\bm{x}}$, $\sigma_{\bm{x},\bm{z}}-\sigma_{\bm{z},\bm{z}}$, $\hsig_{\bm{x},\bm{x}}-\hsig_{\bm{z},\bm{x}}$ and $\hsig_{\bm{x},\bm{z}}-\hsig_{\bm{z},\bm{z}}$ are negligible for large enough $S$.

Now we derive the per-example RMIA vulnerability in terms of RMIA statistics computed from shadow models.

\begin{restatable}[Per-example RMIA vulnerability]{lemma}{lemmaPerExampleRMIA}\label{lemma:per-example-RMIA}
    Suppose that the true IN/OUT distributions of $\tx$ and $\tz$ are of location-scale family with locations $\mu_{\bm{x},\bm{x}}, \mu_{\bm{z},\bm{x}}, \mu_{\bm{x},\bm{z}}, \mu_{\bm{z},\bm{z}}$ and scales $\sigma_{\bm{x}}, \sigma_{\bm{z}}$. 
    Assume that RMIA models $\tx$ and $\tz$ by normal distributions with parameters computed from shadow models, and that an attacker has access to the underlying distribution $\sD$. Then with infinitely many shadow models, the RMIA vulnerability of a fixed target example $(\bm{x}, y_{\bm{x}})$ satisfies
    \begin{equation}
        \tpr_\rmia(\bm{x}) \leq 1 - F_{|Z|}\lp F_{|Z|}^{-1}(1 - \alpha) - \frac{\E_{(\bm{z}, y_{\bm{z}}) \sim \sD}[|q|]}{\E_{(\bm{z}, y_{\bm{z}}) \sim \sD}[|A|]} \rp
    \end{equation}
    for some constant $\alpha \geq \fpr_\rmia(\bm{x})$, where $F_{|Z|}$ is the cdf of $|Z|$ and
    \begin{align}
        q &= \frac{(\mu_{\bm{x},\bm{x}} - \mu_{\bm{z},\bm{x}})(\hmu_{\bm{x},\bm{x}} - \hmu_{\bm{z},\bm{x}})}{\hsig_{\bm{x}}^2} + \frac{(\mu_{\bm{x},\bm{z}} - \mu_{\bm{z},\bm{z}})(\hmu_{\bm{x},\bm{z}} - \hmu_{\bm{z},\bm{z}})}{\hsig_{\bm{z}}^2} \label{eq:rmia-px-param-q}\\
        A &= \frac{\sigma_{\bm{x}}}{\hsig_{\bm{x}}^2}(\hmu_{\bm{x},\bm{x}} - \hmu_{\bm{z},\bm{x}}) + \frac{\sigma_{\bm{z}}}{\hsig_{\bm{z}}^2}(\hmu_{\bm{x},\bm{z}} - \hmu_{\bm{z},\bm{z}}) \label{eq:rmia-px-param-A},
    \end{align}
    assuming that the inverse of $F_{|Z|}$ exists.
\end{restatable}
\begin{proof}
    We abuse notations by denoting probabilities and expectations over sampling $\dtarget \sim \sD^n$ and $(\bm{z}, y_{\bm{z}}) \sim \sD$ by subscripts $\dtarget$ and $\bm{z}$. We have in the limit of infinitely many shadow models
    \begin{align}
        \tpr_\rmia(\bm{x}) &= \Pr_{\dtarget} \lp\Pr_{\bm{z}}\lp \mathrm{LR}(\bm{x}, \bm{z}) \geq \gamma \rp \geq \tau \mid (\bm{x}, y_{\bm{x}}) \in \dtarget \wedge (\bm{z}, y_{\bm{z}}) \notin \dtarget\rp \\
        \fpr_\rmia(\bm{x}) &= \Pr_{\dtarget} \lp\Pr_{\bm{z}}\lp \mathrm{LR}(\bm{x}, \bm{z}) \geq \gamma \rp \geq \tau \mid (\bm{x}, y_{\bm{x}}) \notin \dtarget \wedge (\bm{z}, y_{\bm{z}}) \in \dtarget\rp,
    \end{align}
    where
    \begin{equation}
        \LR(\bm{x}, \bm{z}) = \frac{\mathcal{N}(\tx; \hmu_{\bm{x},\bm{x}}, \hsig_{\bm{x}}^2) \mathcal{N}(\tz; \hmu_{\bm{x},\bm{z}}, \hsig_{\bm{z}}^2)}{\mathcal{N}(\tx; \hmu_{\bm{z},\bm{x}}, \hsig_{\bm{x}}^2) \mathcal{N}(\tz; \hmu_{\bm{z},\bm{z}}, \hsig_{\bm{z}}^2)}.
    \end{equation}
    Note that the probabilities are over target dataset sampling in the limit of infinitely many shadow models. We have
    \begin{align}
        \lambda :=& \log(\mathrm{LR}(\bm{x},\bm{z})) \\
        =& \log\lp \frac{
        \frac{1}{\sqrt{2\pi}\hsig_{\bm{x}}}\exp\lp -\frac{1}{2}\lp \frac{\tx-\hmu_{\bm{x},\bm{x}}}{\hsig_{\bm{x}}} \rp^2 \rp \frac{1}{\sqrt{2\pi}\hsig_{\bm{z}}} \exp\lp -\frac{1}{2} \lp \frac{\tz-\hmu_{\bm{x},\bm{z}}}{\hsig_{\bm{z}}} \rp^2 \rp
        }{
        \frac{1}{\sqrt{2\pi}\hsig_{\bm{x}}}\exp\lp -\frac{1}{2}\lp \frac{\tx-\hmu_{\bm{z},\bm{x}}}{\hsig_{\bm{x}}} \rp^2 \rp \frac{1}{\sqrt{2\pi}\hsig_{\bm{z}}} \exp \lp -\frac{1}{2}\lp \frac{\tz-\hmu_{\bm{z},\bm{z}}}{\hsig_{\bm{z}}} \rp^2 \rp
        } \rp\\
        =& -\frac{1}{2}\lp \frac{\tx-\hmu_{\bm{x},\bm{x}}}{\hsig_{\bm{x}}} \rp^2 + \frac{1}{2}\lp \frac{\tx-\hmu_{\bm{z},\bm{x}}}{\hsig_{\bm{x}}} \rp^2 - \frac{1}{2}\lp \frac{\tz-\hmu_{\bm{x},\bm{z}}}{\hsig_{\bm{z}}} \rp^2 + \frac{1}{2}\lp \frac{\tz-\hmu_{\bm{z},\bm{z}}}{\hsig_{\bm{z}}} \rp^2 \\
        =& \frac{1}{2\hsig_{\bm{x}}^2}(2\tx\hmu_{\bm{x},\bm{x}} - \hmu_{\bm{x},\bm{x}}^2 - 2\tx\hmu_{\bm{z},\bm{x}} + \hmu_{\bm{z},\bm{x}}^2) + \frac{1}{2\hsig_{\bm{z}}^2}(2\tz\hmu_{\bm{x},\bm{z}} - \hmu_{\bm{x},\bm{z}}^2 - 2\tz\hmu_{\bm{z},\bm{z}} + \hmu_{\bm{z},\bm{z}}^2) \\
        =& \frac{\hmu_{\bm{x},\bm{x}} - \hmu_{\bm{z},\bm{x}}}{2\hsig_{\bm{x}}^2}(2\tx - \hmu_{\bm{x},\bm{x}} - \hmu_{\bm{z},\bm{x}}) + \frac{\hmu_{\bm{x},\bm{z}} - \hmu_{\bm{z},\bm{z}}}{2\hsig_{\bm{z}}^2}(2\tz - \hmu_{\bm{x},\bm{z}} - \hmu_{\bm{z},\bm{z}}). \label{eq:lambda}
    \end{align}
    When $(\bm{x}, y_{\bm{x}}) \in \dtarget$ and $(\bm{z}, y_{\bm{z}}) \notin \dtarget$, $\tx = \mu_{\bm{x},\bm{x}} + \sigma_{\bm{x}}Z$ and $\tz = \mu_{\bm{x},\bm{z}} + \sigma_{\bm{z}}Z$. Thus, $\lambda$ becomes
    \begin{align}
        \lambda_{\bm{x}} :=& \frac{\hmu_{\bm{x},\bm{x}} - \hmu_{\bm{z},\bm{x}}}{2\hsig_{\bm{x}}^2}(2\mu_{\bm{x},\bm{x}} + 2\sigma_{\bm{x}}Z - \hmu_{\bm{x},\bm{x}} - \hmu_{\bm{z},\bm{x}}) \\
        & + \frac{\hmu_{\bm{x},\bm{z}} - \hmu_{\bm{z},\bm{z}}}{2\hsig_{\bm{z}}^2}(2\mu_{\bm{x},\bm{z}} + 2\sigma_{\bm{z}}Z - \hmu_{\bm{x},\bm{z}} - \hmu_{\bm{z},\bm{z}}).
    \end{align}
    Similarly, when $(x, y_{\bm{x}}) \notin \dtarget$ and $(z, y_{\bm{z}}) \in \dtarget$, $\tx = \mu_{\bm{z},\bm{x}} + \sigma_{\bm{x}}Z$ and $\tz = \mu_{\bm{z},\bm{x}} + \sigma_{\bm{z}}Z$. Then $\lambda$ becomes
    \begin{align}
        \lambda_{\bm{z}} :=& \frac{\hmu_{\bm{x},\bm{x}} - \hmu_{\bm{z},\bm{x}}}{2\hsig_{\bm{x}}^2}(2\mu_{\bm{z},\bm{x}} + 2\sigma_{\bm{x}}Z - \hmu_{\bm{x},\bm{x}} - \hmu_{\bm{z},\bm{x}}) \\
        & + \frac{\hmu_{\bm{x},\bm{z}} - \hmu_{\bm{z},\bm{z}}}{2\hsig_{\bm{z}}^2}(2\mu_{\bm{z},\bm{z}} + 2\sigma_{\bm{z}}Z - \hmu_{\bm{x},\bm{z}} - \hmu_{\bm{z},\bm{z}}) \\
        =& \lp \underbrace{ \frac{\sigma_{\bm{x}}}{\hsig_{\bm{x}}^2}(\hmu_{\bm{x},\bm{x}} - \hmu_{\bm{z},\bm{x}}) + \frac{\sigma_{\bm{z}}}{\hsig_{\bm{z}}^2}(\hmu_{\bm{x},\bm{z}} - \hmu_{\bm{z},\bm{z}}) }_A \rp Z \\
        & + \underbrace{ \frac{\hmu_{\bm{x},\bm{x}} - \hmu_{\bm{z},\bm{x}}}{2\hsig_{\bm{x}}^2}(2\mu_{\bm{z},\bm{x}} - \hmu_{\bm{x},\bm{x}} - \hmu_{\bm{z},\bm{x}}) + \frac{\hmu_{\bm{x},\bm{z}} - \hmu_{\bm{z},\bm{z}}}{2\hsig_{\bm{z}}^2}(2\mu_{\bm{z},\bm{z}} - \hmu_{\bm{x},\bm{z}} - \hmu_{\bm{z},\bm{z}}) }_B \\
        =& AZ+B.
    \end{align}
    Notice that $A$ and $B$ are functions of $\bm{z}$ and independent of $Z$. Also note that taking probability over $Z$ corresponds to calculating probability over $\dtarget \sim \sD^n$. Thus, since we can set $\gamma > 1$, using Markov's inequality and the triangle inequality, we have in the limit of infinitely many shadow models
    \begin{align}
        \fpr_\rmia(x) = &\Pr_Z \lp\Pr_{\bm{z}}(\lambda_{\bm{z}} \geq \log\gamma) \geq \tau\rp \\
        &\leq \Pr_Z\lp\Pr_{\bm{z}}(|\lambda_{\bm{z}}| \geq \log\gamma) \geq \tau \rp \\
        &\leq \Pr_Z\lp\frac{\E_{\bm{z}}[|\lambda_{\bm{z}}|]}{\log\gamma}\geq \tau\rp \\ &=\Pr_Z\lp\E_{\bm{z}}[|\lambda_{\bm{z}}|]\geq \tau\log\gamma\rp \\
        &\leq \Pr_Z\lp\E_{\bm{z}}[|A|]|Z| + \E_{\bm{z}}[|B|]\geq \tau\log\gamma\rp \\
        &= \Pr_Z\lp|Z| \geq \frac{\tau\log\gamma - \E_{\bm{z}}[|B|]}{\E_{\bm{z}}[|A|]} \rp
    \end{align}
    Therefore, we can upper-bound $\fpr_\rmia(x)\leq\alpha$ by setting
    \begin{equation}
        \alpha = 1 - F_{|Z|}\lp\frac{\tau\log\gamma - \E_{\bm{z}}[|B|]}{\E_{\bm{z}}[|A|]}\rp.
    \end{equation}
    That is,
    \begin{equation}
        \frac{\tau\log\gamma - \E_{\bm{z}}[|B|]}{\E_{\bm{z}}[|A|]} = F_{|Z|}^{-1}(1-\alpha).
    \end{equation}
    Now let
    \begin{equation}
        q = \lambda_{\bm{x}} - \lambda_{\bm{z}} = \frac{(\mu_{\bm{x},\bm{x}} - \mu_{\bm{z},\bm{x}})(\hmu_{\bm{x},\bm{x}} - \hmu_{\bm{z},\bm{x}})}{\hsig_{\bm{x}}^2} + \frac{(\mu_{\bm{x},\bm{z}} - \mu_{\bm{z},\bm{z}})(\hmu_{\bm{x},\bm{z}} - \hmu_{\bm{z},\bm{z}})}{\hsig_{\bm{z}}^2}.
    \end{equation}
    Note that $q$ is also independent of $t$, thereby $\E_{\bm{z}}[|q|]$ being a constant. By the similar argument using Markov's inequality and the triangle inequality, we have
    \begin{align}
        \tpr_\rmia(\bm{x}) &= \Pr_Z \lp\Pr_{\bm{z}}(\lambda_{\bm{x}} \geq \log\gamma) \geq \tau\rp \\
        &\leq \Pr_Z\lp\Pr_{\bm{z}}(|\lambda_{\bm{x}}| \geq \log\gamma) \geq \tau \rp \\
        &\leq \Pr_Z\lp\frac{\E_{\bm{z}}[|\lambda_{\bm{x}}|]}{\log\gamma}\geq \tau\rp \\ &=\Pr_Z\lp\E_{\bm{z}}[|\lambda_{\bm{x}}|]\geq \tau\log\gamma\rp \\
        &\leq \Pr_Z\lp\E_{\bm{z}}[|A|]|Z| + \E_{\bm{z}}[|B|] + \E_{\bm{z}}[|q|] \geq \tau\log\gamma\rp \\
        &= \Pr_Z\lp|Z| \geq \frac{\tau\log\gamma - \E_{\bm{z}}[|B|]}{\E_{\bm{z}}[|A|]} - \frac{\E_{\bm{z}}[|q|]}{\E_{\bm{z}}[|A|]} \rp.
    \end{align}
    Hence we obtain
    \begin{equation}
        \tpr_\rmia(x) \leq 1 - F_{|Z|}^{-1}\lp F_{|Z|}^{-1}(1-\alpha) - \frac{\E_{\bm{z}}[|q|]}{\E_{\bm{z}}[|A|]} \rp.
    \end{equation}
\end{proof}

Note that, unlike per-example LiRA, we must assume that the attacker has access to the underlying distribution for the optimal RMIA as the \cref{eq:rmia-px-param-q,eq:rmia-px-param-A} depend on the parameters computed from shadow models.

\subsection{RMIA power-law}

Employing Lemma \ref{lemma:per-example-RMIA} and the simplified model, we obtain the following upper bound for RMIA performance.
\begin{restatable}[Per-example RMIA power-law]{theorem}{theoremRMIAPowerLaw}\label{theorem:RMIA-power-law}
    Fix a target example $(\bm{x},y_{\bm{x}})$. For the simplified model with large $S$ and infinitely many shadow models, the per-example RMIA vulnerability is given as
    \begin{equation}
        \tpr_\rmia(\bm{x}) \leq 1 - F_{|Z|}\lp F_{|Z|}^{-1}(1 - \alpha) - \frac{\Psi(\bm{x}, C)}{\sqrt{S}} \rp,
    \end{equation}
    where $\alpha \geq \fpr_\rmia(\bm{x})$, $F_{|Z|}$ is the cdf of the standard folded normal distribution, and
    \begin{equation}
        \Psi(\bm{x}, C) = \frac{
        \E_{\bm{z}}\left[\frac{\langle \bm{x}, \bm{x} - \bm{z}\rangle^2}{||\bm{x}||^2} + \frac{\langle \bm{z}, \bm{x} - \bm{z}\rangle^2}{||\bm{z}||^2} \mid y_{\bm{z}} = y_{\bm{x}} \right] + (C-1)\E_{\bm{z}}\left[\frac{\langle \bm{x}, \bm{x} - \bm{m}_{\bm{x}}\rangle^2}{||\bm{x}||^2} + \frac{\langle \bm{z}, \bm{z} - \bm{m}_{\bm{z}}\rangle^2}{||\bm{z}||^2} \mid y_{\bm{z}} \not= y_{\bm{x}}\right]
        }{
        \E_{\bm{z}}\left[\left\vert\frac{\langle \bm{x}, \bm{x} - \bm{z}\rangle}{||\bm{x}||} + \frac{\langle \bm{z}, \bm{x} - \bm{z}\rangle}{||\bm{z}||}\right\vert \mid y_{\bm{z}} = y_{\bm{x}}\right] + (C-1)\E_{\bm{z}}\left[\left\vert\frac{\langle \bm{x}, \bm{x} - \bm{m}_{\bm{x}}\rangle}{||\bm{x}||} + \frac{\langle \bm{z}, \bm{z} - \bm{m}_{\bm{z}}\rangle}{||\bm{z}||}\right\vert \mid y_{\bm{z}} \not= y_{\bm{x}}\right]
        }.
    \end{equation}
    In addition, we have
    \begin{equation}
        \log(\tpr_\rmia(\bm{x}) - \fpr_\rmia(\bm{x})) \approx -\frac{1}{2}\log S - \frac{1}{2}F_{|Z|}^{-1}(1 - \alpha)^2 + \log \frac{\Psi(\bm{x},C)}{\sqrt{\pi/2}}.
    \end{equation}
\end{restatable}

\begin{proof}
    To apply \cref{lemma:per-example-RMIA}, we will calculate $\E_{\bm{z}}[|q|]$ and $\E_{\bm{z}}[|A|]$. Let $\dtarget = \{(\bm{x}_{j,1}, j),\dots,(\bm{x}_{j,S}, j)\}_{j=1}^C$. Let $s_{y_{\bm{a}}}^{(\bm{b})}(\bm{a})$ denote the score of $(\bm{a}, y_{\bm{a}})$ when $\dtarget$ contains $(\bm{b}, y_{\bm{b}})$ but not the other example. Using similar argument as in the proof of \cref{theorem:LiRA-power-law}, we have
    \begin{align}
        s_{y_{\bm{x}}}^{(\bm{x})}(\bm{x}) &= \frac{1}{S}\langle \bm{x}, \sum_{i=1}^S \bm{x}_{y_{\bm{x}}, i} + \bm{x} - \bm{x}_{y_{\bm{x}}, S}\rangle \\
        s_{y_{\bm{x}}}^{(\bm{z})}(\bm{x}) &= \begin{cases}
            \frac{1}{S}\langle \bm{x}, \sum_{i=1}^S \bm{x}_{y_{\bm{x}}, i} + \bm{z} - \bm{x}_{y_{\bm{x}}, S}\rangle \quad &\text{ if } y_{\bm{x}} = y_{\bm{z}} \\
            \frac{1}{S}\langle \bm{x}, \sum_{i=1}^S \bm{x}_{y_{\bm{x}}, i}\rangle \quad &\text{ if } y_{\bm{x}} \not= y_{\bm{z}}
        \end{cases} \\
        s_{y_{\bm{z}}}^{(\bm{x})}(\bm{z}) &= \begin{cases}
            \frac{1}{S}\langle \bm{z}, \sum_{i=1}^{S} \bm{x}_{y_{\bm{x}}, i} + \bm{x} - \bm{x}_{y_{\bm{x}}, S}\rangle \quad &\text{ if } y_{\bm{x}} = y_{\bm{z}} \\
            \frac{1}{S}\langle \bm{z}, \sum_{i=1}^{S} \bm{x}_{y_{\bm{z}}, i}\rangle \quad &\text{ if } y_{\bm{x}} \not= y_{\bm{z}}
        \end{cases} \\
        s_{y_{\bm{z}}}^{(\bm{z})}(\bm{z}) &= \frac{1}{S}\langle \bm{z}, \sum_{i=1}^{S} \bm{x}_{y_{\bm{z}}, i} + \bm{z} - \bm{x}_{y_{\bm{z}}, S}\rangle.
    \end{align}
    Thus we obtain
    \begin{align}
        \mu_{\bm{x},\bm{x}} &= \langle \bm{x}, \bm{m}_{y_{\bm{x}}}\rangle + \frac{1}{S}\langle \bm{x}, \bm{x} - \bm{m}_{y_{\bm{x}}}\rangle \\
        \mu_{\bm{z},\bm{x}} &= \begin{cases}
            \langle \bm{x}, \bm{m}_{y_{\bm{x}}}\rangle + \frac{1}{S}\langle \bm{x}, \bm{z} - \bm{m}_{y_{\bm{x}}}\rangle \quad &\text{ if } y_{\bm{x}} = y_{\bm{z}} \\
            \langle \bm{x}, \bm{m}_{y_{\bm{x}}}\rangle \quad &\text{ if } y_{\bm{x}} \not= y_{\bm{z}}
        \end{cases} \\
        \mu_{\bm{x},\bm{z}} &= \begin{cases}
            \langle \bm{z}, \bm{m}_{y_{\bm{x}}}\rangle + \frac{1}{S}\langle \bm{z}, \bm{x} - \bm{m}_{y_{\bm{x}}}\rangle \quad &\text{ if } y_{\bm{x}} = y_{\bm{z}} \\
            \langle \bm{z}, \bm{m}_{y_{\bm{z}}}\rangle \quad &\text{ if } y_{\bm{x}} \not= y_{\bm{z}}
        \end{cases} \\
        \mu_{\bm{z},\bm{z}} &= \langle \bm{z}, \bm{m}_{y_{\bm{x}}}\rangle + \frac{1}{S}\langle \bm{z}, \bm{z} - \bm{m}_{y_{\bm{x}}}\rangle \quad \\
        \sigma_{\bm{x}} &= \frac{1}{\sqrt{S}}\sqrt{\bm{x}^T\Sigma \bm{x}} \\
        \sigma_{\bm{z}} &= \frac{1}{\sqrt{S}}\sqrt{\bm{z}^T\Sigma \bm{z}},
    \end{align}
    where $\bm{m}_{y_{\bm{z}}}$ is the true class mean of class $y_{\bm{z}}$.

    Now recall that
    \begin{align}
        q =& \frac{(\mu_{\bm{x},\bm{x}} - \mu_{\bm{z},\bm{x}})(\hmu_{\bm{x},\bm{x}} - \hmu_{\bm{z},\bm{x}})}{\hsig_{\bm{x}}^2} + \frac{(\mu_{\bm{x},\bm{z}} - \mu_{\bm{z},\bm{z}})(\hmu_{\bm{x},\bm{z}} - \hmu_{\bm{z},\bm{z}})}{\hsig_{\bm{z}}^2} \\
        A =& \frac{\sigma_{\bm{x}}}{\hsig_{\bm{x}}^2}(\hmu_{\bm{x},\bm{x}} - \hmu_{\bm{z},\bm{x}}) + \frac{\sigma_{\bm{z}}}{\hsig_{\bm{z}}^2}(\hmu_{\bm{x},\bm{z}} - \hmu_{\bm{z},\bm{z}}).
    \end{align}
    Since $\tx$ and $\tz$ follow normal distributions, the location and scale parameters of the true distributions correspond to the mean and standard deviations estimated from infinitely many shadow models, respectively. Thus, we have
    \begin{align}
        q =& \lp \frac{\mu_{\bm{x},\bm{x}} - \mu_{\bm{z},\bm{x}}}{\sigma_{\bm{x}}} \rp^2 + \lp \frac{\mu_{\bm{x},\bm{z}} - \mu_{\bm{z},\bm{z}}}{\sigma_{\bm{z}}} \rp^2 \\
        A =& \frac{\mu_{\bm{x},\bm{x}} - \mu_{\bm{z},\bm{x}}}{\sigma_{\bm{x}}} + \frac{\mu_{\bm{x},\bm{z}} - \mu_{\bm{z},\bm{z}}}{\sigma_{\bm{z}}}.
    \end{align}
    Using the law of total expectation, we have
    \begin{align}
        \E_{\bm{z}}[|q|] =& \Pr_{\bm{z}}(y_{\bm{z}}=y_{\bm{x}})\E_{\bm{z}}[|q| \mid y_{\bm{z}} = y_{\bm{x}}] + \sum_{j=1, j\not=y_{\bm{x}}}^C \Pr_{\bm{z}}(y_{\bm{z}}=j)\E_{\bm{z}}[|q| \mid y_{\bm{z}} = j] \\
        =& \frac{1}{C}\E_{\bm{z}}\left[\left.\lp\frac{\langle \bm{x}, \bm{x} - \bm{z}\rangle}{\sqrt{S}\sqrt{\bm{x}^T\Sigma \bm{x}}}\rp^2 + \lp\frac{\langle \bm{z}, \bm{x} - \bm{z}\rangle}{\sqrt{S}\sqrt{\bm{z}^T\Sigma \bm{z}}}\rp^2 \ \right\vert\ y_{\bm{z}} = y_{\bm{x}} \right] \\
        &+ \frac{C-1}{C}\E_{\bm{z}}\left[\left.\lp\frac{\langle \bm{x}, \bm{x} - \bm{m}_{y_{\bm{x}}}\rangle}{\sqrt{S}\sqrt{\bm{x}^T\Sigma \bm{x}}}\rp^2 + \lp\frac{\langle \bm{z}, \bm{z} - \bm{m}_{y_{\bm{z}}}\rangle}{\sqrt{S}\sqrt{\bm{z}^T\Sigma \bm{z}}}\rp^2 \ \right\vert\ y_{\bm{z}} \not= y_{\bm{x}}\right] \\
        =& \frac{1}{CS} \E_{\bm{z}}\left[\left.\frac{\langle \bm{x}, \bm{x} - \bm{z}\rangle^2}{\bm{x}^T\Sigma \bm{x}} + \frac{\langle \bm{z}, \bm{x} - \bm{z}\rangle^2}{\bm{z}^T\Sigma \bm{z}}\ \right\vert\ y_{\bm{z}} = y_{\bm{x}}\right] \\
        &+ \frac{C-1}{CS}\E_{\bm{z}}\left[\left.\frac{\langle \bm{x}, \bm{x} - \bm{m}_{y_{\bm{x}}}\rangle^2}{\bm{x}^T\Sigma \bm{x}} + \frac{\langle \bm{z}, \bm{z} - \bm{m}_{y_{\bm{z}}}\rangle^2}{\bm{z}^T\Sigma \bm{z}}\ \right\vert\ y_{\bm{z}} \not= y_{\bm{x}}\right],
    \end{align}
    and
    \begin{align}
        \E_{\bm{z}}[|A|] =& \Pr_{\bm{z}}(y_{\bm{z}}=y_{\bm{x}})\E_{\bm{z}}[|A| \mid y_{\bm{z}} = y_{\bm{x}}] + \sum_{j=1, j\not=y_{\bm{x}}}^C \Pr_{\bm{z}}(y_{\bm{z}}=j)\E_{\bm{z}}[|A| \mid y_{\bm{z}} = j] \\
        =& \frac{1}{C}\E_{\bm{z}}\left[\left.\left\vert\frac{\langle \bm{x}, \bm{x} - \bm{z}\rangle}{\sqrt{S}\sqrt{\bm{x}^T\Sigma \bm{x}}} + \frac{\langle \bm{z}, \bm{x} - \bm{z}\rangle}{\sqrt{S}\sqrt{\bm{z}^T\Sigma \bm{z}}}\right\vert \ \right\vert\ y_{\bm{z}} = y_{\bm{x}} \right] \\
        &+ \frac{C-1}{C}\E_{\bm{z}}\left[\left.\left\vert\frac{\langle \bm{x}, \bm{x} - \bm{m}_{\bm{x}}\rangle}{\sqrt{S}\sqrt{\bm{x}^T\Sigma \bm{x}}} + \frac{\langle \bm{z}, \bm{z} - \bm{m}_{\bm{z}}\rangle}{\sqrt{S}\sqrt{\bm{z}^T\Sigma \bm{z}}}\right\vert \ \right\vert\ y_{\bm{z}} \not= y_{\bm{x}}\right] \\
        =& \frac{1}{C\sqrt{S}} \E_{\bm{z}}\left[\left.\left\vert\frac{\langle \bm{x}, \bm{x} - \bm{z}\rangle}{\sqrt{\bm{x}^T\Sigma \bm{x}}} + \frac{\langle \bm{z}, \bm{x} - \bm{z}\rangle}{\sqrt{\bm{z}^T\Sigma \bm{z}}}\right\vert\ \right\vert\ y_{\bm{z}} = y_{\bm{x}}\right] \\
        &+ \frac{C-1}{C\sqrt{S}}\E_{\bm{z}}\left[\left.\left\vert\frac{\langle \bm{x}, \bm{x} - \bm{m}_{\bm{x}}\rangle}{\sqrt{\bm{x}^T\Sigma \bm{x}}} + \frac{\langle \bm{z}, \bm{z} - \bm{m}_{\bm{z}}\rangle}{\sqrt{\bm{z}^T\Sigma \bm{z}}}\right\vert\ \right\vert\ y_{\bm{z}} \not= y_{\bm{x}}\right].
    \end{align}
    Hence we obtain
    \begin{multline}
        \frac{\E_{\bm{z}}[|q|]}{\E_{\bm{z}}[|A|]} \\
        =\frac{1}{\sqrt{S}} \cdot \frac{
        \E_{\bm{z}}\left[\frac{\langle \bm{x}, \bm{x} - \bm{z}\rangle^2}{\bm{x}^T\Sigma \bm{x}} + \frac{\langle \bm{z}, \bm{x} - \bm{z}\rangle^2}{\bm{z}^T\Sigma \bm{z}} \mid y_{\bm{z}} = y_{\bm{x}} \right] + (C-1)\E_{\bm{z}}\left[\frac{\langle \bm{x}, \bm{x} - \bm{m}_{\bm{x}}\rangle^2}{\bm{x}^T\Sigma \bm{x}} + \frac{\langle \bm{z}, \bm{z} - \bm{m}_{\bm{z}}\rangle^2}{\bm{z}^T\Sigma \bm{z}} \mid y_{\bm{z}} \not= y_{\bm{x}}\right]
        }{
        \E_{\bm{z}}\left[\left\vert\frac{\langle \bm{x}, \bm{x} - \bm{z}\rangle}{\sqrt{\bm{x}^T\Sigma \bm{x}}} + \frac{\langle \bm{z}, \bm{x} - \bm{z}\rangle}{\sqrt{\bm{z}^T\Sigma \bm{z}}}\right\vert \mid y_{\bm{z}} = y_{\bm{x}}\right] + (C-1)\E_{\bm{z}}\left[\left\vert\frac{\langle \bm{x}, \bm{x} - \bm{m}_{\bm{x}}\rangle}{\sqrt{\bm{x}^T\Sigma \bm{x}}} + \frac{\langle \bm{z}, \bm{z} - \bm{m}_{\bm{z}}\rangle}{\sqrt{\bm{z}^T\Sigma \bm{z}}}\right\vert \mid y_{\bm{z}} \not= y_{\bm{x}}\right]
        }.
    \end{multline}
    Now \cref{lemma:per-example-RMIA} yields
    \begin{align}
        \tpr_\rmia(\bm{x}) &\leq 1 - F_{|Z|}\lp F_{|Z|}^{-1}(1 - \alpha) - \frac{\E_{\bm{z}}[|q|]}{\E_{\bm{z}}[|A|]} \rp \\
        &= 1 - F_{|Z|}\lp F_{|Z|}^{-1}(1 - \alpha) - \frac{\Psi(\bm{x}, C)}{\sqrt{S}} \rp,
    \end{align}
    where $F_{|Z|}$ is the cdf of the folded normal distribution and
    \begin{equation}
        \Psi(\bm{x}, C) = \frac{
        \E_{\bm{z}}\left[\frac{\langle \bm{x}, \bm{x} - \bm{z}\rangle^2}{\bm{x}^T\Sigma \bm{x}} + \frac{\langle \bm{z}, \bm{x} - \bm{z}\rangle^2}{\bm{z}^T\Sigma \bm{z}} \mid y_{\bm{z}} = y_{\bm{x}} \right] + (C-1)\E_{\bm{z}}\left[\frac{\langle \bm{x}, \bm{x} - \bm{m}_{\bm{x}}\rangle^2}{\bm{x}^T\Sigma \bm{x}} + \frac{\langle \bm{z}, \bm{z} - \bm{m}_{\bm{z}}\rangle^2}{\bm{z}^T\Sigma \bm{z}} \mid y_{\bm{z}} \not= y_{\bm{x}}\right]
        }{
        \E_{\bm{z}}\left[\left\vert\frac{\langle \bm{x}, \bm{x} - \bm{z}\rangle}{\sqrt{\bm{x}^T\Sigma \bm{x}}} + \frac{\langle \bm{z}, \bm{x} - \bm{z}\rangle}{\sqrt{\bm{z}^T\Sigma \bm{z}}}\right\vert \mid y_{\bm{z}} = y_{\bm{x}}\right] + (C-1)\E_{\bm{z}}\left[\left\vert\frac{\langle \bm{x}, \bm{x} - \bm{m}_{\bm{x}}\rangle}{\sqrt{\bm{x}^T\Sigma \bm{x}}} + \frac{\langle \bm{z}, \bm{z} - \bm{m}_{\bm{z}}\rangle}{\sqrt{\bm{z}^T\Sigma \bm{z}}}\right\vert \mid y_{\bm{z}} \not= y_{\bm{x}}\right]
        }.
    \end{equation}
    This completes the first half of the theorem.

    Now that from \cref{lemma:per-example-RMIA} we have
    \begin{align}
        \tpr_\rmia(\bm{x}) &= \Pr_Z\lp \Pr_{\bm{z}}(\lambda_{\bm{z}} + q \geq \log\gamma) \geq \tau \rp \\
        &\leq \Pr_Z\lp\frac{\E_{\bm{z}}[|\lambda_{\bm{z}}|] + \E_{\bm{z}}[|q|]}{\log\gamma} \geq \tau\rp \label{eq:tpr_inequality} \\
        &= \Pr_Z\lp |Z| \geq F_{|Z|}^{-1}(1 - \alpha) - \frac{\E_{\bm{z}}[|q|]}{\E_{\bm{z}}[|A|]} \rp \\
        \fpr_\rmia(\bm{x}) &= \Pr_Z\lp \Pr_{\bm{z}}(\lambda_{\bm{z}} \geq \log\gamma) \geq \tau\rp \\ &\leq\Pr_Z\lp\frac{\E_{\bm{z}}[|\lambda_{\bm{z}}|]}{\log\gamma} \geq \tau\rp \label{eq:fpr_inequality} \\
        &= \Pr_Z\lp |Z| \geq F_{|Z|}^{-1}(1 - \alpha) \rp. 
    \end{align}
    We claim that the bound for $\fpr_\rmia(\bm{x})$ (\cref{eq:fpr_inequality}) is as tight as that for $\tpr_\rmia(\bm{x})$ (\cref{eq:tpr_inequality}) for sufficiently large $S$. 
    Let us denote
    \begin{align}
        \kappa_\tpr(\gamma) &= \Pr_{\bm{z}}(\lambda_{\bm{z}} + q \geq \log\gamma) \\
        \kappa_\fpr(\gamma) &= \Pr_{\bm{z}}(\lambda_{\bm{z}} \geq \log\gamma).
    \end{align}
    Since $q=O(1/S)$, for large $S$ we approximate
    \begin{equation}
        \kappa_\tpr(\gamma) - \kappa_\fpr(\gamma) \approx p_{\lambda_{\bm{z}}}(\log\gamma)\frac{c_0}{S},
    \end{equation}
    for some constant $c_0$. Since $\lambda_{\bm{z}} = O(1/\sqrt{S})$, the scaled random variable $\hat{\lambda}_{\bm{z}} = \sqrt{S}\lambda_{\bm{z}}$ is almost independent of $S$. Then by the change of variables formula, we have
    \begin{equation}
        p_{\lambda_{\bm{z}}}(\log\gamma)\frac{c_0}{S} = p_{\hat{\lambda}_{\bm{z}}}(\sqrt{S}\log\gamma)\frac{c_0}{\sqrt{S}}.
    \end{equation}
    Without loss of generality we may set $\log\gamma = 1/\sqrt{S}$. Thus, we have
    \begin{equation}
        \kappa_\tpr(1/\sqrt{S}) - \kappa_\fpr(1/\sqrt{S}) \approx p_{\hat{\lambda}_{\bm{z}}}\lp\sqrt{S} \cdot \frac{1}{\sqrt{S}}\rp\frac{c_0}{\sqrt{S}} = p_{\hat{\lambda}_{\bm{z}}}(1)\frac{c_0}{\sqrt{S}}.
    \end{equation}
    This quantity scales as $O(1/\sqrt{S})$. Therefore,
    \begin{align}
        \tpr_\rmia(\bm{x}) - \fpr_\rmia(\bm{x}) &= \Pr_t(\kappa_\tpr(1/\sqrt{S}) \geq \tau) - \Pr_t(\kappa_\fpr(1/\sqrt{S}) \geq \tau) \\
        &\approx p_{\kappa_\fpr(1/\sqrt{S})}(\tau)(\kappa_\fpr(1/\sqrt{S}) - \kappa_\tpr(1/\sqrt{S})) \\
        &\approx p_{\kappa_\fpr(1/\sqrt{S})}(\tau)p_{\hat{\lambda}_{\bm{z}}}(1)\frac{c_0}{\sqrt{S}}.
    \end{align}
    Note that $\kappa_\fpr(1/\sqrt{S})$ and, consequently, $\tau$ are independent of $S$ for large enough $S$ since
    \begin{equation}
        \kappa_\fpr(1/\sqrt{S}) = \Pr_{\bm{z}}\lp\lambda_{\bm{z}} \geq \frac{1}{\sqrt{S}}\rp = \Pr_{\bm{z}}\lp\frac{\hat{\lambda}_{\bm{z}}}{\sqrt{S}} \geq \frac{1}{\sqrt{S}}\rp = \Pr_{\bm{z}}(\hat{\lambda}_{\bm{z}} \geq 1).
    \end{equation}
    Hence $\tpr_\rmia(\bm{x}) - \fpr_\rmia(\bm{x}) = O(1/\sqrt{S})$. On the other hand, from \cref{eq:tpr_inequality,eq:fpr_inequality} we have for sufficiently large $S$
    \begin{align}
        &\Pr_Z\lp\frac{\E_{\bm{z}}[|\lambda_{\bm{z}}|]}{\log\gamma} \geq \tau\rp - \Pr_Z\lp\frac{\E_{\bm{z}}[|\lambda_{\bm{z}}|]}{\log\gamma} + \frac{\E_{\bm{z}}[|q|]}{\log\gamma} \geq \tau\rp \\
        =& \Pr_Z\lp \sqrt{S}\E_{\bm{z}}[|\lambda_{\bm{z}}|] \geq \tau\rp - \Pr_Z\lp \sqrt{S}\E_{\bm{z}}[|\lambda_{\bm{z}}|] + \frac{c_1}{\sqrt{S}} \geq \tau\rp\\
        =& \Pr_Z\lp \E_{\bm{z}}[|\hat{\lambda}_{\bm{z}}|] \geq \tau\rp - \Pr_Z\lp \E_{\bm{z}}[|\hat{\lambda}_{\bm{z}}|] + \frac{c_1}{\sqrt{S}} \geq \tau\rp\\
        \approx&p_{\E_{\bm{z}}[|\hat{\lambda}_{\bm{z}}|]}(\tau)\frac{c_1}{\sqrt{S}}, \label{eq:rmia-bound-gap}
    \end{align}
    where $c_1$ is some constant. Since for large $S$, $\E_{\bm{z}}[|\hat{\lambda}_{\bm{z}}|]$ and $\tau$ are independent of $S$, \cref{eq:rmia-bound-gap} scales as $O(1/\sqrt{S})$.

    Now let
    \begin{align}
        \tpr_\lira(\bm{x}) &= \Pr_Z\lp\frac{\E_{\bm{z}}[|\lambda_{\bm{z}}|] + \E_{\bm{z}}[|q|]}{\log\gamma} \geq \tau\rp - v_\tpr \\
        \fpr_\lira(\bm{x}) &= \Pr_Z\lp\frac{\E_{\bm{z}}[|\lambda_{\bm{z}}|]}{\log\gamma} \geq \tau\rp - v_\fpr
    \end{align}
    for some $v_\tpr, v_\fpr \geq 0$ which evaluate the tightness of the bounds. Then we have
    \begin{align}
        &v_\tpr - v_\fpr \\
        &= \Pr_Z\lp\frac{\E_{\bm{z}}[|\lambda_{\bm{z}}|] + \E_{\bm{z}}[|q|]}{\log\gamma} \geq \tau\rp - \Pr_Z\lp\frac{\E_{\bm{z}}[|\lambda_{\bm{z}}|]}{\log\gamma} \geq \tau\rp - (\tpr_\lira(\bm{x}) - \fpr_\lira(\bm{x})) \\
        &= O(1/\sqrt{S}).
    \end{align}
    Hence we conclude that for sufficiently large $S$, the bound for $\fpr_\lira(\bm{x})$ is as tight as that for $\tpr_\lira(\bm{x})$.

    Therefore, noting that $\E_{\bm{z}}[|q|]/\E_{\bm{z}}[|A|]=O(1/\sqrt{S})$, for large $S$ we obtain
    \begin{align}
        &\tpr_\rmia(\bm{x}) - \fpr_\rmia(\bm{x}) \\
        &\approx \Pr_Z\lp |Z| \geq F_{|Z|}^{-1}(1 - \alpha) - \frac{\E_{\bm{z}}[|q|]}{\E_{\bm{z}}[|A|]} \rp - \Pr_Z\lp |Z| \geq F_{|Z|}^{-1}(1 - \alpha) \rp \\
        &\approx p_{|Z|}\lp F_{|Z|}^{-1}(1 - \alpha)\rp \frac{\E_{\bm{z}}[|q|]}{\E_{\bm{z}}[|A|]} \\
        &= p_{|Z|}\lp F_{|Z|}^{-1}(1 - \alpha)\rp \frac{\Psi(\bm{x}, C)}{\sqrt{S}}.
    \end{align}
    Since $|Z|$ follows the standard folded normal distribution,
    \begin{equation}
        p_{|Z|}\lp F_{|Z|}^{-1}(1 - \alpha)\rp = \frac{2}{\sqrt{2\pi}}\exp\lp -\frac{1}{2}F_{|Z|}^{-1}(1 - \alpha)^2\rp.
    \end{equation}
    It follows that
    \begin{equation}
        \log(\tpr_\rmia(\bm{x}) - \fpr_\rmia(\bm{x})) \approx -\frac{1}{2}\log S - \frac{1}{2}F_{|Z|}^{-1}(1 - \alpha)^2 + \log \frac{\Psi(\bm{x},C)}{\sqrt{\pi/2}}.
    \end{equation}
\end{proof}

As for the LiRA power-law, bounding $||\bm{x}-\bm{m}_{\bm{x}}||$ and $||\bm{z}-\bm{m}_{\bm{z}}||$ will provide a worst-case upper bound for which the power-law holds.

Finally, the per-example RMIA power-law is also extended to the average-case:
\begin{restatable}[Average-case RMIA power-law]{corollary}{corAverageRMIA}\label{cor:average-RMIA}
    For the simplified model with sufficiently large $S$ and infinitely many shadow models, we have
    \begin{equation}
        \log(\atpr_\rmia - \afpr_\rmia) \approx -\frac{1}{2}\log S - \frac{1}{2}F_{|Z|}^{-1}(1 - \alpha)^2 + \log\lp \mathop{\E}_{(\bm{x}, y_{\bm{x}}) \sim \sD}\left[\frac{\Psi(\bm{x}, C)}{\sqrt{\pi/2}} \right]\rp,
    \end{equation}
    where $\alpha \geq \afpr_\rmia$ and $F_{|Z|}$ is the cdf of the standard folded normal distribution.
\end{restatable}
\begin{proof}
    By \cref{theorem:RMIA-power-law} and the law of unconscious statistician, we have for large $S$
    \begin{align}
        \atpr_\rmia - \afpr_\rmia &= \int_\domain p(\bm{x}) (\tpr_\rmia(\bm{x}) - \fpr_\rmia(\bm{x}))\mathrm{d}\bm{x} \\
        &\approx \int_\domain p(\bm{x}) \frac{1}{\sqrt{\pi/2}}e^{-\frac{1}{2}F_{|Z|}^{-1}(1 - \alpha)^2}\frac{\Psi(\bm{x}, C)}{\sqrt{S}} \mathrm{d}\bm{x} \\
        &= \frac{1}{\sqrt{\pi/2}}e^{-\frac{1}{2}F_{|Z|}^{-1}(1 - \alpha)^2} \int_\domain p(\bm{x})\frac{\Psi(\bm{x}, C)}{\sqrt{S}} \mathrm{d}\bm{x}. \\
        &= \frac{1}{\sqrt{S}}e^{-\frac{1}{2}F_{|Z|}^{-1}(1 - \alpha)^2} \mathop{\E}_{(\bm{x}, y_{\bm{x}}) \sim \sD}\left[ \frac{\Psi(\bm{x}, C)}{\sqrt{\pi/2}} \right],
    \end{align}
    where $p(\bm{x})$ is the density of $\sD$ at $(\bm{x}, y_{\bm{x}})$, and $\domain$ is the data domain. Hence we obtain
    \begin{equation}
        \log(\atpr_\rmia - \afpr_\rmia) \approx -\frac{1}{2}\log S - \frac{1}{2}F_{|Z|}^{-1}(1 - \alpha)^2 + \log\lp \mathop{\E}_{(\bm{x}, y_{\bm{x}}) \sim \sD}\left[\frac{\Psi(\bm{x}, C)}{\sqrt{\pi/2}} \right]\rp.
    \end{equation}
\end{proof}

\clearpage

\section{Training details}\label{sec:training_details}

\subsection{Parameterization}\label{sec:parameterization}
We utilise pre-trained feature extractors BiT-M-R50x1 (R-50) \citep{kolesnikov2019big} with 23.5M parameters and Vision Transformer ViT-Base-16 (ViT-B) \citep{dosovitskiy2020image} with 85.8M parameters, both pretrained on the ImageNet-21K dataset~\citep{ILSVRC15}. We download the feature extractor checkpoints from the respective repositories.

Following \citet{tobaben2023Efficacy} that show the favorable trade-off of parameter-efficient fine-tuning between computational cost, utility and privacy even for small datasets, we only consider fine-tuning subsets of all feature extractor parameters. We consider the following configurations: 

\begin{itemize}[leftmargin=*,noitemsep,topsep=0pt]
\item  \textbf{Head:} We train a linear layer on top of the pre-trained feature extractor.
\item \textbf{FiLM:} In addition to the linear layer from Head, we fine-tune parameter-efficient FiLM~\citep{perez2018film} adapters scattered throughout the network. While a diverse set of adapters has been proposed, we utilise FiLM as it has been shown to be competitive in prior work~\citep{shysheya2022fit,tobaben2023Efficacy}.
\end{itemize}

\subsubsection{Licenses and access}~\label{sec:checkpoint_licenses}
The licenses and means to access the model checkpoints can be found below.

\begin{itemize}[leftmargin=*]
    \item BiT-M-R50x1 (R-50) \citep{kolesnikov2019big} is licensed with the Apache-2.0 license and can be obtained through the instructions on \url{https://github.com/google-research/big_transfer}.
    \item Vision Transformer ViT-Base-16 (ViT-B) \citep{dosovitskiy2020image} is licensed with the Apache-2.0 license and can be obtained through the instructions on \url{https://github.com/google-research/vision_transformer}.
\end{itemize}

\subsection{Hyperparameter tuning}\label{sec:hyperparameter_tuning}
Our hyperparameter tuning is heavily inspired by the comprehensive few-shot experiments by \citet{tobaben2023Efficacy}. We utilise their hyperparameter tuning protocol as it has been proven to yield SOTA results for (DP) few-shot models.
Given the input $\mathcal{\data}$ dataset we perform hyperparameter tuning by splitting the $\mathcal{\data}$ into 70$\%$ train and 30$\%$ validation. 
We then perform the specified iterations of hyperparameter tuning using the tree-structured Parzen estimator~\citep{bergstra_tpe_2011} strategy as implemented in Optuna~\citep{optuna_2019} to derive a set of hyperparameters that yield the highest accuracy on the validation split.
This set of hyperparameters is subsequently used to train all shadow models with the Adam optimizer~\citep{DBLP:journals/corr/KingmaB14}.
Details on the set of hyperparameters that are tuned and their ranges can be found in \cref{tab:hyperparam_ranges}.
\begin{table}[htbp]
	\caption{Hyperparameter ranges used for the Bayesian optimization with Optuna.}
	\label{tab:hyperparam_ranges}
	\centering
	\begin{small}
			\begin{tabular}{lcc}
				\toprule
				& \multicolumn{1}{l}{\textbf{lower bound}} & \multicolumn{1}{l}{\textbf{upper bound}} \\
				\midrule
    			batch size & \multicolumn{1}{r}{10} & \multicolumn{1}{r}{$|\mathcal{D}|$} \\
       			clipping norm & \multicolumn{1}{r}{0.2} & \multicolumn{1}{r}{10} \\
				epochs & \multicolumn{1}{r}{1} & \multicolumn{1}{r}{200} \\
				learning rate & \multicolumn{1}{r}{1e-7} & \multicolumn{1}{r}{1e-2} \\
				\bottomrule
			\end{tabular}
	\end{small}
\end{table}

\subsection{Datasets}\label{sec:datasets}
\cref{tab:used-datasets} shows the datasets used in the paper. We base our experiments on a subset of the the few-shot benchmark VTAB~\citep{zhai2019large} that achieves a classification accuracy $> 80\%$ and thus would considered to be used by a practitioner. Additionally, we add CIFAR10 which is not part of the original VTAB benchmark.

\begin{table}[!htbp]
\centering
\caption{Used datasets in the paper, their minimum and maximum shots $S$ and maximum number of classes $C$ and their test accuracy when fine-tuning a non-DP ViT-B Head. The test accuracy for EuroSAT and Resics45 is computed on the part of the training split that is not used for training the particular model due to both datasets missing an official test split. Note that LiRA requires $2S$ for training the shadow models and thus $S$ is smaller than when only performing fine-tuning.}
\begin{tabular}{lrrrrr}
\toprule
\textbf{dataset} & \textbf{(max.)} & \textbf{\boldmath min.} & \textbf{\boldmath max.} & \textbf{\boldmath test accuracy} & \textbf{\boldmath test accuracy} \\
& \textbf{\boldmath $C$} & \textbf{\boldmath $S$} & \textbf{\boldmath $S$} & \textbf{\boldmath (min. $S$)} & \textbf{\boldmath (max. $S$)} \\

\midrule
Patch Camelyon \citep{veeling2018rotation} & 2 & 256 & 65536 & 82.8\% & 85.6\% \\
CIFAR10 \citep{krizhevsky2009learning} & 10 & 8 & 2048 & 92.7\% & 97.7\% \\
EuroSAT \citep{helber2019eurosat}  & 10 & 8 & 512 & 80.2\% & 96.7\% \\
Pets \citep{parkhi2012cats} & 37 & 8 & 32 & 82.3\% & 90.7\%\\
Resisc45 \citep{cheng2017remote} & 45 & 32 & 256 & 83.5\% & 91.6\%\\
CIFAR100 \citep{krizhevsky2009learning}  & 100 & 16 & 128  & 82.2\% & 87.6\% \\
\bottomrule
\bottomrule
\end{tabular}
\label{tab:used-datasets}%
\end{table}%

\subsubsection{Licenses and access}~\label{sec:dataset_licenses}

The licenses and means to access the datasets can be found below. We downloaded all datasets from TensorFlow datasets \url{https://www.tensorflow.org/datasets} but Resics45 which required manual download.

\begin{itemize}[leftmargin=*]
    \item Patch Camelyon \citep{veeling2018rotation} is licensed with Creative Commons Zero v1.0 Universal (cc0-1.0) and we use version 2.0.0 of the dataset as specified on \url{https://www.tensorflow.org/datasets/catalog/patch_camelyon}.
    \item CIFAR10 \citep{krizhevsky2009learning} is licensed with an unknown license and we use version 3.0.2 of the dataset as specified on \url{https://www.tensorflow.org/datasets/catalog/cifar10}.
    \item EuroSAT \citep{helber2019eurosat} is licensed with MIT and we use version 2.0.0 of the dataset as specified on \url{https://www.tensorflow.org/datasets/catalog/eurosat}.
    \item Pets \citep{parkhi2012cats} is licensed with CC BY-SA 4.0 Deed and we use version 3.2.0 of the dataset as specified on \url{https://www.tensorflow.org/datasets/catalog/oxford_iiit_pet}.
    \item Resisc45 \citep{cheng2017remote} is licensed with an unknown license and we use version 3.0.0 of the dataset as specified on \url{https://www.tensorflow.org/datasets/catalog/resisc45}.
    \item CIFAR100 \citep{krizhevsky2009learning} is licensed with an unknown license and we use version 3.0.2 of the dataset as specified on \url{https://www.tensorflow.org/datasets/catalog/cifar100}.
\end{itemize}

\subsection{Compute resources}~\label{sec:compute_resources}

All experiments but the R-50 (FiLM) experiments are run on CPU with 8 cores and 16 GB of host memory. The training time depends on the model (ViT is cheaper than R-50), number of shots $\shots$ and the number of classes $\classes$ but ranges for the training of one model from some minutes to an hour. This assumes that the images are passed once through the pre-trained backbone and then cached as feature vectors. The provided code implements this optimization.

The R-50 (FiLM) experiments are significantly more expensive and utilise a NVIDIA V100 with 40 GB VRAM, 10 CPU cores and 64 GB of host memory. The training of 257 shadow models then does not exceed 24h for the settings that we consider.

We estimate that in total we spend around 7 days of V100 and some dozens of weeks of CPU core time but more exact measurements are hard to make.

\clearpage

\section{Additional results}
In this section, we provide tabular results for our experiments and additional figures that did not fit into the main paper.

\subsection{Additional results for \cref{sec:predicting_dataset}}\label{sec:additional_sec5}
This Section contains additional results for \cref{sec:predicting_dataset}.

\subsubsection{Vulnerability as a function of shots}
This section displays additional results to \cref{fig:fig1} for $\fpr$ $\in \{0.1, 0.01, 0.001\}$ for ViT-B and R-50 in in \cref{fig:mia_function_shots_additional,tab:vit-function-shots,tab:r-50-function-shots}.

\begin{figure*}[h]
    \centering
    \begin{subfigure}[b]{0.49\textwidth}
        \includegraphics[width=\textwidth]{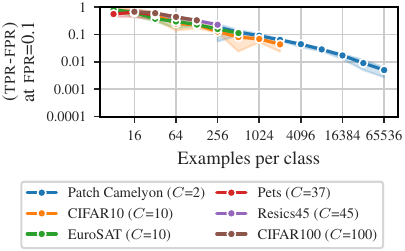}
        \caption{ViT-B Head $\tpr-\fpr$ at $\fpr=0.1$}
    \end{subfigure}
    \begin{subfigure}[b]{0.49\textwidth}
        \includegraphics[width=\textwidth]{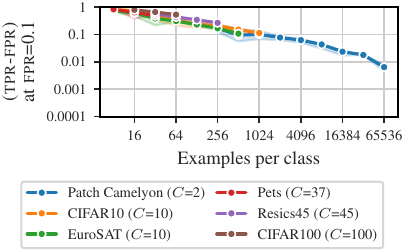}
        \caption{R-50 Head $\tpr-\fpr$ at $\fpr=0.1$}
    \end{subfigure}

    \vspace{0.5cm}
    
    \begin{subfigure}[b]{0.49\textwidth}
        \includegraphics[width=\textwidth]{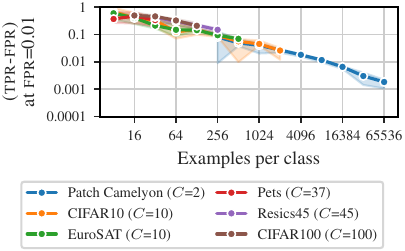}
        \caption{ViT-B Head $\tpr-\fpr$ at $\fpr=0.01$}
    \end{subfigure}
    \begin{subfigure}[b]{0.49\textwidth}
        \includegraphics[width=\textwidth]{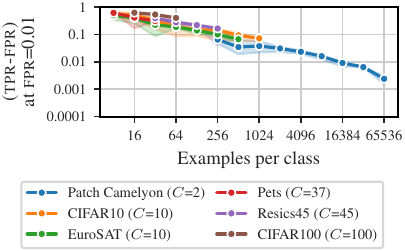}
        \caption{R-50 Head $\tpr-\fpr$ at $\fpr=0.01$}
    \end{subfigure}

    \vspace{0.5cm}

    \begin{subfigure}[b]{0.49\textwidth}
        \includegraphics[width=\textwidth]{figures/section4/mia_as_function_of_shots_0.001_ViT-B.pdf}
        \caption{ViT-B Head $\tpr-\fpr$ at $\fpr=0.001$}
    \end{subfigure}    
    \begin{subfigure}[b]{0.49\textwidth}
        \includegraphics[width=\textwidth]{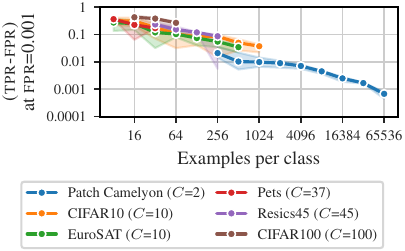}
        \caption{R-50 Head $\tpr-\fpr$ at $\fpr=0.001$}
    \end{subfigure}
    \caption{MIA vulnerability as a function of shots (examples per class) when attacking a pre-trained ViT-B and R-50 Head trained without DP on different downstream datasets. The errorbars display the minimum and maximum Clopper-Pearson CIs over six seeds and the solid line the median.}
    \label{fig:mia_function_shots_additional}
\end{figure*}

\begin{table}[ht]
  \centering
  \caption{Median MIA vulnerability over six seeds as a function of $S$ (shots) when attacking a Head trained without DP on-top of a ViT-B. The ViT-B is pre-trained on ImageNet-21k.}
\begin{adjustbox}{max width=\textwidth}
\begin{tabular}{lrrrrr}
\toprule
\textbf{dataset} & \textbf{\boldmath classes ($C$)} & \textbf{\boldmath shots ($S$)} & \textbf{tpr@fpr=0.1} & \textbf{tpr@fpr=0.01} & \textbf{tpr@fpr=0.001} \\
\midrule
\multirow[t]{9}{*}{Patch Camelyon \citep{veeling2018rotation}} & \multirow[t]{9}{*}{2} & 256 & 0.266 & 0.086 & 0.032 \\
 &  & 512 & 0.223 & 0.059 & 0.018 \\
 &  & 1024 & 0.191 & 0.050 & 0.015 \\
 &  & 2048 & 0.164 & 0.037 & 0.009 \\
 &  & 4096 & 0.144 & 0.028 & 0.007 \\
 &  & 8192 & 0.128 & 0.021 & 0.005 \\
 &  & 16384 & 0.118 & 0.017 & 0.003 \\
 &  & 32768 & 0.109 & 0.014 & 0.002 \\
 &  & 65536 & 0.105 & 0.012 & 0.002 \\
\cline{1-6} \cline{2-6}
\multirow[t]{9}{*}{CIFAR10 \citep{krizhevsky2009learning}} & \multirow[t]{9}{*}{10} & 8 & 0.910 & 0.660 & 0.460 \\
 &  & 16 & 0.717 & 0.367 & 0.201 \\
 &  & 32 & 0.619 & 0.306 & 0.137 \\
 &  & 64 & 0.345 & 0.132 & 0.067 \\
 &  & 128 & 0.322 & 0.151 & 0.082 \\
 &  & 256 & 0.227 & 0.096 & 0.054 \\
 &  & 512 & 0.190 & 0.068 & 0.032 \\
 &  & 1024 & 0.168 & 0.056 & 0.025 \\
 &  & 2048 & 0.148 & 0.039 & 0.013 \\
\cline{1-6} \cline{2-6}
\multirow[t]{7}{*}{EuroSAT \citep{helber2019eurosat}} & \multirow[t]{7}{*}{10} & 8 & 0.921 & 0.609 & 0.408 \\
 &  & 16 & 0.738 & 0.420 & 0.234 \\
 &  & 32 & 0.475 & 0.222 & 0.113 \\
 &  & 64 & 0.400 & 0.159 & 0.074 \\
 &  & 128 & 0.331 & 0.155 & 0.084 \\
 &  & 256 & 0.259 & 0.104 & 0.049 \\
 &  & 512 & 0.213 & 0.080 & 0.037 \\
\cline{1-6} \cline{2-6}
\multirow[t]{3}{*}{Pets \citep{parkhi2012cats}} & \multirow[t]{3}{*}{37} & 8 & 0.648 & 0.343 & 0.160 \\
 &  & 16 & 0.745 & 0.439 & 0.259 \\
 &  & 32 & 0.599 & 0.311 & 0.150 \\
\cline{1-6} \cline{2-6}
\multirow[t]{4}{*}{Resics45 \citep{cheng2017remote}} & \multirow[t]{4}{*}{45} & 32 & 0.672 & 0.425 & 0.267 \\
 &  & 64 & 0.531 & 0.295 & 0.168 \\
 &  & 128 & 0.419 & 0.212 & 0.115 \\
 &  & 256 & 0.323 & 0.146 & 0.072 \\
\cline{1-6} \cline{2-6}
\multirow[t]{4}{*}{CIFAR100 \citep{krizhevsky2009learning}} & \multirow[t]{4}{*}{100} & 16 & 0.814 & 0.508 & 0.324 \\
 &  & 32 & 0.683 & 0.445 & 0.290 \\
 &  & 64 & 0.538 & 0.302 & 0.193 \\
 &  & 128 & 0.433 & 0.208 & 0.114 \\
\cline{1-6} \cline{2-6}
\bottomrule
\end{tabular}
  \label{tab:vit-function-shots}%
\end{adjustbox}
\end{table}%

\begin{table}[ht]
  \centering
  \caption{Median MIA vulnerability over six seeds as a function of $S$ (shots) when attacking a Head trained without DP on-top of a R-50. The R-50 is pre-trained on ImageNet-21k.}
\begin{adjustbox}{max width=\textwidth}
\begin{tabular}{lrrrrr}
\toprule
\textbf{dataset} & \textbf{\boldmath classes ($C$)} & \textbf{\boldmath shots ($S$)} & \textbf{tpr@fpr=0.1} & \textbf{tpr@fpr=0.01} & \textbf{tpr@fpr=0.001} \\
\midrule
\multirow[t]{9}{*}{Patch Camelyon \citep{veeling2018rotation}} & \multirow[t]{9}{*}{2} & 256 & 0.272 & 0.076 & 0.022 \\
 &  & 512 & 0.195 & 0.045 & 0.011 \\
 &  & 1024 & 0.201 & 0.048 & 0.011 \\
 &  & 2048 & 0.178 & 0.041 & 0.010 \\
 &  & 4096 & 0.163 & 0.033 & 0.008 \\
 &  & 8192 & 0.143 & 0.026 & 0.006 \\
 &  & 16384 & 0.124 & 0.019 & 0.004 \\
 &  & 32768 & 0.118 & 0.016 & 0.003 \\
 &  & 65536 & 0.106 & 0.012 & 0.002 \\
\cline{1-6} \cline{2-6}
\multirow[t]{8}{*}{CIFAR10 \citep{krizhevsky2009learning}} & \multirow[t]{8}{*}{10} & 8 & 0.911 & 0.574 & 0.324 \\
 &  & 16 & 0.844 & 0.526 & 0.312 \\
 &  & 32 & 0.617 & 0.334 & 0.183 \\
 &  & 64 & 0.444 & 0.208 & 0.106 \\
 &  & 128 & 0.334 & 0.159 & 0.084 \\
 &  & 256 & 0.313 & 0.154 & 0.086 \\
 &  & 512 & 0.251 & 0.103 & 0.051 \\
 &  & 1024 & 0.214 & 0.082 & 0.038 \\
\cline{1-6} \cline{2-6}
\multirow[t]{7}{*}{EuroSAT \citep{helber2019eurosat}} & \multirow[t]{7}{*}{10} & 8 & 0.846 & 0.517 & 0.275 \\
 &  & 16 & 0.699 & 0.408 & 0.250 \\
 &  & 32 & 0.490 & 0.236 & 0.121 \\
 &  & 64 & 0.410 & 0.198 & 0.105 \\
 &  & 128 & 0.332 & 0.151 & 0.075 \\
 &  & 256 & 0.269 & 0.111 & 0.056 \\
 &  & 512 & 0.208 & 0.077 & 0.036 \\
\cline{1-6} \cline{2-6}
\multirow[t]{3}{*}{Pets \citep{parkhi2012cats}} & \multirow[t]{3}{*}{37} & 8 & 0.937 & 0.631 & 0.366 \\
 &  & 16 & 0.745 & 0.427 & 0.227 \\
 &  & 32 & 0.588 & 0.321 & 0.173 \\
\cline{1-6} \cline{2-6}
\multirow[t]{4}{*}{Resics45 \citep{cheng2017remote}} & \multirow[t]{4}{*}{45} & 32 & 0.671 & 0.405 & 0.235 \\
 &  & 64 & 0.534 & 0.289 & 0.155 \\
 &  & 128 & 0.445 & 0.231 & 0.121 \\
 &  & 256 & 0.367 & 0.177 & 0.088 \\
\cline{1-6} \cline{2-6}
\multirow[t]{3}{*}{CIFAR100 \citep{krizhevsky2009learning}} & \multirow[t]{3}{*}{100} & 16 & 0.897 & 0.638 & 0.429 \\
 &  & 32 & 0.763 & 0.549 & 0.384 \\
 &  & 64 & 0.634 & 0.414 & 0.269 \\
\cline{1-6} \cline{2-6}
\bottomrule
\end{tabular}
  \label{tab:r-50-function-shots}%
\end{adjustbox}
\end{table}%

\clearpage

\subsubsection{Vulnerability as a function of the number of classes}

This section displays additional results to \cref{fig:mia_function_classes} for $\fpr$ $\in \{0.1, 0.01, 0.001\}$ for ViT-B and R-50 in in \cref{fig:mia_function_classes_additional,tab:vit-b-function-classes,tab:r-50-function-classes}.

\begin{figure*}[h]
    \centering
    \begin{subfigure}[b]{0.49\textwidth}
        \includegraphics[width=\textwidth]{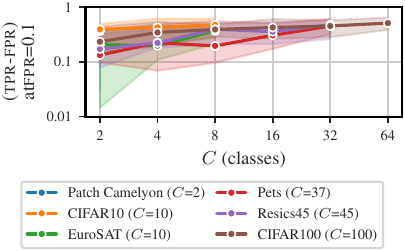}
        \caption{ViT-B Head $\tpr-\fpr$ at $\fpr=0.1$}
    \end{subfigure}
    \begin{subfigure}[b]{0.49\textwidth}
        \includegraphics[width=\textwidth]{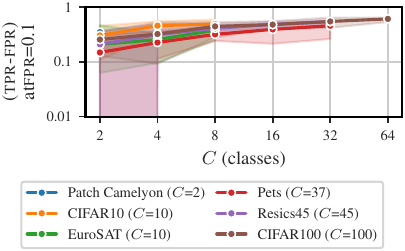}
        \caption{R-50 Head $\tpr-\fpr$ at $\fpr=0.1$}
    \end{subfigure}

    \vspace{0.5cm}
    
    \begin{subfigure}[b]{0.49\textwidth}
        \includegraphics[width=\textwidth]{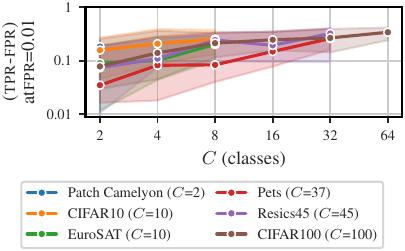}
        \caption{ViT-B Head $\tpr-\fpr$ at $\fpr=0.01$}
    \end{subfigure}
    \begin{subfigure}[b]{0.49\textwidth}
        \includegraphics[width=\textwidth]{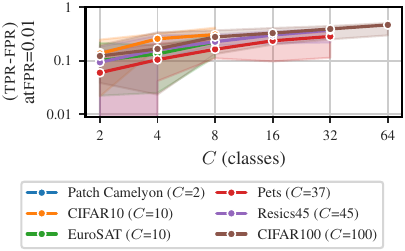}
        \caption{R-50 Head $\tpr-\fpr$ at $\fpr=0.01$}
    \end{subfigure}

    \vspace{0.5cm}

    \begin{subfigure}[b]{0.49\textwidth}
        \includegraphics[width=\textwidth]{figures/section4/mia_as_function_of_classes_0.001_ViT-B.pdf}
        \caption{ViT-B Head $\tpr-\fpr$ at $\fpr=0.001$}
    \end{subfigure}    
    \begin{subfigure}[b]{0.49\textwidth}
        \includegraphics[width=\textwidth]{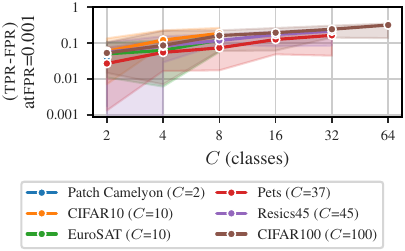}
        \caption{R-50 Head $\tpr-\fpr$ at $\fpr=0.001$}
    \end{subfigure}
    \caption{MIA vulnerability as a function of $\classes$ (classes) when attacking a ViT-B and R-50 Head fine-tuned without DP on different datasets where the classes are randomly sub-sampled and $S=32$. The solid line displays the median and the errorbars the min and max clopper-pearson CIs over 12 seeds.}
    \label{fig:mia_function_classes_additional}
\end{figure*}

\begin{table}[ht]
  \centering
  \caption{Median MIA vulnerability over 12 seeds as a function of $C$ (classes) when attacking a Head trained without DP on-top of a ViT-B. The Vit-B is pre-trained on ImageNet-21k.}
\begin{adjustbox}{max width=\textwidth}
\begin{tabular}{lrrrrr}
\toprule
\textbf{dataset} & \textbf{\boldmath shots ($S$)} & \textbf{\boldmath classes ($C$)} & \textbf{tpr@fpr=0.1} & \textbf{tpr@fpr=0.01} & \textbf{tpr@fpr=0.001} \\
\midrule
Patch Camelyon \citep{veeling2018rotation} & 32 & 2 & 0.467 & 0.192 & 0.080 \\
\cline{1-6} \cline{2-6}
\multirow[t]{3}{*}{CIFAR10 \citep{krizhevsky2009learning}} & \multirow[t]{3}{*}{32} & 2 & 0.494 & 0.167 & 0.071 \\
 &  & 4 & 0.527 & 0.217 & 0.115 \\
 &  & 8 & 0.574 & 0.262 & 0.123 \\
\cline{1-6} \cline{2-6}
\multirow[t]{3}{*}{EuroSAT \citep{helber2019eurosat}} & \multirow[t]{3}{*}{32} & 2 & 0.306 & 0.100 & 0.039 \\
 &  & 4 & 0.298 & 0.111 & 0.047 \\
 &  & 8 & 0.468 & 0.211 & 0.103 \\
\cline{1-6} \cline{2-6}
\multirow[t]{5}{*}{Pets \citep{parkhi2012cats}} & \multirow[t]{5}{*}{32} & 2 & 0.232 & 0.045 & 0.007 \\
 &  & 4 & 0.324 & 0.092 & 0.035 \\
 &  & 8 & 0.296 & 0.094 & 0.035 \\
 &  & 16 & 0.406 & 0.158 & 0.069 \\
 &  & 32 & 0.553 & 0.269 & 0.136 \\
\cline{1-6} \cline{2-6}
\multirow[t]{5}{*}{Resics45 \citep{cheng2017remote}} & \multirow[t]{5}{*}{32} & 2 & 0.272 & 0.084 & 0.043 \\
 &  & 4 & 0.322 & 0.119 & 0.056 \\
 &  & 8 & 0.496 & 0.253 & 0.148 \\
 &  & 16 & 0.456 & 0.204 & 0.108 \\
 &  & 32 & 0.580 & 0.332 & 0.195 \\
\cline{1-6} \cline{2-6}
\multirow[t]{6}{*}{CIFAR100 \citep{krizhevsky2009learning}} & \multirow[t]{6}{*}{32} & 2 & 0.334 & 0.088 & 0.035 \\
 &  & 4 & 0.445 & 0.150 & 0.061 \\
 &  & 8 & 0.491 & 0.223 & 0.121 \\
 &  & 16 & 0.525 & 0.256 & 0.118 \\
 &  & 32 & 0.553 & 0.276 & 0.153 \\
 &  & 64 & 0.612 & 0.350 & 0.211 \\
\cline{1-6} \cline{2-6}
\bottomrule
\end{tabular}
\label{tab:vit-b-function-classes}%
\end{adjustbox}
\end{table}%

\begin{table}[ht]
  \centering
  \caption{Median MIA vulnerability over 12 seeds as a function of $C$ (classes) when attacking a Head trained without DP on-top of a R-50. The R-50 is pre-trained on ImageNet-21k.}
\begin{adjustbox}{max width=\textwidth}
\begin{tabular}{lrrrrr}
\toprule
\textbf{dataset} & \textbf{\boldmath shots ($S$)} & \textbf{\boldmath classes ($C$)} & \textbf{tpr@fpr=0.1} & \textbf{tpr@fpr=0.01} & \textbf{tpr@fpr=0.001} \\
\midrule
Patch Camelyon \citep{veeling2018rotation} & 32 & 2 & 0.452 & 0.151 & 0.041 \\
\cline{1-6} \cline{2-6}
\multirow[t]{3}{*}{CIFAR10 \citep{krizhevsky2009learning}} & \multirow[t]{3}{*}{32} & 2 & 0.404 & 0.146 & 0.060 \\
 &  & 4 & 0.560 & 0.266 & 0.123 \\
 &  & 8 & 0.591 & 0.318 & 0.187 \\
\cline{1-6} \cline{2-6}
\multirow[t]{3}{*}{EuroSAT \citep{helber2019eurosat}} & \multirow[t]{3}{*}{32} & 2 & 0.309 & 0.111 & 0.050 \\
 &  & 4 & 0.356 & 0.144 & 0.064 \\
 &  & 8 & 0.480 & 0.233 & 0.123 \\
\cline{1-6} \cline{2-6}
\multirow[t]{5}{*}{Pets \citep{parkhi2012cats}} & \multirow[t]{5}{*}{32} & 2 & 0.249 & 0.068 & 0.029 \\
 &  & 4 & 0.326 & 0.115 & 0.056 \\
 &  & 8 & 0.419 & 0.173 & 0.075 \\
 &  & 16 & 0.493 & 0.245 & 0.127 \\
 &  & 32 & 0.559 & 0.294 & 0.166 \\
\cline{1-6} \cline{2-6}
\multirow[t]{5}{*}{Resics45 \citep{cheng2017remote}} & \multirow[t]{5}{*}{32} & 2 & 0.310 & 0.103 & 0.059 \\
 &  & 4 & 0.415 & 0.170 & 0.083 \\
 &  & 8 & 0.510 & 0.236 & 0.119 \\
 &  & 16 & 0.585 & 0.311 & 0.174 \\
 &  & 32 & 0.644 & 0.382 & 0.218 \\
\cline{1-6} \cline{2-6}
\multirow[t]{6}{*}{CIFAR100 \citep{krizhevsky2009learning}} & \multirow[t]{6}{*}{32} & 2 & 0.356 & 0.132 & 0.054 \\
 &  & 4 & 0.423 & 0.176 & 0.087 \\
 &  & 8 & 0.545 & 0.288 & 0.163 \\
 &  & 16 & 0.580 & 0.338 & 0.196 \\
 &  & 32 & 0.648 & 0.402 & 0.244 \\
 &  & 64 & 0.711 & 0.476 & 0.320 \\
\cline{1-6} \cline{2-6}
\bottomrule
\end{tabular}
\label{tab:r-50-function-classes}%
\end{adjustbox}
\end{table}%

\clearpage

\subsubsection{Data for FiLM and from scratch training}

\begin{table}[ht]
  \centering
  \caption{MIA vulnerability data used in \cref{fig:eval_r50filmcarlini}. Note that the data from \citet{Carlini2022LiRA} is only partially tabular, thus we estimated the $\tpr$ at $\fpr$ from the plots in the Appendix of their paper.}\label{tab:comparision_data}
\begin{adjustbox}{max width=\textwidth}
\begin{tabular}{llllllll}
\toprule
\textbf{model} & 
\textbf{dataset} & \textbf{\boldmath classes}
 &
\textbf{\boldmath shots}  &
\textbf{source}
&
\textbf{tpr@} & \textbf{tpr@} & \textbf{tpr@}  \\
& &  \textbf{\boldmath ($C$)}
 &
\textbf{\boldmath ($S$)} & & \textbf{fpr=0.1} & \textbf{fpr=0.01}& \textbf{fpr=0.001}\\
\midrule
\multirow[t]{9}{*}{R-50 FiLM} & CIFAR10 & 10 & 50 & This work & 0.482 & 0.275 & 0.165 \\
 & \citep{krizhevsky2009learning} & & &  &  & &  \\
\cline{2-8} \cline{3-8} \cline{4-8}
 & \multirow[t]{4}{*}{CIFAR100} & \multirow[t]{4}{*}{100} & 10 & \citet{tobaben2023Efficacy} & 0.933 & 0.788 & 0.525 \\
\cline{4-8}
 &  \citep{krizhevsky2009learning} &  & 25 & \citet{tobaben2023Efficacy} & 0.766 & 0.576 & 0.449 \\
\cline{4-8}
 &  &  & 50 & \citet{tobaben2023Efficacy} & 0.586 & 0.388 & 0.227 \\
\cline{4-8}
 &  &  & 100 & \citet{tobaben2023Efficacy} & 0.448 & 0.202 & 0.077 \\
\cline{2-8} \cline{3-8} \cline{4-8}
 & EuroSAT & 10 & 8 & This work & 0.791 & 0.388 & 0.144 \\
 & \citep{helber2019eurosat} & & &  &  & &  \\ 
\cline{2-8} \cline{3-8} \cline{4-8}
 & Patch Camelyon & 2 & 256 & This work & 0.379 & 0.164 & 0.076 \\
 & \citep{veeling2018rotation} & & &  &  & &  \\
\cline{2-8} \cline{3-8} \cline{4-8}
 & Pets & 37 & 8 & This work & 0.956 & 0.665 & 0.378 \\
 & \citep{parkhi2012cats} & & &  &  & &  \\
\cline{2-8} \cline{3-8} \cline{4-8}
 & Resics45 & 45 & 32 & This work & 0.632 & 0.379 & 0.217 \\
  & \citep{cheng2017remote} & & &  &  & &  \\
\cline{1-8} \cline{2-8} \cline{3-8} \cline{4-8}
\multirow[t]{2}{*}{from scratch } & CIFAR10 & 10 & 2500 & \citet{Carlini2022LiRA} & 0.300 & 0.110 & 0.084 \\
 & \citep{krizhevsky2009learning} & & &  &  & &  \\
\cline{2-8} \cline{3-8} \cline{4-8}
 (wide ResNet) & CIFAR100 & 100 & 250 & \citet{Carlini2022LiRA} & 0.700 & 0.400 & 0.276 \\
  & \citep{krizhevsky2009learning} & & &  &  & &  \\
\cline{1-8} \cline{2-8} \cline{3-8} \cline{4-8}
\bottomrule
\end{tabular}
\end{adjustbox}
\end{table}

\clearpage
\subsubsection{Predicting dataset vulnerability as function of $\shots$ and $\classes$}\label{sec:additional_glm_data}

This section provides additional results for the model based on \cref{eq:mia_vul_dataset}

\begin{table}[ht]
  \centering
  \caption{Results for fitting \cref{eq:mia_vul_dataset} with statsmodels~\cite{seabold2010statsmodels} to ViT Head data at $\fpr$ $\in \{0.1, 0.01, 0.001\}$. We utilize an ordinary least squares. The test $R^2$ assesses the fit to the data of R-50 Head.}
\begin{adjustbox}{max width=\textwidth}
\begin{tabular}{lrrrrrrrrr}
\toprule
\textbf{\boldmath coeff.} & \textbf{\boldmath $\fpr$} & \textbf{\boldmath $R^2$} & \textbf{\boldmath test $R^2$} & \textbf{coeff. value} & \textbf{std. error} & \textbf{\boldmath $t$} & \textbf{\boldmath $p>|z|$} & \textbf{\boldmath coeff. $[0.025$} & \textbf{\boldmath coeff. $0.975]$} \\
\midrule
\multirow[t]{5}{*}{$\beta_S$ (for $S$)} & 0.1 & 0.952 & 0.907 & -0.506 & 0.011 & -44.936 & 0.000 & -0.529 & -0.484 \\
 & 0.01 & 0.946 & 0.854 & -0.555 & 0.014 & -39.788 & 0.000 & -0.582 & -0.527 \\
 & 0.001 & 0.930 & 0.790 & -0.627 & 0.019 & -32.722 & 0.000 & -0.664 & -0.589 \\
\cline{1-10}
\multirow[t]{5}{*}{$\beta_C$ (for $C$)} & 0.1 & 0.952 & 0.907 & 0.090 & 0.021 & 4.231 & 0.000 & 0.048 & 0.131 \\
 & 0.01 & 0.946 & 0.854 & 0.182 & 0.026 & 6.960 & 0.000 & 0.131 & 0.234 \\
 & 0.001 & 0.930 & 0.790 & 0.300 & 0.036 & 8.335 & 0.000 & 0.229 & 0.371 \\
\cline{1-10}
\multirow[t]{5}{*}{$\beta_0$ (intercept)} & 0.1 & 0.952 & 0.907 & 0.314 & 0.045 & 6.953 & 0.000 & 0.225 & 0.402 \\
 & 0.01 & 0.946 & 0.854 & 0.083 & 0.056 & 1.491 & 0.137 & -0.027 & 0.193 \\
 & 0.001 & 0.930 & 0.790 & -0.173 & 0.077 & -2.261 & 0.025 & -0.324 & -0.022 \\
\cline{1-10}
\bottomrule
\end{tabular}
\label{tab:glm-coefs-diff}%
\end{adjustbox}
\end{table}%

\cref{fig:predict_mia_dataset_eval_additional} shows the performance for all considered $\fpr$.

\begin{figure*}[h]
    \centering
    \begin{subfigure}[b]{0.49\textwidth}
        \includegraphics[width=\textwidth]{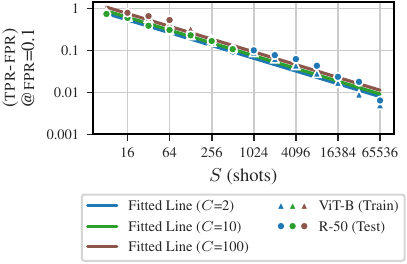}
        \caption{($(\tpr-\fpr)$ at $\fpr=0.1$}
    \end{subfigure}
    \begin{subfigure}[b]{0.49\textwidth}
        \includegraphics[width=\textwidth]{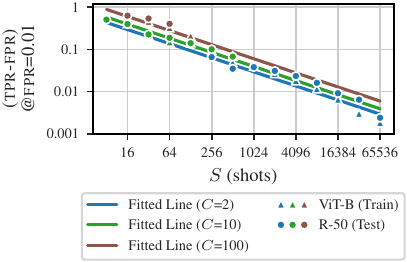}
        \caption{$(\tpr-\fpr)$ at $\fpr=0.01$}
    \end{subfigure}

    \vspace{0.5cm}
    
    \begin{subfigure}[b]{0.49\textwidth}
        \includegraphics[width=\textwidth]{figures/section4/glm_fit_show_prediction_fpr_0.001_difference.pdf}
        \caption{$(\tpr-\fpr)$ at $\fpr=0.001$}
    \end{subfigure}

    \caption{Predicted MIA vulnerability as a function of $\shots$ (shots) using a model based on \cref{eq:mia_vul_dataset} fitted \cref{tab:vit-function-shots} (ViT-B). The triangles show the median $\tpr-\fpr$ for the train set (ViT-B; \cref{tab:vit-function-shots}) and circle the test set (R-50; \cref{tab:r-50-function-shots}) over six seeds. Note that the triangles and dots for $\classes=10$ are for EuroSAT.}
    \label{fig:predict_mia_dataset_eval_additional}
\end{figure*}

\clearpage

\subsection{Simpler variant of the prediction model}\label{sec:predicting_dataset_simpler}

The prediction model in the main text (\cref{eq:mia_vul_dataset}) avoids predicting $\tpr < \fpr$ in the tail when $\shots$ is very large. In this section, we analyze a variation of the regression model that is simpler and predicts $\log_{10}(\tpr)$ instead of $\log_{10}(\tpr-\fpr)$. This variation fits worse to the empirical data and will predict $\tpr < \fpr$ for high $\shots$.

The general form this variant can be found in \cref{eq:mia_vul_dataset_variant}, where $\beta_{\shots}, \beta_{\classes}$ and $\beta_{0}$ are the learnable regression parameters.

\begin{equation}\label{eq:mia_vul_dataset_variant}
\log_{10}(\tpr) = \beta_{\shots}\log_{10}(\shots)+\beta_{\classes}\log_{10}(\classes)+\beta_{0}
\end{equation}

\cref{tab:glm-coefs-simpler} provides tabular results on the performance of the variant.

\begin{table}[ht]
  \centering
  \caption{Results for fitting \cref{eq:mia_vul_dataset_variant} with statsmodels~\cite{seabold2010statsmodels} to ViT Head data at $\fpr$ $\in \{0.1, 0.01, 0.001\}$. We utilize an ordinary least squares. The test $R^2$ assesses the fit to the data of R-50 Head.}
\begin{adjustbox}{max width=\textwidth}
\begin{tabular}{lrrrrrrrrr}
\toprule
\textbf{\boldmath coeff.} & \textbf{\boldmath $\fpr$} & \textbf{\boldmath $R^2$} & \textbf{\boldmath test $R^2$} & \textbf{coeff. value} & \textbf{std. error} & \textbf{\boldmath $t$} & \textbf{\boldmath $p>|z|$} & \textbf{\boldmath coeff. $[0.025$} & \textbf{\boldmath coeff. $0.975]$} \\
\midrule
\multirow[t]{5}{*}{$\beta_S$ (for $S$)} & 0.1 & 0.908 & 0.764 & -0.248 & 0.008 & -30.976 & 0.000 & -0.264 & -0.233 \\
 & 0.01 & 0.940 & 0.761 & -0.416 & 0.011 & -36.706 & 0.000 & -0.438 & -0.393 \\
 & 0.001 & 0.931 & 0.782 & -0.553 & 0.017 & -32.507 & 0.000 & -0.586 & -0.519 \\
\cline{1-10}
\multirow[t]{5}{*}{$\beta_C$ (for $C$)} & 0.1 & 0.908 & 0.764 & 0.060 & 0.015 & 3.955 & 0.000 & 0.030 & 0.089 \\
 & 0.01 & 0.940 & 0.761 & 0.169 & 0.021 & 7.941 & 0.000 & 0.127 & 0.211 \\
 & 0.001 & 0.931 & 0.782 & 0.297 & 0.032 & 9.303 & 0.000 & 0.234 & 0.360 \\
\cline{1-10}
\multirow[t]{5}{*}{$\beta_0$ (intercept)} & 0.1 & 0.908 & 0.764 & 0.029 & 0.032 & 0.913 & 0.362 & -0.034 & 0.093 \\
 & 0.01 & 0.940 & 0.761 & -0.118 & 0.045 & -2.613 & 0.010 & -0.208 & -0.029 \\
 & 0.001 & 0.931 & 0.782 & -0.295 & 0.068 & -4.345 & 0.000 & -0.429 & -0.161 \\
\cline{1-10}
\bottomrule
\end{tabular}
\label{tab:glm-coefs-simpler}%
\end{adjustbox}
\end{table}%

\cref{fig:predict_mia_dataset_eval_simpler_additional} plots the performance of the variant similar to \cref{fig:predict_mia_dataset_eval} in the main text.

\begin{figure*}[h]
    \centering
    \begin{subfigure}[b]{0.49\textwidth}
        \includegraphics[width=\textwidth]{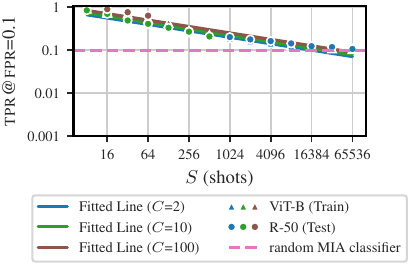}
        \caption{$\tpr$ at $\fpr=0.1$}
    \end{subfigure}
    \begin{subfigure}[b]{0.49\textwidth}
        \includegraphics[width=\textwidth]{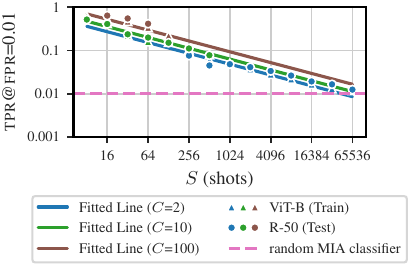}
        \caption{$\tpr$ at $\fpr=0.01$}
    \end{subfigure}

    \vspace{0.5cm}
    
    \begin{subfigure}[b]{0.49\textwidth}
        \includegraphics[width=\textwidth]{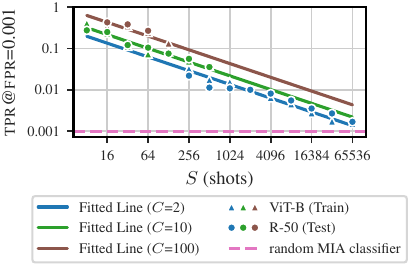}
        \caption{$\tpr$ at $\fpr=0.001$}
    \end{subfigure}
    \caption{Predicted MIA vulnerability as a function of $\shots$ (shots) using a model based on \cref{eq:mia_vul_dataset_variant} fitted \cref{tab:vit-function-shots} (ViT-B). The triangles show the median $\tpr$ for the train set (ViT-B; \cref{tab:vit-function-shots}) and circle the test set (R-50; \cref{tab:r-50-function-shots}) over six seeds. Note that the triangles and dots for $\classes=10$ are for EuroSAT.}
    \label{fig:predict_mia_dataset_eval_simpler_additional}
\end{figure*}

\clearpage
\subsection{Empirical results for RMIA}
\cref{fig:rmia_function_shots,fig:rmia_comparison_to_lira,fig:rmia-lira-comparison_additional} report additional results for RMIA~\cite{Zarifzadeh2024RMIA}.

\begin{figure}[h!]
    \begin{subfigure}[b]{0.49\textwidth}
        \includegraphics[width=\textwidth]{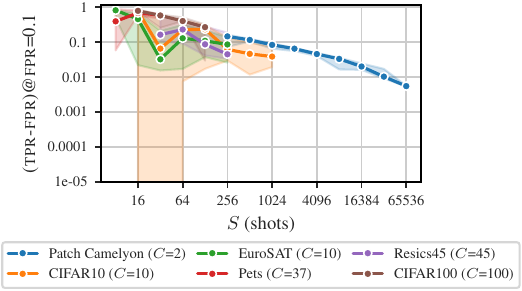}
        \caption{$(\tpr-\fpr)$ at $\fpr=0.1$}
    \end{subfigure}
    \begin{subfigure}[b]{0.49\textwidth}
        \includegraphics[width=\textwidth]{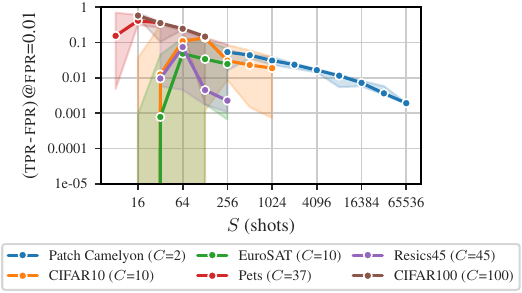}
        \caption{$(\tpr-\fpr)$ at $\fpr=0.01$}
    \end{subfigure}
    \begin{subfigure}[b]{0.49\textwidth}
        \includegraphics[width=\textwidth]{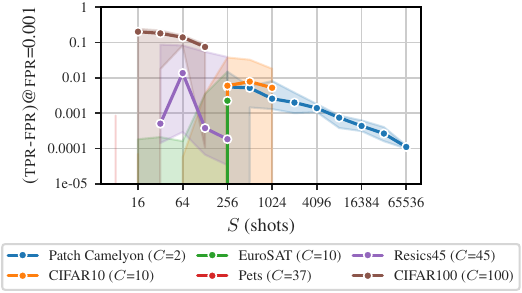}
        \caption{$(\tpr-\fpr)$ at $\fpr=0.001$}
    \end{subfigure}
    \caption{RMIA~\citep{Zarifzadeh2024RMIA} vulnerability ($\tpr-\fpr$ at fixed $\fpr$) as a function of $S$ (shots) when attacking a ViT-B Head fine-tuned without DP on different datasets. We observe at power-law relationship but especially at low $\fpr$ the relationship is not as clear as with LiRA (compare to \cref{fig:mia_function_shots_additional}). The solid line displays the median and the error bars the minimum of the lower bounds and maximum of the upper bounds for the Clopper-Pearson CIs over six seeds.}
    \label{fig:rmia_function_shots}
\end{figure}

\begin{figure}[h!]
    \begin{subfigure}[b]{0.49\textwidth}
        \includegraphics[width=\textwidth]{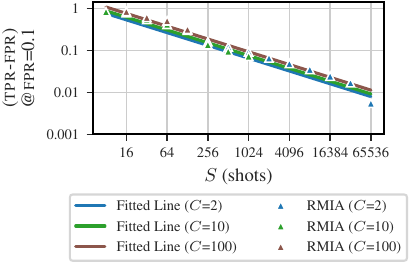}
        \caption{$(\tpr-\fpr)$ at $\fpr=0.1$}
    \end{subfigure}
    \begin{subfigure}[b]{0.49\textwidth}
        \includegraphics[width=\textwidth]{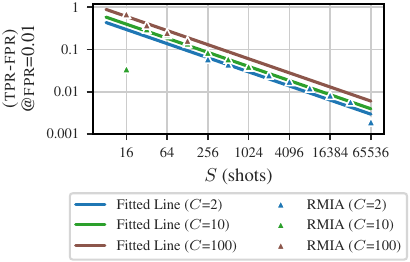}
        \caption{$(\tpr-\fpr)$ at $\fpr=0.01$}
    \end{subfigure}
    \begin{subfigure}[b]{0.49\textwidth}
        \includegraphics[width=\textwidth]{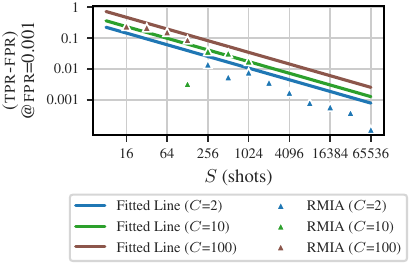}
        \caption{$(\tpr-\fpr)$ at $\fpr=0.001$}
    \end{subfigure}
    \caption{Predicted MIA vulnerability ($(\tpr-\fpr)$ at $\fpr$) based on LiRA vulnerability data as a function of $\shots$ (shots) in comparison to observed RMIA~\citep{Zarifzadeh2024RMIA} vulnerability on the same settings. The triangles show the highest $\tpr$ when attacking (ViT-B Head) with RMIA over six seeds (datasets: Patch Camelyon, EuroSAT and CIFAR100). Especially at $\fpr=0.1$ the relationship behaves very similar for both MIAs, but RMIA shows more noisy behavior at lower $\fpr$.}
    \label{fig:rmia_comparison_to_lira}
\end{figure}

\begin{figure}[h]
\begin{subfigure}[b]{\textwidth}
        \includegraphics[width=\textwidth]{figures/section4/RMIA_LiRA_as_function_of_shots_0.1_ViT-B_difference_median.pdf}
        \caption{$(\tpr-\fpr)$ at $\fpr=0.1$}
    \end{subfigure}

    \vspace{0.5cm}
    
    \begin{subfigure}[b]{\textwidth}
        \includegraphics[width=\textwidth]{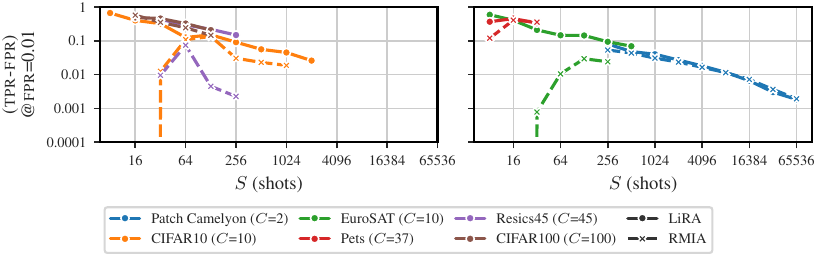}
        \caption{$(\tpr-\fpr)$ at $\fpr=0.01$}
    \end{subfigure}

    \vspace{0.5cm}
    
    \begin{subfigure}[b]{\textwidth}
        \includegraphics[width=\textwidth]{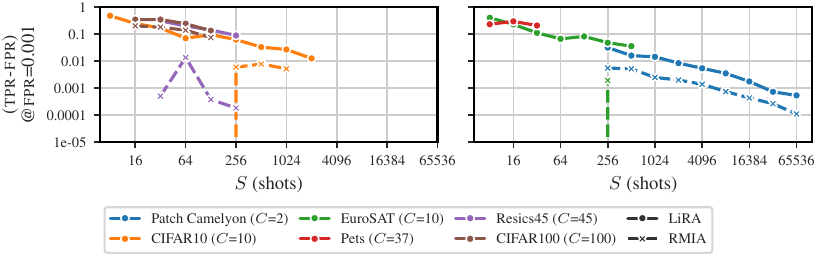}
        \caption{$(\tpr-\fpr)$ at $\fpr=0.001$}
    \end{subfigure}
    \caption{LiRA and RMIA vulnerability ($(\tpr-\fpr)$) as a function of shots ($\shots$) when attacking a ViT-B Head fine-tuned without DP on different datasets. For better visibility, we split the datasets into two panels. We observe the power-law for both attacks, but the RMIA is more unstable than LiRA. The lines display the median over six seeds.}
    \label{fig:rmia-lira-comparison_additional}
\end{figure}

\clearpage
\subsection{Tabular results for Section \ref{sec:individual_mia}}

\cref{tab:individual_mia} displays the tabular results for \cref{fig:individual_mia} in \cref{sec:individual_mia}.
 
\begin{table}[h]
    \caption{Tabular results for \cref{fig:individual_mia} on when attacking a ViT-B (Head) fine-tuned on PatchCamelyon. We display the median over six seeds at $\fpr = 0.1$.}
    \label{tab:individual_mia}
    \centering
    \begin{tabular}{rrrrr}
    \toprule
    $S$ & Max $\tpr$ & $\tpr$ of 0.999 Quantile & $\tpr$ of 0.99 Quantile & $\tpr$ of 0.95 Quantile \\
    \midrule
    16384 & 0.77 & 0.43 & 0.22 & 0.11 \\
    23170 & 0.73 & 0.37 & 0.19 & 0.10 \\
    32768 & 0.65 & 0.30 & 0.15 & 0.08 \\
    49152 & 0.54 & 0.23 & 0.13 & 0.07 \\
    65536 & 0.49 & 0.19 & 0.11 & 0.07 \\
    \bottomrule
    \end{tabular}
\end{table}

\clearpage

\section{Details on Section \ref{sec:bound_comparision}}\label{app:bound_comparison}

In \cref{sec:bound_comparision}, we compare the results of our empirical models of vulnerability (\cref{sec:model-to-predict-vulnerability,sec:individual_mia}) to DP bounds. Below we explain how we make the connection between both.

First, we compute the upper bound on the $\tpr$ for a given $(\epsilon, \delta)$-DP privacy budget at a given $\fpr$ using \cref{thm:Kairouz_bound} reformulated from  \citet{Kairouz2015Composition} below:

\begin{theorem}[\citet{Kairouz2015Composition}]
A mechanism $\mathcal{M}: \mathcal{X} \to \mathcal{Y}$ is $(\epsilon, \delta)$-DP if and only if for all adjacent $\mathcal{D}\sim\mathcal{D}'$
\begin{equation}
\label{eq:tpr_bound}
    \tpr \leq \min\{ e^\epsilon \fpr + \delta,
    1 - e^{-\epsilon} (1 - \delta - \fpr) \} \ .
\end{equation}
\label{thm:Kairouz_bound}
\end{theorem}

For a given $(\epsilon,\delta)$ and $\fpr$ we then obtain a value for the $\tpr$.

Next, we use the linear model from \cref{sec:model-to-predict-vulnerability} to solve for the minimum $S$ predicted to be required given $C=2$ classes in our example. The coefficients can be found in \cref{tab:glm-coefs-diff} for the average case and \cref{sec:individual_mia} for the worst-case. We solve the $\tpr$ from the linear model as
\begin{align}
    &\log_{10}(\tpr-\fpr) = \beta_{\shots}\log_{10}(\shots)+\beta_{\classes}\log_{10}(\classes)+\beta_{0}\\
    \Leftrightarrow
    &\tpr = \shots ^ {\beta_{\shots}} \classes ^ {\beta_{\classes}} 10^{\beta_{0}} + \fpr \label{eq:linear-model-tpr-avg}.
\end{align}

Now, we find the minimum $\shots$ that the $\tpr$ from \cref{eq:linear-model-tpr-avg} upper bounds the $\tpr$ of \cref{eq:tpr_bound} as
\begin{align}
    &\shots ^ {\beta_{\shots}} \classes ^ {\beta_{\classes}} 10^{\beta_{0}} + \fpr 
    = \min\{ e^\epsilon \fpr + \delta, 1 - e^{-\epsilon} (1 - \delta - \fpr) \} \\
    \Rightarrow
    & S
    = \left( \frac{\min\{ e^\epsilon \fpr + \delta, 1 - e^{-\epsilon} (1 - \delta - \fpr) \} - \fpr}{\classes^{\beta_{\classes}}10^{\beta_0}} \right) ^{1/{\beta_\shots}}
\end{align}

\clearpage

\section*{NeurIPS Paper Checklist}

\begin{enumerate}

\item {\bf Claims}
    \item[] Question: Do the main claims made in the abstract and introduction accurately reflect the paper's contributions and scope?
    \item[] Answer: \answerYes{} 
    \item[] Justification: The main claims that are summarized at the end of \cref{sec:introduction} accuracly reflect the contributions in \cref{sec:explaining_trends,sec:predicting_dataset}.
    \item[] Guidelines:
    \begin{itemize}
        \item The answer NA means that the abstract and introduction do not include the claims made in the paper.
        \item The abstract and/or introduction should clearly state the claims made, including the contributions made in the paper and important assumptions and limitations. A No or NA answer to this question will not be perceived well by the reviewers. 
        \item The claims made should match theoretical and experimental results, and reflect how much the results can be expected to generalize to other settings. 
        \item It is fine to include aspirational goals as motivation as long as it is clear that these goals are not attained by the paper. 
    \end{itemize}

\item {\bf Limitations}
    \item[] Question: Does the paper discuss the limitations of the work performed by the authors?
    \item[] Answer: \answerYes{} 
    \item[] Justification: The limiations are discussed in \cref{sec:discussion}.
    \item[] Guidelines:
    \begin{itemize}
        \item The answer NA means that the paper has no limitation while the answer No means that the paper has limitations, but those are not discussed in the paper. 
        \item The authors are encouraged to create a separate "Limitations" section in their paper.
        \item The paper should point out any strong assumptions and how robust the results are to violations of these assumptions (e.g., independence assumptions, noiseless settings, model well-specification, asymptotic approximations only holding locally). The authors should reflect on how these assumptions might be violated in practice and what the implications would be.
        \item The authors should reflect on the scope of the claims made, e.g., if the approach was only tested on a few datasets or with a few runs. In general, empirical results often depend on implicit assumptions, which should be articulated.
        \item The authors should reflect on the factors that influence the performance of the approach. For example, a facial recognition algorithm may perform poorly when image resolution is low or images are taken in low lighting. Or a speech-to-text system might not be used reliably to provide closed captions for online lectures because it fails to handle technical jargon.
        \item The authors should discuss the computational efficiency of the proposed algorithms and how they scale with dataset size.
        \item If applicable, the authors should discuss possible limitations of their approach to address problems of privacy and fairness.
        \item While the authors might fear that complete honesty about limitations might be used by reviewers as grounds for rejection, a worse outcome might be that reviewers discover limitations that aren't acknowledged in the paper. The authors should use their best judgment and recognize that individual actions in favor of transparency play an important role in developing norms that preserve the integrity of the community. Reviewers will be specifically instructed to not penalize honesty concerning limitations.
    \end{itemize}

\item {\bf Theory assumptions and proofs}
    \item[] Question: For each theoretical result, does the paper provide the full set of assumptions and a complete (and correct) proof?
    \item[] Answer: \answerYes{} 
    \item[] Justification: The details for the proofs in \cref{sec:explaining_trends} are in \cref{sec:theory_details}.
    \item[] Guidelines:
    \begin{itemize}
        \item The answer NA means that the paper does not include theoretical results. 
        \item All the theorems, formulas, and proofs in the paper should be numbered and cross-referenced.
        \item All assumptions should be clearly stated or referenced in the statement of any theorems.
        \item The proofs can either appear in the main paper or the supplemental material, but if they appear in the supplemental material, the authors are encouraged to provide a short proof sketch to provide intuition. 
        \item Inversely, any informal proof provided in the core of the paper should be complemented by formal proofs provided in appendix or supplemental material.
        \item Theorems and Lemmas that the proof relies upon should be properly referenced. 
    \end{itemize}

    \item {\bf Experimental result reproducibility}
    \item[] Question: Does the paper fully disclose all the information needed to reproduce the main experimental results of the paper to the extent that it affects the main claims and/or conclusions of the paper (regardless of whether the code and data are provided or not)?
    \item[] Answer: \answerYes{} 
    \item[] Justification: Experimental details are in \cref{sec:training_details} and the documented source code is in the supplementary material.
    \item[] Guidelines:
    \begin{itemize}
        \item The answer NA means that the paper does not include experiments.
        \item If the paper includes experiments, a No answer to this question will not be perceived well by the reviewers: Making the paper reproducible is important, regardless of whether the code and data are provided or not.
        \item If the contribution is a dataset and/or model, the authors should describe the steps taken to make their results reproducible or verifiable. 
        \item Depending on the contribution, reproducibility can be accomplished in various ways. For example, if the contribution is a novel architecture, describing the architecture fully might suffice, or if the contribution is a specific model and empirical evaluation, it may be necessary to either make it possible for others to replicate the model with the same dataset, or provide access to the model. In general. releasing code and data is often one good way to accomplish this, but reproducibility can also be provided via detailed instructions for how to replicate the results, access to a hosted model (e.g., in the case of a large language model), releasing of a model checkpoint, or other means that are appropriate to the research performed.
        \item While NeurIPS does not require releasing code, the conference does require all submissions to provide some reasonable avenue for reproducibility, which may depend on the nature of the contribution. For example
        \begin{enumerate}
            \item If the contribution is primarily a new algorithm, the paper should make it clear how to reproduce that algorithm.
            \item If the contribution is primarily a new model architecture, the paper should describe the architecture clearly and fully.
            \item If the contribution is a new model (e.g., a large language model), then there should either be a way to access this model for reproducing the results or a way to reproduce the model (e.g., with an open-source dataset or instructions for how to construct the dataset).
            \item We recognize that reproducibility may be tricky in some cases, in which case authors are welcome to describe the particular way they provide for reproducibility. In the case of closed-source models, it may be that access to the model is limited in some way (e.g., to registered users), but it should be possible for other researchers to have some path to reproducing or verifying the results.
        \end{enumerate}
    \end{itemize}

\item {\bf Open access to data and code}
    \item[] Question: Does the paper provide open access to the data and code, with sufficient instructions to faithfully reproduce the main experimental results, as described in supplemental material?
    \item[] Answer: \answerYes{} 
    \item[] Justification: The source code including instructions can be found in the supplementary material.
    \item[] Guidelines:
    \begin{itemize}
        \item The answer NA means that paper does not include experiments requiring code.
        \item Please see the NeurIPS code and data submission guidelines (\url{https://nips.cc/public/guides/CodeSubmissionPolicy}) for more details.
        \item While we encourage the release of code and data, we understand that this might not be possible, so “No” is an acceptable answer. Papers cannot be rejected simply for not including code, unless this is central to the contribution (e.g., for a new open-source benchmark).
        \item The instructions should contain the exact command and environment needed to run to reproduce the results. See the NeurIPS code and data submission guidelines (\url{https://nips.cc/public/guides/CodeSubmissionPolicy}) for more details.
        \item The authors should provide instructions on data access and preparation, including how to access the raw data, preprocessed data, intermediate data, and generated data, etc.
        \item The authors should provide scripts to reproduce all experimental results for the new proposed method and baselines. If only a subset of experiments are reproducible, they should state which ones are omitted from the script and why.
        \item At submission time, to preserve anonymity, the authors should release anonymized versions (if applicable).
        \item Providing as much information as possible in supplemental material (appended to the paper) is recommended, but including URLs to data and code is permitted.
    \end{itemize}

\item {\bf Experimental setting/details}
    \item[] Question: Does the paper specify all the training and test details (e.g., data splits, hyperparameters, how they were chosen, type of optimizer, etc.) necessary to understand the results?
    \item[] Answer: \answerYes{}{} 
    \item[] Justification: We provide details in \cref{sec:training_details} and the remaining details can be found in the documented source code.
    \item[] Guidelines:
    \begin{itemize}
        \item The answer NA means that the paper does not include experiments.
        \item The experimental setting should be presented in the core of the paper to a level of detail that is necessary to appreciate the results and make sense of them.
        \item The full details can be provided either with the code, in appendix, or as supplemental material.
    \end{itemize}

\item {\bf Experiment statistical significance}
    \item[] Question: Does the paper report error bars suitably and correctly defined or other appropriate information about the statistical significance of the experiments?
    \item[] Answer: \answerYes{}{} 
    \item[] Justification: In \cref{sec:predicting_dataset} we provide results over at least six seeds and plot the median as well as errorbars. Many errorbars are based on Clopper-Pearson intervals.
    \item[] Guidelines:
    \begin{itemize}
        \item The answer NA means that the paper does not include experiments.
        \item The authors should answer "Yes" if the results are accompanied by error bars, confidence intervals, or statistical significance tests, at least for the experiments that support the main claims of the paper.
        \item The factors of variability that the error bars are capturing should be clearly stated (for example, train/test split, initialization, random drawing of some parameter, or overall run with given experimental conditions).
        \item The method for calculating the error bars should be explained (closed form formula, call to a library function, bootstrap, etc.)
        \item The assumptions made should be given (e.g., Normally distributed errors).
        \item It should be clear whether the error bar is the standard deviation or the standard error of the mean.
        \item It is OK to report 1-sigma error bars, but one should state it. The authors should preferably report a 2-sigma error bar than state that they have a 96\% CI, if the hypothesis of Normality of errors is not verified.
        \item For asymmetric distributions, the authors should be careful not to show in tables or figures symmetric error bars that would yield results that are out of range (e.g. negative error rates).
        \item If error bars are reported in tables or plots, The authors should explain in the text how they were calculated and reference the corresponding figures or tables in the text.
    \end{itemize}

\item {\bf Experiments compute resources}
    \item[] Question: For each experiment, does the paper provide sufficient information on the computer resources (type of compute workers, memory, time of execution) needed to reproduce the experiments?
    \item[] Answer: \answerYes{}
    \item[] Justification: Details can be found in \cref{sec:compute_resources}.
    \item[] Guidelines:
    \begin{itemize}
        \item The answer NA means that the paper does not include experiments.
        \item The paper should indicate the type of compute workers CPU or GPU, internal cluster, or cloud provider, including relevant memory and storage.
        \item The paper should provide the amount of compute required for each of the individual experimental runs as well as estimate the total compute. 
        \item The paper should disclose whether the full research project required more compute than the experiments reported in the paper (e.g., preliminary or failed experiments that didn't make it into the paper). 
    \end{itemize}
    
\item {\bf Code of ethics}
    \item[] Question: Does the research conducted in the paper conform, in every respect, with the NeurIPS Code of Ethics \url{https://neurips.cc/public/EthicsGuidelines}?
    \item[] Answer: \answerYes{} 
    \item[] Justification: The research conforms with the NeurIPS Code of Ethics.
    \item[] Guidelines:
    \begin{itemize}
        \item The answer NA means that the authors have not reviewed the NeurIPS Code of Ethics.
        \item If the authors answer No, they should explain the special circumstances that require a deviation from the Code of Ethics.
        \item The authors should make sure to preserve anonymity (e.g., if there is a special consideration due to laws or regulations in their jurisdiction).
    \end{itemize}

\item {\bf Broader impacts}
    \item[] Question: Does the paper discuss both potential positive societal impacts and negative societal impacts of the work performed?
    \item[] Answer: \answerYes{} 
    \item[] Justification: We discuss the broader impacts in \cref{sec:discussion}.
    \item[] Guidelines:
    \begin{itemize}
        \item The answer NA means that there is no societal impact of the work performed.
        \item If the authors answer NA or No, they should explain why their work has no societal impact or why the paper does not address societal impact.
        \item Examples of negative societal impacts include potential malicious or unintended uses (e.g., disinformation, generating fake profiles, surveillance), fairness considerations (e.g., deployment of technologies that could make decisions that unfairly impact specific groups), privacy considerations, and security considerations.
        \item The conference expects that many papers will be foundational research and not tied to particular applications, let alone deployments. However, if there is a direct path to any negative applications, the authors should point it out. For example, it is legitimate to point out that an improvement in the quality of generative models could be used to generate deepfakes for disinformation. On the other hand, it is not needed to point out that a generic algorithm for optimizing neural networks could enable people to train models that generate Deepfakes faster.
        \item The authors should consider possible harms that could arise when the technology is being used as intended and functioning correctly, harms that could arise when the technology is being used as intended but gives incorrect results, and harms following from (intentional or unintentional) misuse of the technology.
        \item If there are negative societal impacts, the authors could also discuss possible mitigation strategies (e.g., gated release of models, providing defenses in addition to attacks, mechanisms for monitoring misuse, mechanisms to monitor how a system learns from feedback over time, improving the efficiency and accessibility of ML).
    \end{itemize}
    
\item {\bf Safeguards}
    \item[] Question: Does the paper describe safeguards that have been put in place for responsible release of data or models that have a high risk for misuse (e.g., pretrained language models, image generators, or scraped datasets)?
    \item[] Answer: \answerNA{} 
    \item[] Justification: The paper does not pose such risks.
    \item[] Guidelines:
    \begin{itemize}
        \item The answer NA means that the paper poses no such risks.
        \item Released models that have a high risk for misuse or dual-use should be released with necessary safeguards to allow for controlled use of the model, for example by requiring that users adhere to usage guidelines or restrictions to access the model or implementing safety filters. 
        \item Datasets that have been scraped from the Internet could pose safety risks. The authors should describe how they avoided releasing unsafe images.
        \item We recognize that providing effective safeguards is challenging, and many papers do not require this, but we encourage authors to take this into account and make a best faith effort.
    \end{itemize}

\item {\bf Licenses for existing assets}
    \item[] Question: Are the creators or original owners of assets (e.g., code, data, models), used in the paper, properly credited and are the license and terms of use explicitly mentioned and properly respected?
    \item[] Answer: \answerYes{}{} 
    \item[] Justification: We cite the original owners and mention the licenses in \cref{sec:dataset_licenses,sec:checkpoint_licenses}.
    \item[] Guidelines:
    \begin{itemize}
        \item The answer NA means that the paper does not use existing assets.
        \item The authors should cite the original paper that produced the code package or dataset.
        \item The authors should state which version of the asset is used and, if possible, include a URL.
        \item The name of the license (e.g., CC-BY 4.0) should be included for each asset.
        \item For scraped data from a particular source (e.g., website), the copyright and terms of service of that source should be provided.
        \item If assets are released, the license, copyright information, and terms of use in the package should be provided. For popular datasets, \url{paperswithcode.com/datasets} has curated licenses for some datasets. Their licensing guide can help determine the license of a dataset.
        \item For existing datasets that are re-packaged, both the original license and the license of the derived asset (if it has changed) should be provided.
        \item If this information is not available online, the authors are encouraged to reach out to the asset's creators.
    \end{itemize}

\item {\bf New assets}
    \item[] Question: Are new assets introduced in the paper well documented and is the documentation provided alongside the assets?
    \item[] Answer: \answerYes{} 
    \item[] Justification: The documentation for the code is with the code in the supplementary material.
    \item[] Guidelines:
    \begin{itemize}
        \item The answer NA means that the paper does not release new assets.
        \item Researchers should communicate the details of the dataset/code/model as part of their submissions via structured templates. This includes details about training, license, limitations, etc. 
        \item The paper should discuss whether and how consent was obtained from people whose asset is used.
        \item At submission time, remember to anonymize your assets (if applicable). You can either create an anonymized URL or include an anonymized zip file.
    \end{itemize}

\item {\bf Crowdsourcing and research with human subjects}
    \item[] Question: For crowdsourcing experiments and research with human subjects, does the paper include the full text of instructions given to participants and screenshots, if applicable, as well as details about compensation (if any)? 
    \item[] Answer: \answerNA{} 
    \item[] Justification: The paper does not involve crowdsourcing nor research with human subjects.
    \item[] Guidelines:
    \begin{itemize}
        \item The answer NA means that the paper does not involve crowdsourcing nor research with human subjects.
        \item Including this information in the supplemental material is fine, but if the main contribution of the paper involves human subjects, then as much detail as possible should be included in the main paper. 
        \item According to the NeurIPS Code of Ethics, workers involved in data collection, curation, or other labor should be paid at least the minimum wage in the country of the data collector. 
    \end{itemize}

\item {\bf Institutional review board (IRB) approvals or equivalent for research with human subjects}
    \item[] Question: Does the paper describe potential risks incurred by study participants, whether such risks were disclosed to the subjects, and whether Institutional Review Board (IRB) approvals (or an equivalent approval/review based on the requirements of your country or institution) were obtained?
    \item[] Answer: \answerNA{} 
    \item[] Justification: The paper does not involve crowdsourcing nor research with human subjects.
    \item[] Guidelines:
    \begin{itemize}
        \item The answer NA means that the paper does not involve crowdsourcing nor research with human subjects.
        \item Depending on the country in which research is conducted, IRB approval (or equivalent) may be required for any human subjects research. If you obtained IRB approval, you should clearly state this in the paper. 
        \item We recognize that the procedures for this may vary significantly between institutions and locations, and we expect authors to adhere to the NeurIPS Code of Ethics and the guidelines for their institution. 
        \item For initial submissions, do not include any information that would break anonymity (if applicable), such as the institution conducting the review.
    \end{itemize}

\item {\bf Declaration of LLM usage}
    \item[] Question: Does the paper describe the usage of LLMs if it is an important, original, or non-standard component of the core methods in this research? Note that if the LLM is used only for writing, editing, or formatting purposes and does not impact the core methodology, scientific rigorousness, or originality of the research, declaration is not required.
    \item[] Answer: \answerNA{}{} 
    \item[] Justification: The core method development in this research does not involve LLMs as any important, original, or non-standard components
    \item[] Guidelines:
    \begin{itemize}
        \item The answer NA means that the core method development in this research does not involve LLMs as any important, original, or non-standard components.
        \item Please refer to our LLM policy (\url{https://neurips.cc/Conferences/2025/LLM}) for what should or should not be described.
    \end{itemize}

\end{enumerate}

\end{document}